\DeclareMathOperator*{\Card}{Card}
\numberwithin{equation}{section}
\newtheorem{theorem}{Theorem}
\newtheorem{assumption}{Assumption}
\newtheorem{lemma}{Lemma}
\theoremstyle{definition}
\newtheorem{remark}{Remark}
\renewenvironment{abstract}
 {\small
  \begin{center}
  \bfseries \abstractname\vspace{-.5em}\vspace{0pt}
  \end{center}
  \list{}{%
    \setlength{\leftmargin}{5mm}
    \setlength{\rightmargin}{\leftmargin}%
    \listparindent 1.5em
    \itemindent    \listparindent
  }%
  \item\relax}
 {\endlist}
\newcommand{\lsim}{\mathrel{\hbox{\rlap{\lower.75ex \hbox{$\sim$}} \kern-.3em \raise.4ex \hbox{$<$}}}}
\begin{document}

\title{A Consistent LM Type Specification Test \\ for Semiparametric Panel Data Models\thanks{I thank seminar participants at several universities and conferences for helpful comments. All remaining errors are mine.}}
\author{Ivan Korolev\thanks{Department of Economics, Binghamton University. E-mail: \mbox{\href{mailto:ikorolev@binghamton.edu}{ikorolev@binghamton.edu}}. Website: \url{https://sites.google.com/view/ivan-korolev/home}}}
\date{September 12, 2019}

\maketitle

\begin{abstract}
This paper develops a consistent series-based specification test for semiparametric panel data models with fixed effects. The test statistic resembles the Lagrange Multiplier (LM) test statistic in parametric models and is based on a quadratic form in the restricted model residuals. The use of series methods facilitates both estimation of the null model and computation of the test statistic. The asymptotic distribution of the test statistic is standard normal, so that appropriate critical values can easily be computed. The projection property of series estimators allows me to develop a degrees of freedom correction. This correction makes it possible to account for the estimation variance and obtain refined asymptotic results. It also substantially improves the finite sample performance of the test.
\end{abstract}


\newpage

\doublespacing

\section{Introduction}\label{introduction}

Panel data allows researchers to better account for individual heterogeneity and estimate richer models than cross-sectional data. Traditionally, the literature on panel data models focused on fully parametric models. Popular textbooks written by \citet{arellano_2003}, \citet{hsiao_2003}, and \citet{baltagi_2013} give excellent overviews of such models. However, parametric models may be misspecified, and more flexible panel data models may be needed.

Semiparametric models, such as partially linear or varying coefficient models, serve as an attractive alternative to fully parametric models. While being more flexible than parametric models, they are more tractable than fully nonparametric models and alleviate the curse of dimensionality. \citet{ai_li_2008}, \citet{su_ullah_2011}, \citet{rodriguez-poo_soberon_2017}, and \citet{parmeter_racine_2018} provide excellent surveys of recent developments in semiparametric panel data models.

While there is a growing literature on estimation of semiparametric panel data models, the literature on specification testing in this setting remains scarce. There are three possible reasons for this. First, in the presence of fixed effects, one needs to transform data to eliminate them before the model can be estimated. Estimating the transformed model in itself can be challenging when kernel methods are used. 

Second, the asymptotic theory for consistent specification tests in semiparametric panel data models can be challenging. For instance, \citet{henderson_et_al_2008} propose kernel-based specification tests both for parametric and semiparametric fixed effects panel data models and suggest using the bootstrap to obtain critical values, but they do not derive asymptotic properties of their tests. In turn, \citet{lin_et_al_2014} develop an asymptotic theory for kernel-based specification tests for panel data models with fixed effects, but they only consider parametric models.

Third, it has long been known in the literature on consistent specification tests that asymptotic approximations often do not work well in finite samples even with cross-sectional data (see, e.g., \citet{li_wang_1998}). The bootstrap is typically used to improve the finite sample performance of consistent specification tests, but it may be computationally costly.

In this paper I rely on the results on series estimation of fixed effects panel data models from \citet{baltagi_li_2002} and \citet{an_et_al_2016} and develop a consistent Lagrange Multiplier (LM) type specification test for semiparametric panel data models. My test overcomes all three challenges described above.

First, the use of series methods leads to a model that is linear in parameters. As a result, transforming the data, e.g. applying the within transformation or taking first differences, to eliminate fixed effects is straightforward. Thus, the test is simple to implement.

Second, as in cross-sectional models in \citet{korolev_2019}, the projection property of series estimators allows me to develop a degrees of freedom correction. Intuitively, when series methods are used, the restricted residuals are orthogonal to the series terms included in the restricted model. This means that even under the alternative, only a subset of moment conditions, rather than all of them, can be violated, which in turn affects the normalization of the test statistic. This degrees of freedom correction has two important consequences. 

From the theoretical point of view, it leads to a tractable asymptotic theory for the test and allows me to obtain refined asymptotic results. In my asymptotic analysis, I decompose the test statistic into the leading term and the remainder. By relying on the projection nature of series estimators, I can directly account for the estimation variance, so that only bias enters the remainder term. Because of this, I only need to control the rate at which bias goes to zero to bound the remainder term, while variance can remain large. As a result, I can derive the asymptotic distribution of the test statistic under fairly weak rate conditions.

From the practical point of view, the degrees of freedom correction substantially improves the finite sample performance of the test. While I propose a wild bootstrap procedure and establish its asymptotic validity, I show using simulations that the asymptotic version of the proposed test with the degrees of freedom correction performs almost as well as its wild bootstrap version. Hence, the degrees of freedom correction serves as a computationally attractive analytical way to obtain a test with good small sample behavior.

The remainder of the paper is organized as follows. Section~\ref{model_test} introduces the model and describes how to construct the series-based specification test for semiparametric fixed effects models. Section~\ref{asymptotics} develops the asymptotic theory for the proposed test. Section~\ref{simulations} studies the behavior of the proposed test in simulations. Section~\ref{empirical_example} applies my test to the data from \citet{cornwell_rupert_1988} and \citet{baltagi_khanti-akom_1990}. Section~\ref{conclusion} concludes.

\ref{appendix_tables_figures} collects all tables and figures. \ref{appendix_proofs} contains proofs of my results.

\section{The Model and Proposed Test}\label{model_test}

I consider a general nonparametric panel data model with fixed effects:
\begin{align}\label{np_model}
Y_{it} = g(X_{it}) + u_{it} = g(X_{it}) + \mu_i + \varepsilon_{it}, \quad E[\varepsilon_{it} | X_{i}, \mu_i ] = 0,
\end{align}
where $X_i = (X_{i1},...,X_{iT})'$, $t = 1, ..., T$, and $i = 1, ..., n$. $\mu_i$ denotes the fixed effect, which captures unobserved heterogeneity and may be correlated with the regressors $X_i$. In my asymptotic analysis, I will assume that $T$ is fixed while $n$ grows to infinity.

The goal of this paper is to test that the true model is semiparametric, i.e. that
\begin{align}\label{generic_h0}
H_0^{SP}: P_{X} \left(g(X_{it}) = f(X_{it}, \theta_0, h_0) \right) = 1 \text{ for some } \theta_0 \in \Theta, h_0 \in \mathcal{H},
\end{align}
where $f: \mathcal{X} \times \Theta \times \mathcal{H} \to \mathbb{R}$ is a known function, $\theta \in \Theta \subset \mathbb{R}^{d}$ is a finite-dimensional parameter, and $h \in \mathcal{H} = \mathcal{H}_1 \times ... \times \mathcal{H}_q$ is a vector of unknown functions. For instance, if the semiparametric model is partially linear, then $f(X_{it},\theta,h) = X_{1it}' \theta + h(X_{2it})$, where $X_{it} = (X_{1it}',X_{2it}')'$. Many other semiparametric models can also be written in this form.

The global alternative is
\begin{align}\label{generic_h1}
H_1: P_{X} \left(g(X_{it}) \neq f(X_{it}, \theta, h) \right) > 0 \text{ for all } \theta \in \Theta, h \in \mathcal{H}
\end{align}

\subsection{Series Estimators}

As in \citet{korolev_2019}, I use series methods to replace unknown functions with their finite series expansions. Namely, for any variable $z$, let $Q^{a_n}(z) = (q_1(z), ..., q_{a_n}(z))'$ be an $a_n$-dimensional vector of approximating functions of $z$, where the number of series terms $a_n$ is allowed to grow with the sample size $n$. Then an unknown function $g(z)$ can be approximated as $g(z) \approx \sum_{j=1}^{a_n}{q_j(z) \gamma_j} = Q^{a_n}(z)' \gamma$. I replace all unknown functions in $f(X_{it}, \theta, h)$ with their finite series expansions and write the semiparametric model in a series form as
\begin{align}\label{sp_model_series}
Y_{it} = W_{it}' \beta_1 + R_{it} + \mu_i + \varepsilon_{it},
\end{align}
where $W_{it} := W^{m_n}(X_{it}) := (W_{1}(X_{it}), ..., W_{m_n}(X_{it}))'$ are appropriate regressors or basis functions, such as power series or splines, $m_n$ is the number of parameters in the semiparametric null model, $R_{it} = f(X_{it}, \theta, h) - W_{it}' \beta_1$ is the approximation error.

\subsection{Test Statistic}

To construct a specification test, I include additional series terms, $Z_{it} := Z^{r_n}(X_{it}) := (Z_{1}(X_{it}), ..., Z_{r_n}(X_{it}))'$, that capture possible deviations from the null hypothesis:
\begin{align}\label{np_model_series}
Y_{it} = W_{it}' \beta_1 + Z_{it}' \beta_2 + R_{it} + \mu_i + \varepsilon_{it} = P_{it}' \beta + R_{it} + \mu_i + \varepsilon_{it},
\end{align}
where $P_{it} := P^{k_n}(X_{it}) := (W_{it}', Z_{it}')'$, $k_n = m_n + r_n$ is the total number of parameters, and $\beta = (\beta_1', \beta_2')'$. For instance, in the partially linear model example above, the additional series terms can include nonlinear terms in $X_{1it}$ and interactions between $X_{1it}$ and $X_{2it}$.

Due to the presence of fixed effects $\mu_i$ that may be correlated with $X_{it}$, it is problematic to estimate or test this model directly. Instead, I use first differencing or the within transformation to get rid of fixed effects. The model becomes
\[
\hat{Y}_{it} = \hat{W}_{it}' \beta_1 + \hat{Z}_{it}' \beta_2 + \hat{R}_{it} + \hat{\varepsilon}_{it} = \hat{P}_{it}' \beta + \hat{R}_{it} + \hat{\varepsilon}_{it},,
\]
where in the former case for any variable $A_{it}$, $\hat{A}_{it} = A_{it} - A_{i,t-1}$, and in the latter case $\hat{A}_{it} = A_{it} - \frac{1}{T}\sum_{s=1}^{T}{A_{is}}$. The specification test reduces to testing the hypothesis $\beta_2 = 0$.

For any variable $A_{it}$, let $A_i = (A_{i1},...,A_{iT})'$ and $A = (A_1', ..., A_n')'$. The restricted estimate of $\beta_1$ is obtained from the regression of $\hat{Y}$ on $\hat{W}$ and is given by
\[
\tilde{\beta}_1 = (\hat{W}' \hat{W})^{-1} \hat{W}' \hat{Y},
\]
and the restricted residuals are
\[
\tilde{e} = \hat{Y} - \hat{W} (\hat{W}' \hat{W})^{-1} \hat{W}' \hat{Y} = M_{\hat{W}} \hat{Y},
\]
where $M_{\hat{W}} = I - \hat{W} (\hat{W}' \hat{W})^{-1} \hat{W}'$. If the null is true, it can be shown that
\[
\tilde{e} = M_{\hat{W}} \hat{\varepsilon} + M_{\hat{W}} \hat{R}
\]

The test will be based on the moment condition $E[\hat{P}_i' \hat{\varepsilon}_i] = 0$. The sample analog of this moment condition is $\sum_{i=1}^{n}{\hat{P}_i' \tilde{e}_i}/n$. Note that $\sum_{i=1}^{n}{\hat{W}_i' \tilde{e}_i}/n = 0$, so the test is essentially based on $\sum_{i=1}^{n}{\hat{Z}_i' \tilde{e}_i}/n$.

Let $\tilde{Z} = M_{\hat{W}} \hat{Z}$. The LM type test statistic is given by:
\begin{align}\label{xi_hc}
\xi_{HC} &= \left( \sum_{i=1}^{n}{\tilde{e}_i' \tilde{Z}_i} \right) \left( \sum_{i=1}^{n}{\tilde{Z}_i' \tilde{e}_i \tilde{e}_i' \tilde{Z}_i} \right)^{-1} \left( \sum_{i=1}^{n}{\tilde{Z}_i' \tilde{e}_i} \right) 
\end{align}

Alternatively, in the homoskedastic case, it can be simplified as follows:
\begin{align}\label{xi}
\xi = \left( \sum_{i=1}^{n}{\tilde{e}_i' \tilde{Z}_i} \right) \left( \sum_{i=1}^{n}{\tilde{Z}_i' \tilde{\Sigma}_T \tilde{Z}_i} \right)^{-1} \left( \sum_{i=1}^{n}{\tilde{Z}_i' \tilde{e}_i} \right),
\end{align}
where $\tilde{\Sigma}_T = \frac{1}{n} \sum_{i=1}^{n}{\tilde{e}_i \tilde{e}_i'}$.

These two test statistics resemble the parametric LM test statistic. However, the number of restrictions $r_n$ is allowed to grow to infinity. Thus, in order to obtain convergence in distribution, a normalization is needed. The normalized test statistics are given by
\begin{align}\label{t_test_statistics}
t_{HC} = \frac{\xi_{HC} - r_n}{\sqrt{2 r_n}} \quad \text{and} \quad t = \frac{\xi - r_n}{\sqrt{2 r_n}}
\end{align}

I will show in the next section that under appropriate conditions, the normalized test statistics are asymptotically standard normal. 

\section{Asymptotic Theory}\label{asymptotics}

In this section, I develop the asymptotic theory for the proposed specification test. I analyze its behavior under the null hypothesis and under a fixed alternative.

\subsection{Behavior of the Test Statistic under $H_0$}

This section derives the asymptotic distribution of the test statistic when the semiparametric model is correctly specified. I start with my assumptions. First, I impose some regularity conditions on the data generating process.

\begin{assumption}\label{dgp}

$(Y_{it}, X_{it}')' \in \mathbb{R}^{1+d_x}, d_x \in \mathbb{N}, i = 1, ..., n$ are independent across individuals, i.e. $(Y_i', X_i')'$ are i.i.d. random draws of the random variables $(Y_1', X_1')'$, and the support of $X_1$, $\mathcal{X}$, is a compact subset of $\mathbb{R}^{d_x}$.

\end{assumption}

\begin{assumption}\label{errors_unified}

Let $\varepsilon_i = Y_i - E[Y_i | X_i]$. The following two conditions hold:

\begin{enumerate}[(a)]

\item

$\Sigma(x) = E[\varepsilon_{i} \varepsilon_{i}' | X_i = x]$ is bounded.

\item

$E[\varepsilon_{it}^4 | X_i]$ is bounded.

\end{enumerate}

\end{assumption}

The following assumption deals with the behavior of the approximating series functions. From now on, let $\| A \| = [tr(A'A)]^{1/2}$ be the Euclidian norm of a matrix $A$. Let $x \in R^{d_x}$ be a realization of the random variable $X_{it}$.

\begin{assumption}\label{series_norms_eigenvalues}

For each $m$, $r$, and $k$ there are matrices $B_1$, and $B_2$ such that, for $\bar{W}^{m}(x) = B_1 \hat{W}^{m}(x)$, $\bar{Z}^{r}(x) = B_2 \hat{Z}^{r}(x)$, and $\bar{P}^{k}(x) = (\bar{W}^{m}(x)', \bar{Z}^{r}(x)')$,

\begin{enumerate}[(a)]

\item

There exists a sequence of constants $\zeta(\cdot)$ that satisfies the conditions $\sup_{x \in \mathcal{X}} \| \bar{W}^{m}(x) \| \leq \zeta(m)$, $\sup_{x \in \mathcal{X}} \| \bar{Z}^{r}(x) \| \leq \zeta(r)$, and $\sup_{x \in \mathcal{X}} \| \bar{P}^{k}(x) \| \leq \zeta(k)$.

\item

The smallest eigenvalue of $E[\bar{P}^{k}(X_{it}) \bar{P}^{k}(X_{it})']$ is bounded away from zero uniformly in $k$.

\end{enumerate}

\end{assumption}

\begin{assumption}\label{series_approx}
Suppose that $H_0$ holds. There exist $\alpha > 0$ and $\beta_1 \in \mathbb{R}^{m_n}$ such that
\[
\sup_{x \in \mathcal{X}}{| f(x,\theta_0,h_0) - W^{m_n}(x)' \beta_1 |} = O(m_n^{-\alpha})
\]
\end{assumption}

$\beta_1$ in this assumption can be defined in various ways. One natural definition is projection: $\beta_1 = E[\hat{W}_{it} \hat{W}_{it}']^{-1} E[\hat{W}_{it} \hat{f}(X_{i},\theta_0,h_0)]$, where $\hat{f}(\cdot)$ is an appropriate transformation of $f(\cdot)$.

\begin{theorem}\label{asy_distr_t_r_n_hc}
Assume that Assumptions \ref{dgp}, \ref{errors_unified}, \ref{series_norms_eigenvalues}, and \ref{series_approx} are satisfied, and the following rate conditions hold:
\begin{align}
\label{rate_cond_r_1_hc} (m_n/n + m_n^{-2\alpha}) \zeta(r_n)^2 r_n^{1/2} &\to 0 \\
\label{rate_cond_r_2_hc} \zeta(r_n) r_n / n^{1/2} &\to 0 \\
\label{rate_cond_r_3_hc} \zeta(k_n) m_n^{1/2} k_n^{1/2}/n^{1/2} &\to 0 \\
\label{rate_cond_r_4_hc} n m_n^{-2\alpha}/ r_n^{1/2} &\to 0 \\
\label{rate_cond_r_5_hc} \zeta(r_n)^2/n^{1/2} &\to 0
\end{align}

Also assume that $\| \hat{\Omega} - \tilde{\Omega} \| = o_p(r_n^{-1/2})$, where
\[
\tilde{\Omega} = n^{-1} \sum_{i=1}^{n}{\hat{Z}_i' \tilde{e}_i \tilde{e}_i' \hat{Z}_i} \text{ and } \hat{\Omega} = n^{-1}\sum_{i=1}^{n}{\tilde{Z}_i' \tilde{e}_i \tilde{e}_i' \tilde{Z}_i}.
\]
Then under $H_0$
\begin{align}\label{eqn_t_r_n_HC1}
t_{HC} = \frac{\xi_{HC} - r_n}{\sqrt{2 r_n}} \overset{d}{\to} N(0,1),
\end{align}
where $\xi_{HC}$ is as in Equation~\ref{xi_hc}.

If, in addition to the assumptions above, $\Sigma(x) \equiv \Sigma$ and $\| \hat{\Omega} - \tilde{\Omega} \| = o_p(r_n^{-1/2})$, where
\[
\tilde{\Omega} = n^{-1} \sum_{i=1}^{n}{\hat{Z}_i' \tilde{\Sigma}_T \hat{Z}_i} \text{ and } \hat{\Omega} = n^{-1} \sum_{i=1}^{n}{\tilde{Z}_i' \tilde{\Sigma}_T \tilde{Z}_i},
\]
then
\[
t = \frac{\xi - r_n}{\sqrt{2 r_n}} \overset{d}{\to} N(0,1),
\]
where $\xi$ is as in Equation~\ref{xi}.
\end{theorem}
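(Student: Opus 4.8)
The plan is to reduce $\xi_{HC}$ to a quadratic form $\hat\varepsilon' A_n \hat\varepsilon$ in the (transformed) idiosyncratic errors $\hat\varepsilon$, with an asymptotically non-random weight matrix $A_n$ satisfying $E[\hat\varepsilon' A_n \hat\varepsilon \mid X] = r_n$ and $\operatorname{Var}(\hat\varepsilon' A_n \hat\varepsilon \mid X) = 2 r_n(1 + o_p(1))$, while checking that every discarded term is $o_p(r_n^{1/2})$ and hence negligible after dividing by $\sqrt{2 r_n}$; a de Jong--type central limit theorem for quadratic forms of growing dimension then yields $t_{HC} \overset{d}{\to} N(0,1)$, and the homoskedastic statement follows from the same reduction once $\tilde\Sigma_T$ is shown to estimate $\Sigma$ fast enough. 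First I would decompose the numerator. Under $H_0$, $\tilde e = M_{\hat W}\hat\varepsilon + M_{\hat W}\hat R$, and since $\tilde Z = M_{\hat W}\hat Z$ with $M_{\hat W}$ symmetric and idempotent, $\tilde Z' M_{\hat W} = \tilde Z'$, so $\sum_i \tilde Z_i'\tilde e_i = \tilde Z'\hat\varepsilon + \tilde Z'\hat R$. Plugging into \eqref{xi_hc} splits $\xi_{HC}$ into a main term built from $\hat\varepsilon$, a pure approximation-bias term built from $\hat R$, and a cross term. Using the $O(m_n^{-\alpha})$ bound on the approximation error from Assumption~\ref{series_approx}, the fact that the projections $M_{\hat W}$ and the one onto $\tilde Z$ do not inflate Euclidean norms, and the uniform eigenvalue bound of Assumption~\ref{series_norms_eigenvalues}(b) (which keeps the inverse in \eqref{xi_hc} well behaved), the bias term is $O_p(n m_n^{-2\alpha})$, and the cross term --- being conditionally mean zero given $X$ with conditional variance of the same order --- is $O_p((n m_n^{-2\alpha})^{1/2})$; rate condition \eqref{rate_cond_r_4_hc} makes both $o_p(r_n^{1/2})$.

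Next I would replace the data-dependent objects by non-stochastic ones. Writing $\xi_{HC} = n^{-1}(\sum_i \tilde Z_i'\tilde e_i)'\hat\Omega^{-1}(\sum_i \tilde Z_i'\tilde e_i)$, the maintained condition $\|\hat\Omega - \tilde\Omega\| = o_p(r_n^{-1/2})$ together with the eigenvalue bound lets me replace $\hat\Omega$ by $\tilde\Omega = n^{-1}\sum_i \hat Z_i'\tilde e_i\tilde e_i'\hat Z_i$; then standard series-estimation arguments --- matrix concentration bounds for $n^{-1}\hat W'\hat W$, $n^{-1}\hat Z'\hat Z$ and $n^{-1}\hat P'\hat P$, the discrepancy between the sample and population projection coefficients of $\hat Z$ on $\hat W$, and the behavior of $n^{-1}\sum_i \hat Z_i'(\tilde e_i\tilde e_i' - \Sigma_i)\hat Z_i$, all combined with Assumption~\ref{series_norms_eigenvalues} and rate conditions \eqref{rate_cond_r_1_hc}--\eqref{rate_cond_r_3_hc} and \eqref{rate_cond_r_5_hc} --- let me further replace $\tilde\Omega$ by the non-stochastic $\Omega = E[\dot Z_i'\Sigma_i\dot Z_i]$ and the random $\tilde Z$ by $\dot Z$, where $\dot Z_i$ is the population $L^2$-projection residual of $\hat Z_i$ on $\hat W_i$ and $\Sigma_i := E[\hat\varepsilon_i\hat\varepsilon_i' \mid X_i]$, each step costing only $o_p(r_n^{1/2})$. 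This yields $\xi_{HC} = \hat\varepsilon' A_n \hat\varepsilon + o_p(r_n^{1/2})$ with $A_n = \dot Z\,\Omega_n^{-1}\dot Z'$ and $\Omega_n = \sum_i \dot Z_i'\Sigma_i\dot Z_i$; since $\dot Z_i,\Sigma_i$ are functions of $X_i$ and the $\hat\varepsilon_i$ are independent across $i$ with $E[\hat\varepsilon_i\mid X]=0$, one gets $E[\hat\varepsilon' A_n \hat\varepsilon \mid X] = \operatorname{tr}(\Omega_n^{-1}\sum_i \dot Z_i'\Sigma_i\dot Z_i) = r_n$ exactly. This exact trace identity --- a consequence of the projection property of the series estimator --- is what justifies centering the statistic by the deterministic $r_n$ rather than by an estimated trace.

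For the limit theorem, condition on $X$: $S_n := \dot Z'\hat\varepsilon = \sum_i \dot Z_i'\hat\varepsilon_i$ is a sum of independent mean-zero $r_n$-vectors with total conditional variance $\Omega_n$, so, with $\eta_i := \Omega_n^{-1/2}\dot Z_i'\hat\varepsilon_i$, one has $\hat\varepsilon' A_n \hat\varepsilon - r_n = \sum_{i\ne j}\eta_i'\eta_j + \sum_i(\|\eta_i\|^2 - E\|\eta_i\|^2)$; the first, degenerate-$U$-statistic, piece has conditional variance $2r_n(1+o_p(1))$, and the second has variance $O_p(\max_i\|\operatorname{Var}(\eta_i\mid X)\|\cdot r_n) = o_p(r_n)$ because $\max_i\|\operatorname{Var}(\eta_i\mid X)\| = O_p(\zeta(r_n)^2/n) = o_p(1)$ by the eigenvalue bound and Assumption~\ref{errors_unified}. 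A de Jong--type CLT --- whose Lindeberg-type condition is controlled by the same maximal quantity together with rate condition \eqref{rate_cond_r_5_hc} --- then gives $(\hat\varepsilon' A_n \hat\varepsilon - r_n)/\sqrt{2 r_n}\overset{d}{\to}N(0,1)$ conditionally on $X$, hence unconditionally, which combined with the preceding steps is $t_{HC}\overset{d}{\to}N(0,1)$. For the homoskedastic case ($\Sigma(x)\equiv\Sigma$), a law of large numbers plus the negligible $\hat R$ contribution give $\tilde\Sigma_T = n^{-1}\sum_i\tilde e_i\tilde e_i'\to\Sigma$ fast enough that, with the second maintained condition $\|\hat\Omega - \tilde\Omega\| = o_p(r_n^{-1/2})$, the statistics $\xi$ and $\xi_{HC}$ differ by $o_p(r_n^{1/2})$, so $t\overset{d}{\to}N(0,1)$ follows from the same argument.

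\textbf{Main obstacle.} The delicate part is the replacement step: $\xi_{HC}$ depends on the sample both through $\tilde Z$ and through $(\hat W'\hat W)^{-1}$, which breaks the across-$i$ independence needed for the quadratic-form CLT, so one must substitute non-random objects while keeping every remainder $o_p(r_n^{1/2})$ --- this is where all five rate conditions \eqref{rate_cond_r_1_hc}--\eqref{rate_cond_r_5_hc} and the eigenvalue bound of Assumption~\ref{series_norms_eigenvalues}(b) are consumed, and where tracking the conditional-mean-zero structure of the cross terms is essential to avoid over-strong rate requirements. Once the reduction to $\hat\varepsilon' A_n \hat\varepsilon$ is in place, the limit theorem itself is standard.
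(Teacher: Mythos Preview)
Your proposal is correct and follows essentially the same route as the paper: decompose using the projection identity $\tilde e = M_{\hat W}\hat\varepsilon + M_{\hat W}\hat R$, bound the bias and cross terms via Assumption~\ref{series_approx} and rate~\eqref{rate_cond_r_4_hc}, replace sample by population quantities, and apply a degenerate-$U$-statistic CLT to the off-diagonal piece while showing the diagonal piece is $o_p(1)$. The only differences are in the technical execution: the paper uses an orthogonalization normalization (its Lemma~\ref{series_normalization}) so that $\hat Z$ already plays the role of your $\dot Z$, passes from $\tilde Z$ to $\hat Z$ by bounding the $P_{\hat W}\hat Z$ terms directly (this is where rate~\eqref{rate_cond_r_3_hc} enters), replaces $\hat\Omega$ by the unconditional population $\Omega = E[\hat Z_i'\hat\varepsilon_i\hat\varepsilon_i'\hat Z_i]$ rather than your conditional $\Omega_n$, and invokes Hall's (1984) CLT for the i.i.d.\ summands $\chi_i = \hat Z_i'\hat\varepsilon_i$ unconditionally rather than a de Jong-type result conditional on $X$.
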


The normalization I use, $r_n$, differs from the normalization used in most series-based specification tests for parametric models with cross-sectional data, which use the total number of parameters in the nonparametric model $k_n$ (see equations (2.1) and (2.2) in \citet{hong_white_1995} and Lemma 6.2 in \citet{donald_et_al_2003}). This difference can be viewed as a degrees of freedom correction.

The fact that I am dealing with semiparametric, as opposed to parametric, models requires me to modify the key step of my proof, going from the transformed semiparametric regression residuals $\tilde{e}$ to the transformed true errors $\hat{\varepsilon}$. My approach relies on the projection property of series estimators to eliminate the estimation variance and hence only needs to deal with the approximation bias. Specifically, it uses the equality $\tilde{e} = M_W \hat{\varepsilon} + M_W \hat{R}$, applies a central limit theorem for $U$-statistics to the quadratic form in $M_W \hat{\varepsilon}$, and bounds the remainder terms by requiring the approximation error $R$ to be small.

The conventional approach does not impose any special structure on the model residuals and uses the equality $\tilde{e} = \hat{\varepsilon} + (\hat{g} - \tilde{g})$. In parametric models, $\hat{g} - \tilde{g} = \hat{X}' (\beta - \hat{\beta})$, and $\hat{\beta}$ is $\sqrt{n}$-consistent. This makes it possible to apply a central limit theorem for $U$-statistics to the quadratic form in $\hat{\varepsilon}$ and bound the remainder terms that depend on $\hat{X}' (\beta - \hat{\beta})$. However, in semiparametric models this approach needs to deal with both the bias and variance of semiparametric estimators. Specifically, $\hat{g} - \tilde{g} = \hat{R} + \hat{W}'(\beta_1 - \tilde{\beta}_1)$, where $\hat{R}$ can be viewed as the bias term and $\hat{W}'(\beta_1 - \tilde{\beta}_1)$ as the variance term. Thus, in order for $(\hat{g} - \tilde{g})$ to be small, both bias and variance need to vanish sufficiently fast, and the resulting rate conditions turn out to be very restrictive. To see this, it is useful to look at the rates that would be permissible with and without the degrees of freedom correction.

Usually $\zeta(k) = O(k^{1/2})$ for splines and $\zeta(k) = O(k)$ for power series. It can be shown that if splines are used, the rates $k_n = O(n^{2/7})$, $r_n = O(n^{2/7})$, $m_n = O(n^{1/4})$ are permissible if $\alpha \geq 4$. If power series are used, the rates $k_n = O(n^{2/9})$, $r_n = O(n^{2/9})$, $m_n = O(n^{1/5})$ are permissible if $\alpha \geq 5$.

Without the degrees of freedom correction, in order for the test to be asymptotically valid, $m_n$ typically has to be of the order $o(k_n^{1/2})$. Hence, $k_n = O(n^{2/7})$ would require $m_n = o(n^{1/7})$ and $\alpha \geq 7$ if splines are used. If power series are used, $k_n = O(n^{2/9})$ would require $m_n = o(n^{1/9})$ and $\alpha \geq 9$.

\subsection{A Wild Bootstrap Procedure}\label{wild_bootstrap}

In this section I propose a wild bootstrap procedure that can be used to obtain critical values for my test and establish its asymptotic validity. I will compare the small sample behavior of the asymptotic and bootstrap versions of the test in simulations.

Because I am interested in approximating the asymptotic distribution of the test under the null hypothesis, the bootstrap data generating process should satisfy the null. Moreover, because my test is robust to heteroskedasticity, the bootstrap data generating process should be able to accommodate heteroskedastic errors. Finally, because the errors in panel data models may be correlated over time (but not across units), the bootstrap procedure should take this into account. The wild bootstrap can satisfy both these requirements.

The bootstrap procedure will be based on the residuals based on the transformed data $\tilde{e}_{it} = \hat{Y}_{it} - \hat{W}_{it}' \tilde{\beta}_1$. I require the bootstrap errors to satisfy the following two requirements:
\[
\text{(i) } E^*[\hat{\varepsilon}_i^*] = 0, \quad \text{(ii) } E^*[\hat{\varepsilon}_i^{*} \hat{\varepsilon}_i^{* \prime}] = \tilde{e}_i \tilde{e}_i',
\]
where $E^*[\cdot] = E[\cdot | \mathcal{Z}_{n,T}]$ is the expectation conditional on the data $\mathcal{Z}_{n,T} = \{ (Y_{it}, X_{it}')' \}_{i=1,t=1}^{n,T}$. To satisfy these requirements, I let $\hat{\varepsilon}_i^* = V_i^* \tilde{e}_i$, where $V_i^*$ is a two-point distribution. Note that I use the same $V_i^*$ for all time periods for a given $i$. By doing so, I maintain the intertemporal correlation of the transformed errors and residuals in the original sample.

Various choices of $V_i^*$ are possible. One popular option is Mammen's two point distribution, originally introduced in \citet{mammen_1993}: 
\[ 
V_i^* =
  \begin{cases}
    (1-\sqrt{5})/2       & \quad \text{with probability } (\sqrt{5}+1)/(2 \sqrt{5}),\\
    (1+\sqrt{5})/2      & \quad \text{with probability } (\sqrt{5}-1)/(2 \sqrt{5}).
  \end{cases}
\]

Another possible choice is the Rademacher distribution, as suggested in \citet{davidson_flachaire_2008}:
\[ 
V_i^* =
  \begin{cases}
    -1       & \quad \text{with probability } \frac{1}{2}, \\
    1      & \quad \text{with probability } \frac{1}{2}.
  \end{cases}
\]

The wild bootstrap procedure then works as follows:

\begin{enumerate}

\item

Obtain the estimates $\tilde{\beta}_1$ and residuals $\tilde{e}_i$ from the restricted model $\hat{Y}_{it} = \hat{W}_{it}' \beta_1 + \hat{e}_{it}$.

\item

Generate the wild bootstrap error $\hat{\varepsilon}_i^* = V_i^* \tilde{e}_i$.

\item

Obtain $\hat{Y}_{it}^* = \hat{W}_{it}' \tilde{\beta}_1 + \hat{\varepsilon}_{it}^*$. Then estimate the restricted model and obtain the restricted bootstrap residuals $\tilde{e}_{it}^*$ using the bootstrap sample $\{ (\hat{Y}_{it}^*, \hat{W}_{it}')' \}_{i=1,t=1}^{n,T}$.

\item

Use $\tilde{e}_{it}^*$ in place of $\tilde{e}_{it}$ to compute the bootstrap test statistic $t_{HC,r_n}^*$ or  $t_{r_n}^*$.

\item

Repeat steps 2--4 $B$ times (e.g. $B=399$) and obtain the empirical distribution of the $B$ test statistics $t_{r_n}^*$ or $t_{HC,r_n}^*$. Use this empirical distribution to compute the bootstrap critical values of the bootstrap $p$-values.

\end{enumerate}

Then the following is true.

\begin{theorem}\label{asy_distr_t_r_n_hc_boot}
Assume that Assumptions of Theorem~\ref{asy_distr_t_r_n_hc} hold. Let $\mathcal{Z}_{n,T} = \{ (Y_{it}, X_{it}')' \}_{i=1,t=1}^{n,T}$. Then
\[
F_{HC,n}^*(t) \to \Phi(t) \text{ in probability},
\]
for all $t$, as $n \to \infty$, where $F_{HC,n}^*(t)$ is the bootstrap distribution of $t_{HC,r_n}^*|\mathcal{Z}_{n,T}$ and $\Phi(\cdot)$ is the standard normal CDF.
\end{theorem}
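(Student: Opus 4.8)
The plan is to mimic the proof of Theorem~\ref{asy_distr_t_r_n_hc} in the bootstrap world, conditional on the data $\mathcal{Z}_{n,T}$, and then argue that all the quantities that appeared there converge (in probability, over the sample) to the limits needed for the bootstrap central limit theorem. First I would note that the bootstrap DGP satisfies the null exactly: $\hat{Y}_{it}^* = \hat{W}_{it}'\tilde{\beta}_1 + \hat{\varepsilon}_{it}^*$ with $\hat{\varepsilon}_i^* = V_i^* \tilde{e}_i$, so there is no approximation-bias analogue of $\hat{R}$ in the bootstrap sample — the $R$-type remainder terms vanish identically. Applying the algebra from Section~\ref{model_test} to the bootstrap sample, the restricted bootstrap residuals satisfy $\tilde{e}^* = M_{\hat{W}} \hat{\varepsilon}^*$, and the bootstrap statistic is the quadratic form $\xi_{HC}^* = (\sum_i \tilde{e}_i^{*\prime}\tilde{Z}_i)(\sum_i \tilde{Z}_i'\tilde{e}_i^*\tilde{e}_i^{*\prime}\tilde{Z}_i)^{-1}(\sum_i \tilde{Z}_i'\tilde{e}_i^*)$ with the same $\tilde{Z} = M_{\hat{W}}\hat{Z}$ as before (note $\hat{W},\hat{Z}$ are held fixed under $E^*$, only $V_i^*$ is random). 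So $\sum_i \tilde{Z}_i'\tilde{e}_i^* = \sum_i \tilde{Z}_i' M_{\hat{W}}\hat{\varepsilon}_i^* = \sum_i \tilde{Z}_i' \hat{\varepsilon}_i^* = \sum_i V_i^* (\tilde{Z}_i'\tilde{e}_i)$, since $M_{\hat{W}}\tilde{Z}_i$ restricted to the $\tilde{Z}$ block is $\tilde{Z}_i$ and $M_{\hat W}\hat\varepsilon^*_i$ hits $\tilde Z$ the same way — i.e., the bootstrap score is a weighted sum of the fixed vectors $c_i := \tilde{Z}_i'\tilde{e}_i \in \mathbb{R}^{r_n}$ with i.i.d. mean-zero unit-variance weights $V_i^*$.

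The core step is then a central limit theorem, conditional on $\mathcal{Z}_{n,T}$, for the quadratic form $\xi_{HC}^* = S^{*\prime} \hat{\Omega}^{*-1} S^*$ where $S^* = \sum_i V_i^* c_i$ and $\hat{\Omega}^* = \sum_i V_i^{*2} c_i c_i'$ (using $\tilde{e}_i^*\tilde{e}_i^{*\prime} = V_i^{*2}\tilde{e}_i\tilde{e}_i'$; one must also track the change from $\tilde{e}_i^*$ built from bootstrap-restricted residuals versus $V_i^*\tilde{e}_i$, which differ by an $\hat{W}$-projection term that is killed by $\tilde{Z}_i' M_{\hat{W}}$ up to the usual negligible cross-terms). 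Writing this out, $\xi_{HC}^*$ minus its conditional mean is a degenerate $U$-statistic of the form $\sum_{i\neq j} V_i^* V_j^* c_i' A c_j$ for a suitable (data-dependent) matrix $A$, so I would invoke the same $U$-statistic CLT (de Jong / Hall) used for Theorem~\ref{asy_distr_t_r_n_hc}, now with the $V_i^*$ playing the role of the randomness. Its Lindeberg/fourth-moment conditions require bounds on $\sum_i \|c_i\|^4$, $\mathrm{tr}((\sum_i c_i c_i')^2)$, and the max eigenvalue of $\sum_i c_i c_i'$ relative to $r_n$; these are exactly the sample analogues of the population moment bounds established under Assumptions~\ref{dgp}--\ref{series_approx} and rate conditions \eqref{rate_cond_r_1_hc}--\eqref{rate_cond_r_5_hc}, so I would show they hold with probability approaching one by a law-of-large-numbers / Markov argument, reusing the estimates from the proof of Theorem~\ref{asy_distr_t_r_n_hc}. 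The condition $\|\hat{\Omega} - \tilde{\Omega}\| = o_p(r_n^{-1/2})$ and its bootstrap counterpart guarantee the normalizing matrix can be replaced by a nonrandom one without affecting the limit.

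Concretely the steps are: (1) expand $\tilde{e}_i^*$ and show $\xi_{HC}^* = (\sum_i V_i^* c_i)' (\sum_i V_i^{*2} c_i c_i')^{-1}(\sum_i V_i^* c_i) + o_{p^*}(r_n^{1/2})$ in probability; (2) replace $\sum_i V_i^{*2} c_i c_i'$ by $\sum_i c_i c_i' = n\tilde{\Omega}$ using $E^*[V_i^{*2}]=1$ and a conditional LLN, incurring $o_{p^*}(r_n^{1/2})$; (3) diagonalize, reducing to $\sum_{\ell=1}^{r_n}(\sum_i V_i^* a_{i\ell})^2$ for orthonormalized scores with $\sum_i a_{i\ell}a_{i\ell'} = \delta_{\ell\ell'}$, whose conditional mean is $r_n$ and whose centered version is the degenerate $U$-statistic above; (4) verify the $U$-statistic CLT moment conditions hold conditionally w.p.a.1, concluding $(\xi_{HC}^* - r_n)/\sqrt{2r_n} \overset{d^*}{\to} N(0,1)$ in probability, i.e. $F_{HC,n}^*(t)\to\Phi(t)$ in probability for every $t$; (5) finally, verify the bootstrap version of the auxiliary condition $\|\hat{\Omega}^* - \tilde{\Omega}^*\| = o_{p^*}(r_n^{-1/2})$ in probability, again by reusing the original fourth-moment bounds together with $E^*|V_i^*|^q<\infty$. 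The main obstacle I expect is step (4): controlling the degenerate $U$-statistic's variance and verifying the negligibility of its higher-order cumulants requires bounding sample quantities like $n^{-2}\sum_{i\neq j}(c_i' A c_j)^2$ uniformly over the random matrix $A$, and one must be careful that these bounds hold jointly with probability approaching one rather than merely in expectation — this is where the rate conditions \eqref{rate_cond_r_1_hc}--\eqref{rate_cond_r_5_hc} are used, in their sample-analogue form, and where matching the $o_p(r_n^{-1/2})$ slack from the $\hat{\Omega}$-replacement matters.
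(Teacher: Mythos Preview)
Your proposal is correct and follows essentially the same route as the paper's proof: both note that the bootstrap DGP satisfies the null exactly so there is no $\hat R$-type remainder, reduce the bootstrap statistic to a quadratic form in $M_{\hat W}\hat\varepsilon^*$, replace the random normalizing matrix $\hat\Omega^*$ by a data-conditional ``population'' version, and then apply Hall's degenerate $U$-statistic CLT conditionally on $\mathcal{Z}_{n,T}$ with the $V_i^*$ supplying the randomness. The paper's proof is in fact only a sketch (it explicitly defers Step~3 to ``as in the proof of Theorem~\ref{asy_distr_t_r_n_hc}, but now conditional on the data''), and your steps (1)--(5) fill in precisely the details the paper omits; the only organizational difference is that you keep $\tilde Z = M_{\hat W}\hat Z$ throughout and work with $c_i = \tilde Z_i'\tilde e_i$, whereas the paper peels off the projection and reduces to the unprojected $\hat Z$ before invoking the CLT---both are valid and lead to the same limit.
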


A similar result can be obtained for the homoskedastic test statistic $t_{r_n}^*$. It is omitted for brevity.

\subsection{Behavior of the Test Statistic under a Fixed Alternative}

This section discusses the behavior of the test statistic under a fixed alternative. First, a cautionary note is in order. Note that the null hypothesis concerns the model
\[
Y_{it} = g(X_{it}) + \mu_i + \varepsilon_{it},
\]
while the semiparametric series estimation method is based on the transformed model
\[
\hat{Y}_{it} = \hat{g}(X_{i}) + \hat{\varepsilon}_{it},
\]
where the model is transformed by taking the first differences or using the within transformation.\footnote{Because of this, $\hat{g}(\cdot)$ may depend on all elements of $X_i$, not just $X_{it}$.} In particular, the model is estimated based on the series form
\[
\hat{Y}_{it} = \hat{W}_{it}' \beta_1 + e_{it}
\]

Because of this, my test will only be able to detect specification errors that are present in the transformed model. In other words, the null hypothesis essentially becomes $H_0: P(\hat{g}(X_i) = \hat{f}(X_i,\theta_0,h_0)) = 1$ for some $\theta_0$ and $h_0$. Because most of the time researchers work with transformed models when they deal with fixed effects, I believe this is a reasonable hypothesis to test. Other specification tests for fixed effects panel data models, e.g. in \citet{lin_et_al_2014}, are also usually based on the transformed residuals. As long as the transformation used does not eliminate the specification error in the original model, the test will be consistent for the original model.

\begin{assumption}\label{assumption_din_1}
(\citet{donald_et_al_2003}, Assumption 1) 

Assume that $E[\hat{P}_{it} \hat{P}_{it}']$ is finite for all $k$, and for any $a(x)$ with $E[a(X_{i})^2]<\infty$ there are $k \times 1$ vectors $\gamma^{k}$ such that, as $k \to \infty$,
\[
E[(a(X_{i}) - \hat{P}_{it}' \gamma^{k})^2] \to 0
\]
\end{assumption}

Lemma~\ref{lemma_din_1} in the appendix shows that when this assumption is satisfied, the conditional moment restriction $E[\varepsilon_i | X_i] = 0$ is equivalent to a growing number of unconditional moment restrictions. The class of functions $a(x)$, for which the equivalence between the conditional and unconditional restrictions holds, consists of functions that can be approximated (in the mean squared sense) using series as the number of series terms grows. While it is difficult to give a necessary and sufficient primitive condition that would describe this class of functions, the test will likely be consistent against continuous and smooth alternatives, while it may not be consistent against alternatives that exhibit jumps.

This is a population result in the sense that it does not involve the sample size $n$. In order to use this result in practice, I require the number of series terms used to construct the test statistic, $k_n$, to grow with the sample size. By doing so, I ensure that the unconditional moment restriction $E[\hat{P}_i' \hat{\varepsilon}_i] = 0$, on which the test is based, is equivalent to the conditional moment restriction $E[\hat{\varepsilon}_i | X_i]$. Thus, the test will be consistent against a wide class of alternatives satisfying Assumption~\ref{assumption_din_1}.

In order to analyze the behavior of the test under a fixed alternative, I introduce some notation first. The true model is nonparamertic:
\[
Y_{it} = g(X_{it}) + \mu_i + \varepsilon_{it}, \quad E[\varepsilon_i | X_i] = 0
\]

An alternative way to write this model is
\[
Y_{it} = f(X_{it},\theta^*,h^*) + \mu_i + \varepsilon_{it}^*,
\]
where $\theta^*$ and $h^*$ are pseudo-true parameter values and $\varepsilon_{it}^* = \varepsilon_{it} + (g(X_{it}) - f(X_{it},\theta^*,h^*)) = \varepsilon_{ti} + d(X_{it})$ is a composite error term. The pseudo-true parameter values minimize
\[
E[(g(X_{it}) - f(X_{it},\theta,h))^2]
\]
over a suitable parameter space.

Note that the model can be written as
\[
Y_{it} = W_{it}' \beta_1^{*} + \mu_i + \varepsilon_{it}^* + R_{it}^*,
\]
where $R_{it}^* = (f(X_{it},\theta^*,h^*) - W_{it}' \beta_1^{*})$. After transforming the data, the model becomes
\[
\hat{Y}_{it} = \hat{W}_{it}' \beta_1^* + \hat{\varepsilon}_{it}^* + \hat{R}_{it}^*
\]

The pseudo-true parameter value $\beta_1^*$ solves the moment condition $E[\hat{W}_{it} (\hat{Y}_{it} - \hat{W}_{it}' \beta_1^{*})] = 0$, and the semiparametric estimator $\tilde{\beta}_1$ solves its sample analog $\hat{W}' (\hat{Y} - \hat{W} \tilde{\beta}_1)/n = 0$.

The following theorem provides the divergence rate of the test statistic under the fixed alternative.

\begin{theorem}\label{global_alternative_t_r}
Let $\Omega^* = E[\hat{Z}_i' \hat{\varepsilon}_i \hat{\varepsilon}_i' \hat{Z}_i]$. In the heteroskedastic case, let
\[
\hat{\Omega} = n^{-1}\sum_{i=1}^{n}{\tilde{Z}_i' \tilde{e}_i \tilde{e}_i' \tilde{Z}_i},
\]
and in the homoskedastic case let
\[
\hat{\Omega} = n^{-1} \sum_{i=1}^{n}{\tilde{Z}_i' \tilde{\Sigma}_T \tilde{Z}_i}
\]

Suppose that there exists $\beta_1^*$ such that $\sup_{x \in \mathcal{X}}{| f(x,\theta^*,h^*) - W^{m_n}(x)' \beta_1^* |} \to 0$, $\| \hat{\Omega} - \Omega^* \| \overset{p}{\to} 0$, the smallest eigenvalue of $\Omega^*$ is bounded away from zero, $m_n \to \infty$, $r_n \to \infty$, $r_n/n \to 0$, $E[\hat{\varepsilon}_i^{*\prime} T_i] \Omega^{*-1} E[T_i' \hat{\varepsilon}_i^*] \to \Delta$, where $\Delta$ is a constant. Then under homoskedasticity
\[
\frac{\sqrt{r_n}}{n} \frac{\xi - r_n}{\sqrt{2 r_n}} \overset{p}{\to} \Delta/\sqrt{2},
\]
and under heteroskedasticity
\[
\frac{\sqrt{r_n}}{n} \frac{\xi_{HC} - r_n}{\sqrt{2 r_n}} \overset{p}{\to} \Delta/\sqrt{2}
\]
\end{theorem}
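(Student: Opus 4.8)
The plan is to show that under a fixed alternative $\xi_{HC}$ (and likewise $\xi$) diverges at the exact rate $n$, with $\xi_{HC}/n \overset{p}{\to}\Delta$, after which the stated limits are immediate arithmetic.

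First I would reduce the statistic to population quantities. Under the alternative $\hat{Y} = \hat{W}\beta_1^* + \hat{\varepsilon}^* + \hat{R}^*$, so $M_{\hat{W}}\hat{W}=0$ gives $\tilde{e} = M_{\hat{W}}\hat{\varepsilon}^* + M_{\hat{W}}\hat{R}^*$. Because $\xi_{HC}$ in Equation~\ref{xi_hc} and the matrices $\hat{\Omega}$, $\Omega^*$ are invariant under $\hat{W}\mapsto B_1\hat{W}$, $\hat{Z}\mapsto B_2\hat{Z}$, I pass to the whitened basis in which $E[\hat{W}_i'\hat{W}_i]=I$, $E[\hat{Z}_i'\hat{Z}_i]=I$, $E[\hat{W}_i'\hat{Z}_i]=0$; in this basis the population partialled-out regressor $T_i$ coincides with $\hat{Z}_i$. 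Writing $\hat{\varepsilon}_i^* = \hat{\varepsilon}_i + \hat{d}_i$, where $\hat{d}$ is the transform of $d(X_{it}) = g(X_{it})-f(X_{it},\theta^*,h^*)$, and using $\tilde{Z}'M_{\hat{W}} = \hat{Z}'M_{\hat{W}}$, I split
\[
\bar{S}_n := \frac{1}{n}\sum_{i=1}^n \tilde{Z}_i'\tilde{e}_i = \frac{1}{n}\tilde{Z}'\hat{\varepsilon} + \frac{1}{n}\hat{Z}'M_{\hat{W}}\hat{d} + \frac{1}{n}\tilde{Z}'\hat{R}^*.
\]
Conditioning on the regressors and using $E[\varepsilon_i|X_i]=0$ together with the bounded conditional second moments of $\hat{\varepsilon}_i$ (Assumptions~\ref{dgp}--\ref{errors_unified}) gives $E\|n^{-1}\tilde{Z}'\hat{\varepsilon}\|^2 \le C\,r_n/n$, hence $\|n^{-1}\tilde{Z}'\hat{\varepsilon}\| = o_p(1)$ since $r_n/n\to 0$; the term $n^{-1}\tilde{Z}'\hat{R}^*$ is $o_p(1)$ in Euclidean norm since $\|\hat{R}^*\| = o(n^{1/2})$ by the approximation condition $\sup_x|f(x,\theta^*,h^*)-W^{m_n}(x)'\beta_1^*|\to 0$ and a Frobenius-norm bound on $\tilde{Z}$; and the signal term converges, $n^{-1}\hat{Z}'M_{\hat{W}}\hat{d}\overset{p}{\to} S := E[T_i'\hat{\varepsilon}_i^*]$, by a law of large numbers for the growing-dimension cross moments, the $\hat{\varepsilon}$-part of $\hat{\varepsilon}^*$ contributing nothing because $E[T_i'\hat{\varepsilon}_i]=0$. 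Thus $\bar{S}_n$ equals the $\hat{d}$-term plus two pieces that are $o_p(1)$ in Euclidean norm.

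Next I would pass the quadratic form to its limit. Since $\xi_{HC} = \big(\sum_i\tilde{e}_i'\tilde{Z}_i\big)(n\hat{\Omega})^{-1}\big(\sum_i\tilde{Z}_i'\tilde{e}_i\big) = n\,\bar{S}_n'\hat{\Omega}^{-1}\bar{S}_n$, it suffices to show $\bar{S}_n'\hat{\Omega}^{-1}\bar{S}_n\overset{p}{\to}\Delta$. By the hypotheses $\|\hat{\Omega}-\Omega^*\|\overset{p}{\to}0$ and $\lambda_{\min}(\Omega^*)$ bounded away from zero, together with $\lambda_{\max}(\Omega^*)=O(1)$ in the whitened basis, the spectral norm of $\hat{\Omega}^{-1}$ is $O_p(1)$ and that of $\hat{\Omega}^{-1}-\Omega^{*-1}$ is $o_p(1)$. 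Combining this with the Euclidean-norm smallness of the non-$\hat{d}$ pieces of $\bar{S}_n$ and applying Cauchy--Schwarz to the cross terms yields $\bar{S}_n'\hat{\Omega}^{-1}\bar{S}_n = S'\Omega^{*-1}S + o_p(1)$, and the assumption $E[\hat{\varepsilon}_i^{*\prime}T_i]\Omega^{*-1}E[T_i'\hat{\varepsilon}_i^*] = S'\Omega^{*-1}S \to \Delta$ then gives $\xi_{HC}/n\overset{p}{\to}\Delta$. The conclusion follows at once from $r_n/n\to 0$:
\[
\frac{\sqrt{r_n}}{n}\cdot\frac{\xi_{HC}-r_n}{\sqrt{2 r_n}} = \frac{1}{\sqrt{2}}\Big(\frac{\xi_{HC}}{n}-\frac{r_n}{n}\Big)\overset{p}{\to}\frac{\Delta}{\sqrt{2}}.
\]
The homoskedastic statement is obtained by the identical argument with $\hat{\Omega}=n^{-1}\sum_i\tilde{Z}_i'\tilde{\Sigma}_T\tilde{Z}_i$, once one notes that $\tilde{\Sigma}_T$ converges to a fixed positive definite $T\times T$ matrix (from the bounded conditional moments and the convergence of $(\hat{W}'\hat{W})^{-1}\hat{W}'\hat{\varepsilon}^*$), so that the assumed $\|\hat{\Omega}-\Omega^*\|\overset{p}{\to}0$ holds with the corresponding $\Omega^*$.

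The hard part will be the quadratic-form step, because $\bar{S}_n$, $S$, $\hat{\Omega}$, and $\Omega^*$ all have dimension $r_n\to\infty$: the statement ``$\bar{S}_n\to S$'' has to be made precise as $\|\bar{S}_n-S\|^2=o_p(1)$ in Euclidean norm — which is exactly where $r_n/n\to0$ and the uniform moment bounds are used — and then the cross and remainder contributions to $\bar{S}_n'\hat{\Omega}^{-1}\bar{S}_n$, as well as the replacement of $\hat{\Omega}^{-1}$ by $\Omega^{*-1}$ inside a growing-dimension quadratic form, must be controlled using the two-sided eigenvalue bounds on $\Omega^*$ and the assumed convergence of the population quadratic form $S'\Omega^{*-1}S\to\Delta$, not merely bounds on individual Euclidean norms.
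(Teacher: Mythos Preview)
Your proposal is correct and follows essentially the same route as the paper: reduce to $\xi/n\overset{p}{\to}\Delta$ via the algebraic identity $\frac{\sqrt{r_n}}{n}\frac{\xi-r_n}{\sqrt{2r_n}}=\frac{1}{\sqrt{2}}(\xi/n)-\frac{r_n}{n\sqrt{2}}$, write $\tilde{e}=M_{\hat{W}}\hat{\varepsilon}^*+M_{\hat{W}}\hat{R}^*$, kill the $\hat{R}^*$ contributions by eigenvalue bounds, replace $\hat{\Omega}$ by $\Omega^*$ using $\|\hat{\Omega}-\Omega^*\|\overset{p}{\to}0$ and the lower eigenvalue bound, and show the centered score is small via an $r_n/n$ argument. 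The only organizational difference is that you split $\hat{\varepsilon}^*=\hat{\varepsilon}+\hat{d}$ and control the pieces of $\bar{S}_n$ in Euclidean norm before passing to the quadratic form, whereas the paper keeps $\hat{\varepsilon}^*$ intact and bounds the centered quadratic form directly in the $\Omega^{*-1}$ metric; both yield the same $r_n/n\to 0$ bound.

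One slip to correct: for the remainder $n^{-1}\tilde{Z}'\hat{R}^*$ you need the \emph{spectral} (operator) norm of $\tilde{Z}$, not its Frobenius norm. The Frobenius norm is $\|\tilde{Z}\|_F=O_p(\sqrt{nr_n})$ and only delivers $\|n^{-1}\tilde{Z}'\hat{R}^*\|=o_p(\sqrt{r_n})$, which is not enough. What works is $\|n^{-1}\tilde{Z}'\hat{R}^*\|^2 \le n^{-2}\lambda_{\max}(\tilde{Z}\tilde{Z}')\,\|\hat{R}^*\|^2 = O_p(n^{-1})\cdot o(n)=o_p(1)$, using $\lambda_{\max}(\tilde{Z}'\tilde{Z}/(nT))\overset{p}{\to}1$; this is precisely the bound the paper invokes for the $\hat{R}^*$ terms.
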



\section{Simulations}\label{simulations}

In this section, I study the finite sample performance of the proposed test using simulations. I have several goals: first, to illustrate the importance of the degrees of freedom correction; second, to study the sensitivity of the test to the choice of basis functions and tuning parameters; third, to compare the asymptotic version of my test with its bootstrap version; finally, to study the effect of the sample size on the test behavior.

The setup I use resembles the one in \citet{korolev_2019} but includes fixed effects:
\[
Y_{it} = \mu_i +  X_{1it} \beta + g(X_{2it}) + \varepsilon_{it}
\]
Here $\varepsilon_{it}$ are independent across individuals $i$ and time $t$, while $\alpha_i$ are fixed effects that are correlated with both the regressors and error terms for individual $i$. More specifically,
\[
\mu_i = \nu_i + \mu_{X,i},
\]
where $\nu_i \sim \text{i.i.d. } N(0,2.25)$ and $\mu_{X,i} =  \sum_{t=1}^{T}{(0.6 X_{1it}+0.4 X_{2it})}$. In this setting, estimating the model $Y_{it} = 2 X_{1it} + g(X_{2it}) + e_{it}$, where $e_{it} = \mu_i + \varepsilon_{it}$, would result in inconsistent estimates, so it is crucial to account for the panel nature of the data and for the presence of fixed effects. To achieve this, I use the within transformation.\footnote{I have tried using first differencing instead of the within transformation and obtained similar results.} After that, I estimate the model and compute the proposed test statistic.

I test the following null hypothesis:
\begin{align*}
H_0^{SP}: P \left(E[Y_{it} | \mu_i, X_{it}] = \mu_i +  X_{1it} \beta + g(X_{2it}) \right) = 1 \text{ for some } \beta, g(X_{2it})
\end{align*}
against the alternative
\begin{align*}
H_1: P \left(E[Y_{it} |  \mu_i, X_{it}] \neq \mu_i +  X_{1it} \beta + g(X_{2it}) \right) > 0 \text{ for all } \beta, g(X_{2it})
\end{align*}

I use two data generating processes:

\begin{enumerate}

\item

Semiparametric partially linear, which corresponds to $H_0^{SP}$:
\begin{align}\label{DGP_SP}
\begin{split}
Y_{it} &= \mu_i + 2 X_{1it} + g(X_{2it}) + \varepsilon_{it} \\
g(X_{2it}) &= 3 + 2(\exp(X_{2it})-2 \ln(X_{2it}+3))
\end{split}
\end{align}

\item

Nonparametric, which corresponds to $H_1$:
\begin{align}\label{DGP_NP}
\begin{split}
Y_{it} &= \mu_i + 2 X_{1it} + g(X_{2it}) + h(X_{1it},X_{2it}) +  \varepsilon_{it} \\
h(X_{1it},X_{2it}) &= h_1(X_{1it}) h_2(X_{2it}) \\
h_1(X_{1it}) &= 1.25 \cos(X_{1it}-2), \quad h_2(X_{2it}) = \sin(0.75 X_{2it})
\end{split}
\end{align}

\end{enumerate}

These two DGPs are very similar to the ones used in \citet{korolev_2019}, but include fixed effects. For more details on the two DGPs, see \citet{korolev_2019}. I consider four setups: Setup 1 with $(n = 250, T = 2)$, Setup 2 with $(n = 250, T = 4)$, Setup 3 with $(n = 500, T = 2)$, Setup 4 with $(n = 500, T = 4)$. I separately consider two settings: with homoskedastic errors and with heteroskedastic errors.

To implement the test, I use both power series and cubic splines as basis functions due to their popularity. Instead of studying the behavior of my test for a given (arbitrary) number of series terms, I vary the number of terms in univariate series expansions to investigate how the behavior of the test changes as a result. The total number of parameters $k_n$ ranges from 15 to 39 in Setups 1 and 2 and to 52 in Setups 3 and 4.\footnote{For more details, see the online supplement to \citet{korolev_2019}.}

\subsection{Homoskedastic Errors}\label{simulations_homoskedastic_errors}

First, I investigate the performance of the test when the errors are homoskedastic. The errors are normally distributed and independent across both $i$ and $t$: $\varepsilon_{it} \sim \text{i.i.d. } N(0,4)$. I consider tests based both on the LM type test statistic
\[
\xi = \left( \sum_{i=1}^{n}{\tilde{e}_i' \tilde{Z}_i} \right) \left( \sum_{i=1}^{n}{\tilde{Z}_i' \tilde{\Sigma}_T \tilde{Z}_i} \right)^{-1} \left( \sum_{i=1}^{n}{\tilde{Z}_i' \tilde{e}_i} \right) \overset{a}{\sim} \chi^2(\tau_n)
\]
and on the normalized statistic $t_{\tau_n} = \frac{\xi - \tau_n}{\sqrt{2 \tau_n}} \overset{a}{\sim} N(0,1)$.

I start by looking at the simulated size of the test at the nominal 5\% level. Figures~\ref{fig_simulated_size_set1}, \ref{fig_simulated_size_set2}, \ref{fig_simulated_size_set3}, and \ref{fig_simulated_size_set4} plot the simulated size as a function of the number of series terms in univariate series expansions $a_n$ (including the constant term) for the four setups I consider. The upper panels of these figures use the LM type statistic $\xi$ , while the bottom panels use the normalized test statistic $t$. The left panels use power series and the right panels use splines. I consider four versions of the test: the asymptotic version with $\tau_n = r_n$ (red solid lines), the asymptotic version with $\tau_n = k_n$ (magenta solid lines), the wild bootstrap version with the Rademacher distribution (cyan dash-dotted lines), and the wild bootstrap with Mammen's distribution (blue dashed lines).

As we can see, the asymptotic test without the degrees of freedom correction (i.e. with $\tau_n = k_n$) is severely undersized. In turn, the asymptotic test with the degrees of freedom correction (i.e. with $\tau_n = r_n$) based on the $t$ statistic is slightly oversized, while the asymptotic test based on the $\xi$ statistic controls size very well. Depending on the setup, its performance is either very close to, or even better than, that of the wild bootstrap tests. We can also see that the performance of the test is fairly robust to the choice of basis functions and tuning parameters.

Next, I turn to the test power. Figures~\ref{fig_simulated_power_set1}, \ref{fig_simulated_power_set2}, \ref{fig_simulated_power_set3}, and \ref{fig_simulated_power_set4} plot the simulated power of the nominal 5\% level test as a function of the number of series terms in univariate series expansions $a_n$. Given that the asymptotic test without the degrees of freedom correction is undersized, it is not surprising that it also has very low power. In turn, the power of the asymptotic version of the test with the degrees of freedom correction is very similar to the power of the wild bootstrap tests. As could be expected, the power increases as the sample size (the number of units $n$ or the number of periods $T$) increases. Finally, the power decreases as the number of series terms grows. This is due to the fact that the alternative is smooth and can be captured by the first few series terms. I will turn to a data-driven method to choose tuning parameters later.

\subsection{Heteroskedastic Errors}\label{simulations_heteroskedastic_errors}

In this section I investigate the performance of the test when the errors are heteroskedastic. The errors are normally distributed and independent across both $i$ and $t$, but not identically distributed: $\varepsilon_{it} \sim \text{i.n.i.d. } N(0,1+1.75 \exp(0.75(X_{1it}+X_{2it})))$. I consider tests based both on the heteroskedasticity-robust LM type test statistic
\[
\xi_{HC} = \left( \sum_{i=1}^{n}{\tilde{e}_i' \tilde{Z}_i} \right) \left( \sum_{i=1}^{n}{\tilde{Z}_i' \tilde{e}_i \tilde{e}_i' \tilde{Z}_i} \right)^{-1} \left( \sum_{i=1}^{n}{\tilde{Z}_i' \tilde{e}_i} \right) 
\]
and on the normalized statistic $t_{\tau_n,HC} = \frac{\xi_{HC} - \tau_n}{\sqrt{2 \tau_n}} \overset{a}{\sim} N(0,1)$.

First, I look at  the simulated size of the test at the nominal 5\% level. Figures~\ref{fig_simulated_size_set1_hc}, \ref{fig_simulated_size_set2_hc}, \ref{fig_simulated_size_set3_hc}, and \ref{fig_simulated_size_set4_hc} plot the simulated size as a function of the number of series terms in univariate series expansions $a_n$. We can see that the asymptotic test without the degrees of freedom correction is again severely undersized. The asymptotic test with the degrees of freedom correction based on the $\xi_{HC}$ statistic is also undersized, though its size becomes closer to the nominal level as the sample size grows. In turn, the simulated size of the asymptotic test with the degrees of freedom correction based on the $t_{HC}$ statistic is pretty close to the nominal level. In fact, in Setups 1 and 2, when splines are used, it controls size even better that the wild bootstrap tests.

Next, I turn to the test power. Figures~\ref{fig_simulated_power_set1_hc}, \ref{fig_simulated_power_set2_hc}, \ref{fig_simulated_power_set3_hc}, and \ref{fig_simulated_power_set4_hc} plot the simulated power of the nominal 5\% level test as a function of the number of series terms in univariate series expansions $a_n$. The asymptotic test without the degrees of freedom correction has low power in all setups. The asymptotic test with the degrees of freedom correction based on the $\xi_{HC}$ test statistic is less powerful than the wild bootstrap tests, but the power loss decreases as the sample size grows. Finally, the power of the asymptotic test with the degrees of freedom correction based on the $t_{HC}$ statistic is fairly close to the power of the bootstrap tests, especially with larger sample sizes.

To summarize, even though the performance of the asymptotic test with the degrees of freedom correction deteriorates when the errors are heteroskedastic, as opposed to homoskedastic, it nevertheless comes close to the wild bootstrap tests in most setups. With larger sample sizes, its performance is almost indistinguishable from that of the bootstrap tests.

\subsection{Data-Driven Methods for Tuning Parameters Choice}\label{simulations_data_driven}

In the simulations presented above, the alternative was smooth and the power of the test declined as the number of series terms increased. However, this is not always the case. There exist alternatives that are orthogonal to the first few series terms, and in order to detect such alternatives, one needs to include higher order series terms. In this section, I investigate the finite sample performance of a data-driven method to select tuning parameters.

In order to simplify the problem, I abstract away from the task of selecting the number of series terms under the null and consider a linear univariate null model:
\[
Y_{it} = \mu_i + 2 X_{1it} + \varepsilon_{it}
\]

The smooth alternative is given by
\[
Y_{it} = \mu_i + 2 X_{1it} + \cos(X_{1it} - 2) + \varepsilon_{it}
\]
I also consider an alternative that is orthogonal to the first four power terms in $X_1$. A data-driven test should be able to adapt to a wide class of alternatives and choose tuning parameters appropriately.

I use a modified version of the approach proposed in \citet{guay_guerre_2006}. I use the $\xi$ test statistic and pick the value of $r_n$ that maximizes
\[
\xi(r_n) - r_n - \gamma_n \sqrt{2(r_n - r_{n,min})},
\]
where $\gamma_n = c \sqrt{2 \ln{\Card{r_n}}}$, $c$ is a constant that satisfies $c \geq 1 + \varepsilon$ for some $\varepsilon > 0$, $\Card{r_n}$ is the cardinality of the set of possible numbers of restrictions, and $r_{n,min}$ is the lowest possible number of restrictions across different choices of $r_n$. The notation $\xi(r_n)$ emphasizes the dependence of the test statistic $\xi = \left( \sum_{i=1}^{n}{\tilde{e}_i' \tilde{Z}_i} \right) \left( \sum_{i=1}^{n}{\tilde{Z}_i' \tilde{\Sigma}_T \tilde{Z}_i} \right)^{-1} \left( \sum_{i=1}^{n}{\tilde{Z}_i' \tilde{e}_i} \right)$ on the number $r_n$ of elements in $\tilde{Z}$. Intuitively, $r_n$ is the center term of $\xi(r_n)$, while $\gamma_n \sqrt{2(r_n - r_{n,min})}$ is the penalty term that rewards simpler alternatives. In my analysis, I set $c=5$.

Table~\ref{Tbl_data_driven} presents the results. I report the simulated size, power against the standard alternative, and power against the orthogonal alternative for the data driven test and the test with the fixed number of series term equal to $a_n = 4$ and $a_n = 9$ (including the constant term). The former choice of $a_n$ is typically optimal under the regular alternative but has no power against the orthogonal alternative. The latter choice of $a_n$ typically leads to good power against the orthogonal alternative but results in the loss of power against the well-behaved alternative.

As we can see, the data-driven test is slightly oversized in the first three setups and is slightly undersized in the last setup, but overall its size is close to the nominal level. Moreover, it has excellent power against the standard alternative and pretty good power against the orthogonal alternative. Even though a more careful investigation of data-driven specification tests for panel data models is beyond the scope of this paper, my simulations suggest that the proposed procedure performs well in finite samples.

\section{Empirical Example}\label{empirical_example}

In this section, I apply my test to the PSID data\footnote{Available at \url{http://bcs.wiley.com/he-bcs/Books?action=resource&bcsId=4338&itemId=1118672321&resourceId=13452}.} that was used in \citet{cornwell_rupert_1988} and \citet{baltagi_khanti-akom_1990}. The dataset contains 7 years of observations on 595 heads of household between the ages of 18 and 65 in 1976 with a positive reported wage in some private, non-farm employment for all 7 years. Among other models, the authors estimated the following wage equation with fixed effects:
\begin{align}\label{quadratic_model}
\begin{split}
&LWAGE_{it} = \alpha_1 WKS_{it} + \delta_1 EXP_{it} + \lambda_1 EXP_{it}^2 + D_{it}' \gamma_1 + \mu_{i} + \varepsilon_{it} \\
&E[\varepsilon_{i} | WKS_{i}, EXP_{i}, D_{i}, \mu_i] = 0
\end{split}
\end{align}
where $LWAGE_{it}$ is the natural logarithm of the wage of individual $i$ in year $t$, $WKS$ is weeks worked, $EXP$ is experience, and $D_{it}$ includes the following dummy variables: occupation ($OCC= 1$ if the individual has blue-collar occupation), industry ($IND= 1$ if the individual works in a manufacturing industry), residence ($SOUTH = 1$, $SMSA = 1$ if the individual resides in the south, or in a standard metropolitan statistical area), marital status ($MS = 1$ if the individual is married), union coverage ($UNION = 1$ if the individual's wage is set by a union contract).

While my test is fairly general and applies to semiparametric as well as parametric models, I focus on a parametric model because parametric panel data models are prevalent in applications. I test this parametric model against the alternative which is fully nonparametric in weeks worked and experience but is parametric in the dummy variables:
\[
LWAGE_{it} = g(WKS_{it},EXP_{it}) + D_{it}' \gamma + \mu_{i} + \varepsilon_{it},
\]
where $D_{it}$ includes the six dummy variables listed above. Due to the number of dummy variables, considering the alternative which is fully nonparametric appears implausible, as it would essentially require me to split the dataset into $2^{6} = 64$ bins and estimate it within each bin separately.

In order to implement the test, I need to select the basis functions and the number of series terms. I use both power series and splines and utilize a data-driven procedure to select the number of series terms. Note that the number of terms under the null is fixed because the null model is parametric. Thus, I only need to choose the number of series terms under the alternative. Following the approach discussed in Section~\ref{simulations_data_driven}, I vary the number of series terms in univariate series expansions in $WKS$ and $EXP$ from 3 to 8 (not including the constant) and pick the value of $r_n$ that maximizes
\[
\xi_{HC}(r_n) - r_n - \gamma_n \sqrt{2(r_n - r_{n,min})}
\]
I find that the optimal number of terms is equal to 3 (not including the constant term), i.e. that a cubic polynomial should be used. In this case, power series and splines coincide as there are no knots yet. The resulting number of restrictions is $r_n = 12$. The upper panel of Table~\ref{Tbl_testing} reports the heteroskedastic test statistic $\xi_{HC}$ as well as the standardized statistic $t_{HC}$. As we can see, the null hypothesis that the model is correctly specified is not rejected at the 5\% level, but it is rejected at the 10\% level.

Next, I repeat this exercise for the specification that drops the quadratic term in experience. I use the same nonparametric alternative as before. Because the null model has one regressor less than before, I am testing $r_n = 13$ restrictions. The middle panel of Table~\ref{Tbl_testing} reports the results. All for types of the test reject the null hypothesis at any conventional confidence level.

Finally, I estimate a semiparametric model that is nonparametric in experience but is parametric in the remaining variables:
\begin{align}\label{semiparametric_model}
LWAGE_{it} = \alpha_2 WKS_{it} + g_2(EXP_{it}) + D_{it}' \gamma_2 + \mu_{i} + \varepsilon_{it} 
\end{align}

I estimate this semiparametric model using power series with up to cubic terms. The semiparametric model leads to $r_n = 11$ restrictions. Figure~\ref{figure_exp_effects} plots the estimated effects of experience for the linear, quadratic, and emiparametric models. As we can see, the semiparametric model appears to be pretty similar to the quadratic model, and the linear model is not too far off. However, specification testing draws a somewhat different picture. As we can see from the bottom panel of Table~\ref{Tbl_testing}, while the linear model is overwhelmingly rejected and the quadratic model is rejected at the 10\% level, there is no evidence against the semiparametric model.

Because in this paper I develop a specification test and not a model selection procedure, one should be careful with applying my test to several models sequentially. However, it appears that there is substantial evidence against the linear model, while there is little evidence agains the quadratic specification employed by \citet{cornwell_rupert_1988} and \citet{baltagi_khanti-akom_1990}. If the researcher  worries about the borderline results and wants to be on the safe side, it may be plausible to use a more flexible semiparametric model that is fully nonparametric in experience.

\section{Conclusion}\label{conclusion}

In this paper, I develop a Lagrange Multiplier type specification test for semiparametric panel data models with fixed effects. The test achieves consistency by turning a conditional moment restriction into a growing number of unconditional moment restrictions. Unlike in the traditional parametric Lagrange Multiplier test, both the number of parameters and the number of restrictions are allowed to grow with the sample size. I develop an asymptotic theory that explicitly takes this into account and prove that the normalized test statistic converges in distribution to the standard normal.

My test has several attractive features. First, fixed effects panel data models typically require researchers to transform their data, by taking first differences or applying the within transformation. This makes semiparametric estimation and specification testing that involves kernel methods problematic, as it is difficult to impose the additive structure on kernel estimators. In contrast, with series methods, the transformed model remains linear in parameters, and the proposed test is very simple to implement.

Second, the projection property of series estimators allows me to develop a degrees of freedom correction, which explicitly accounts for the variance of semiparametric estimators. Thus, I only need to control the bias, and my rate conditions are relatively mild. Moreover, the degrees of freedom correction results in good performance of the test in simulations.

In future research, I plan to extend the proposed test to semiparametric dynamic panel data models. The presence of endogenous variables calls for the use of instrumental variables. Estimation of such models, with endogeneity only in the parametric part, has been studied in \citet{baltagi_li_2002} and \citet{an_et_al_2016}. A possible concern for specification testing in these models is that nonparametric instrumental variables models are subject to the ill-posed inverse problem, so the unrestricted nonparametric model may not be identified. It remains to be seen whether this identification problem poses a challenge for specification testing in dynamic panel data models.

\clearpage

\renewcommand\thesection{Appendix \Alph{section}}

\renewcommand\thesubsection{\Alph{section}.\arabic{subsection}}

\setcounter{section}{0}

\renewcommand{\thetheorem}{A.\arabic{theorem}}

\renewcommand{\thelemma}{A.\arabic{lemma}}

\renewcommand{\theassumption}{A.\arabic{assumption}}

\renewcommand{\theremark}{A.\arabic{remark}}

\renewcommand{\theequation}{A.\arabic{equation}}

\setcounter{theorem}{0}

\setcounter{lemma}{0}

\setcounter{assumption}{0}

\setcounter{remark}{0}

\section{Tables and Figures}\label{appendix_tables_figures}

\onehalfspacing

\begin{figure}[H]
\begin{center}
\caption{Simulated Size of the Test, $n=250$, $T = 2$}\label{fig_simulated_size_set1}
\includegraphics[scale=0.5]{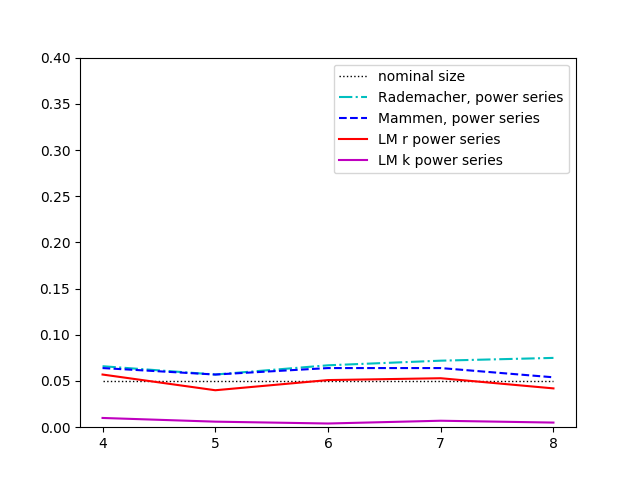} \includegraphics[scale=0.5]{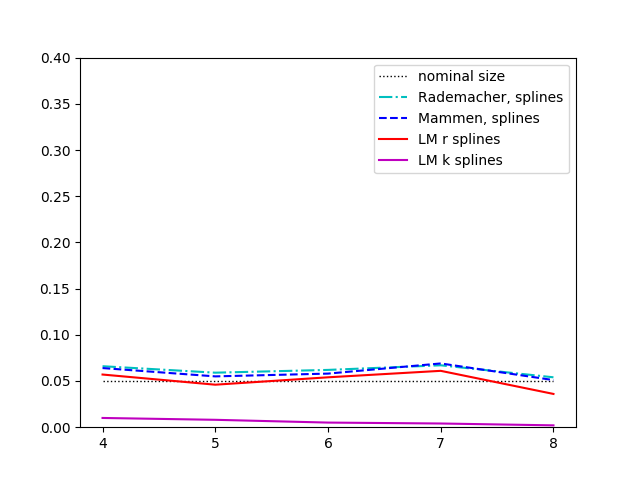}

\includegraphics[scale=0.5]{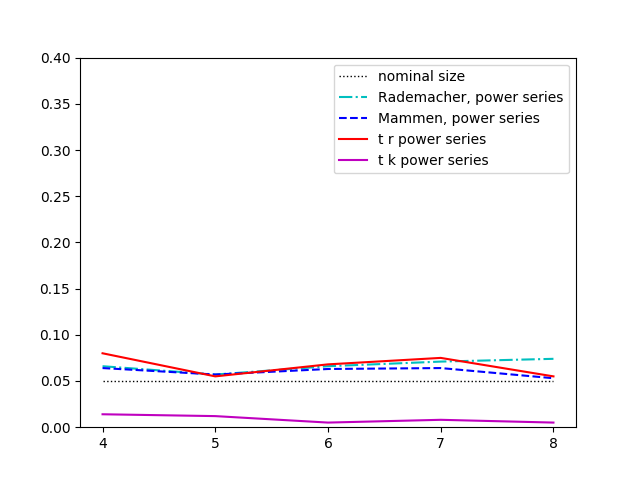} \includegraphics[scale=0.5]{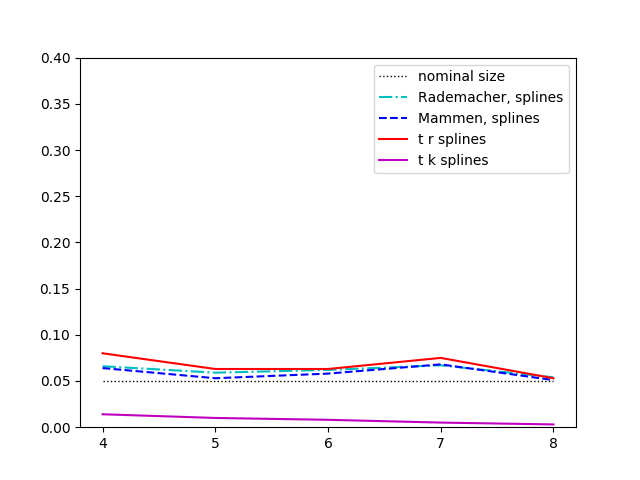}

\end{center}
\footnotesize{\vspace{-0.4cm}This figure plots the simulated size of the nominal 5\% test against the number of series terms in univariate series expansions, $a_n$ (including the constant). The left panel uses power series. The right panel uses splines. The upper panel uses the $\xi$ test statistic from Equation~\ref{xi}. The lower panel uses the $t$ test statistic from Equation~\ref{t_test_statistics}.

The red solid line corresponds to the test that uses the asymptotic critical values and normalization $\tau_n = r_n$. The magenta solid line corresponds to the test that uses the asymptotic critical values and normalization $\tau_n = k_n$. The cyan dash-dotted line corresponds to the test that uses the wild bootstrap critical values based on Rademacher distribution.  The blue dashed line corresponds to the test that uses the wild bootstrap critical values based on Mammen's distribution. The results are based on $M=1,000$ simulations and $B=399$ bootstrap iterations.}
\end{figure}

\begin{figure}[H]
\begin{center}
\caption{Simulated Size of the Test, $n=250$, $T = 4$}\label{fig_simulated_size_set2}
\includegraphics[scale=0.5]{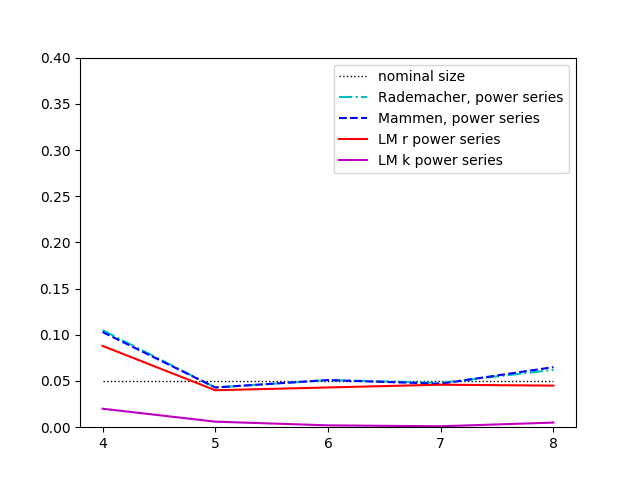} \includegraphics[scale=0.5]{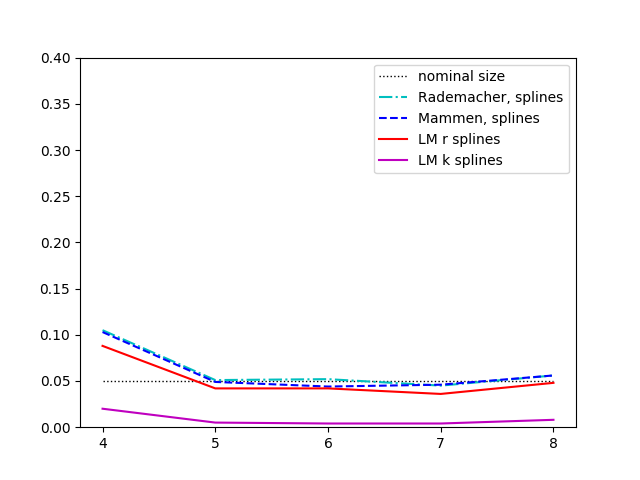}

\includegraphics[scale=0.5]{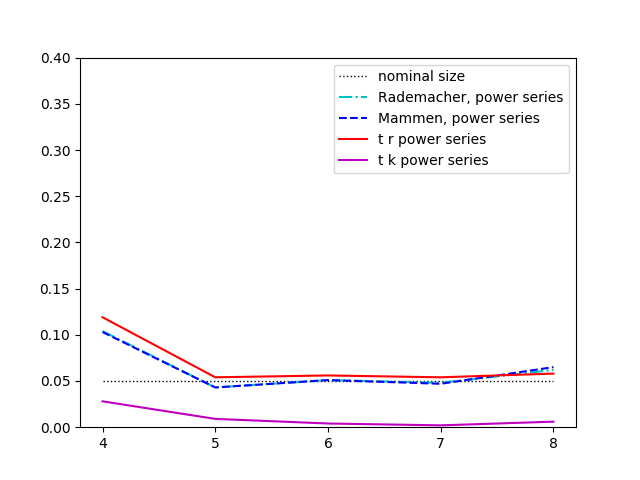} \includegraphics[scale=0.5]{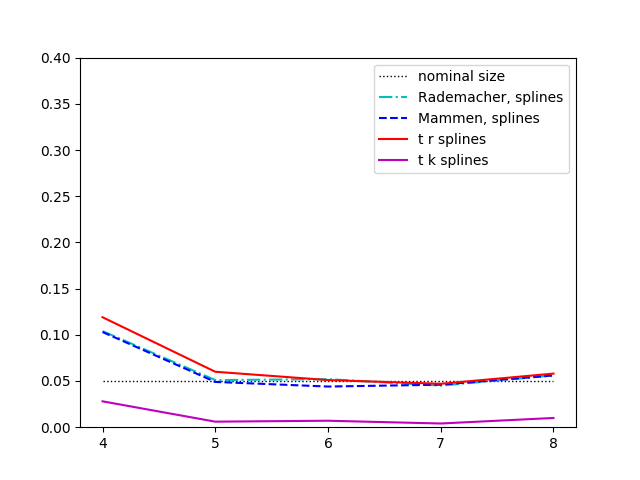}

\end{center}
\footnotesize{\vspace{-0.4cm}This figure plots the simulated size of the nominal 5\% test against the number of series terms in univariate series expansions, $a_n$ (including the constant). The left panel uses power series. The right panel uses splines. The upper panel uses the $\xi$ test statistic from Equation~\ref{xi}. The lower panel uses the $t$ test statistic from Equation~\ref{t_test_statistics}.

The red solid line corresponds to the test that uses the asymptotic critical values and normalization $\tau_n = r_n$. The magenta solid line corresponds to the test that uses the asymptotic critical values and normalization $\tau_n = k_n$. The cyan dash-dotted line corresponds to the test that uses the wild bootstrap critical values based on Rademacher distribution.  The blue dashed line corresponds to the test that uses the wild bootstrap critical values based on Mammen's distribution. The results are based on $M=1,000$ simulations and $B=399$ bootstrap iterations.}
\end{figure}

\begin{figure}[H]
\begin{center}
\caption{Simulated Size of the Test, $n=500$, $T = 2$}\label{fig_simulated_size_set3}
\includegraphics[scale=0.5]{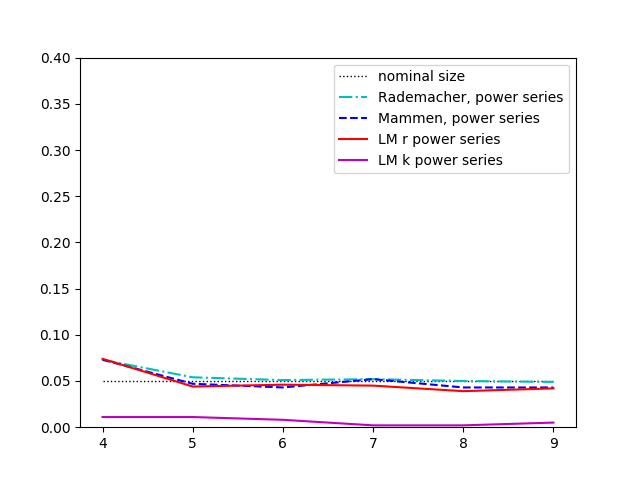} \includegraphics[scale=0.5]{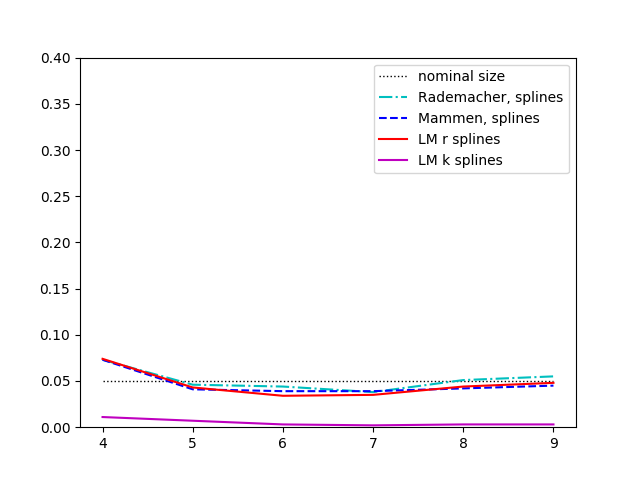}

\includegraphics[scale=0.5]{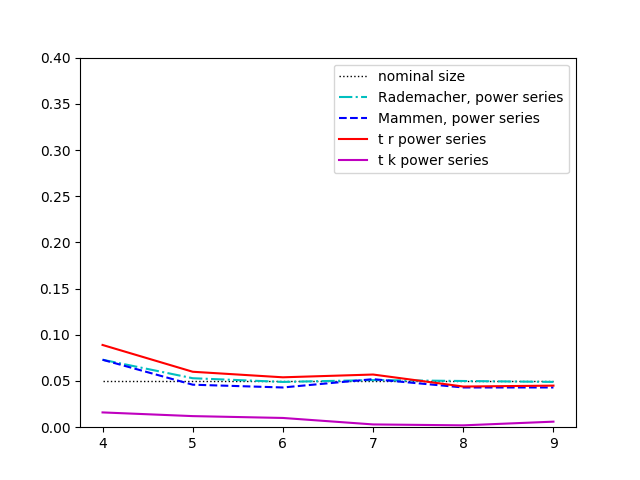} \includegraphics[scale=0.5]{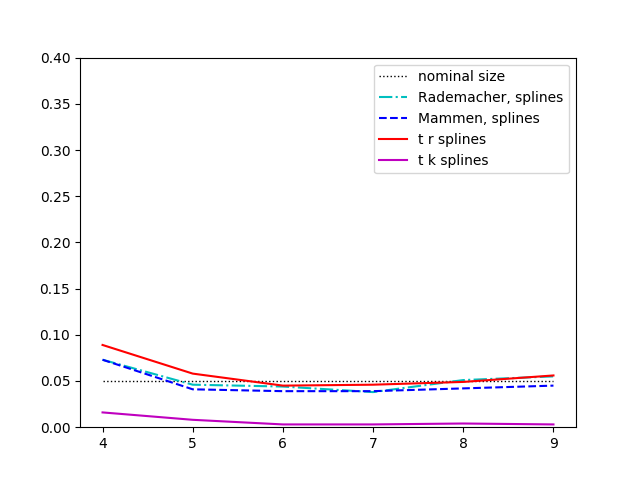}

\end{center}
\footnotesize{\vspace{-0.4cm}This figure plots the simulated size of the nominal 5\% test against the number of series terms in univariate series expansions, $a_n$ (including the constant). The left panel uses power series. The right panel uses splines. The upper panel uses the $\xi$ test statistic from Equation~\ref{xi}. The lower panel uses the $t$ test statistic from Equation~\ref{t_test_statistics}.

The red solid line corresponds to the test that uses the asymptotic critical values and normalization $\tau_n = r_n$. The magenta solid line corresponds to the test that uses the asymptotic critical values and normalization $\tau_n = k_n$. The cyan dash-dotted line corresponds to the test that uses the wild bootstrap critical values based on Rademacher distribution.  The blue dashed line corresponds to the test that uses the wild bootstrap critical values based on Mammen's distribution. The results are based on $M=1,000$ simulations and $B=399$ bootstrap iterations.}
\end{figure}

\begin{figure}[H]
\begin{center}
\caption{Simulated Size of the Test, $n=500$, $T = 4$}\label{fig_simulated_size_set4}
\includegraphics[scale=0.5]{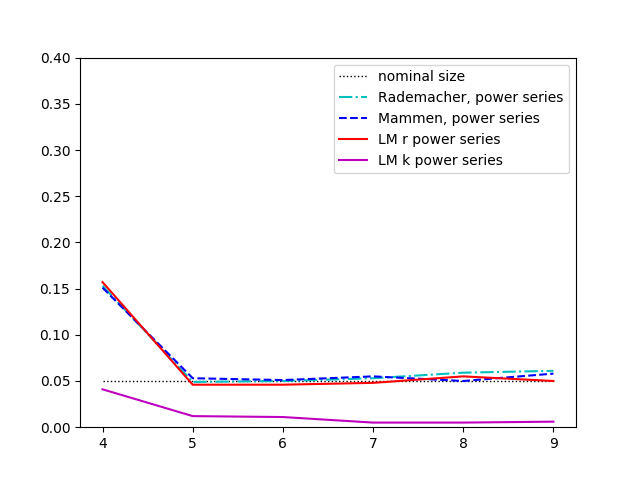} \includegraphics[scale=0.5]{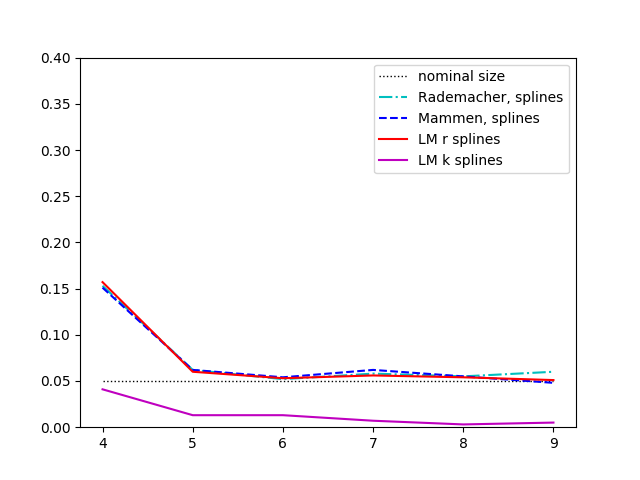}

\includegraphics[scale=0.5]{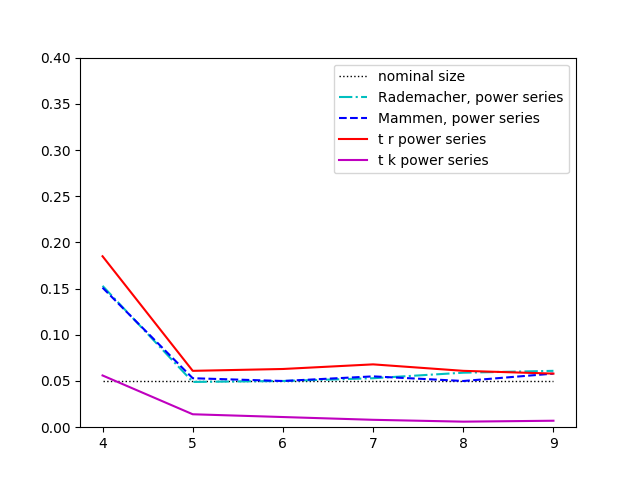} \includegraphics[scale=0.5]{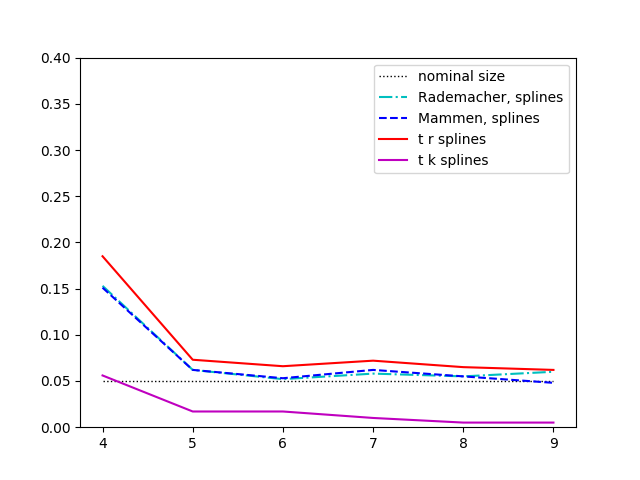}

\end{center}
\footnotesize{\vspace{-0.4cm}This figure plots the simulated size of the nominal 5\% test against the number of series terms in univariate series expansions, $a_n$ (including the constant). The left panel uses power series. The right panel uses splines. The upper panel uses the $\xi$ test statistic from Equation~\ref{xi}. The lower panel uses the $t$ test statistic from Equation~\ref{t_test_statistics}.

The red solid line corresponds to the test that uses the asymptotic critical values and normalization $\tau_n = r_n$. The magenta solid line corresponds to the test that uses the asymptotic critical values and normalization $\tau_n = k_n$. The cyan dash-dotted line corresponds to the test that uses the wild bootstrap critical values based on Rademacher distribution.  The blue dashed line corresponds to the test that uses the wild bootstrap critical values based on Mammen's distribution. The results are based on $M=1,000$ simulations and $B=399$ bootstrap iterations.}
\end{figure}

\begin{figure}[H]
\begin{center}
\caption{Simulated Power of the Test, $n=250$, $T = 2$}\label{fig_simulated_power_set1}
\includegraphics[scale=0.5]{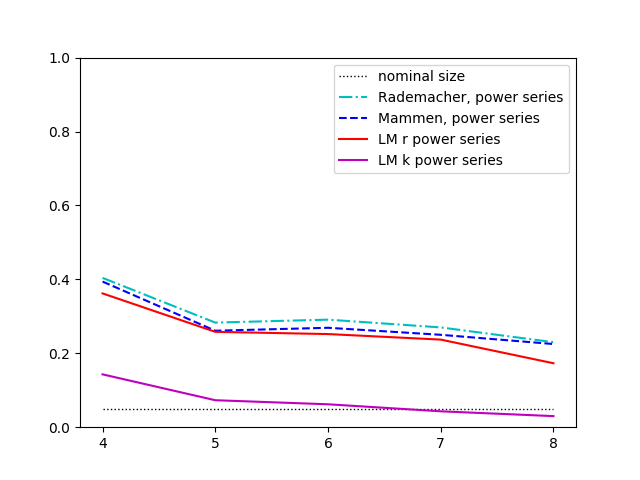} \includegraphics[scale=0.5]{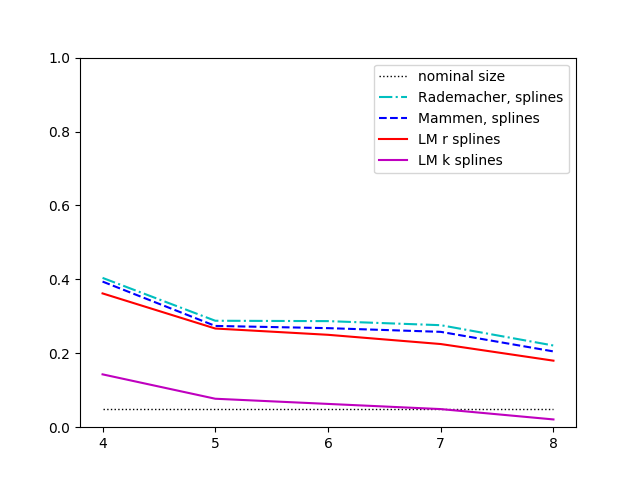}

\includegraphics[scale=0.5]{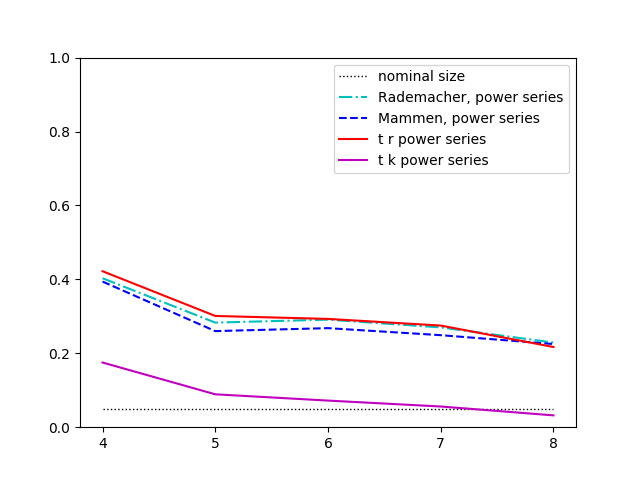} \includegraphics[scale=0.5]{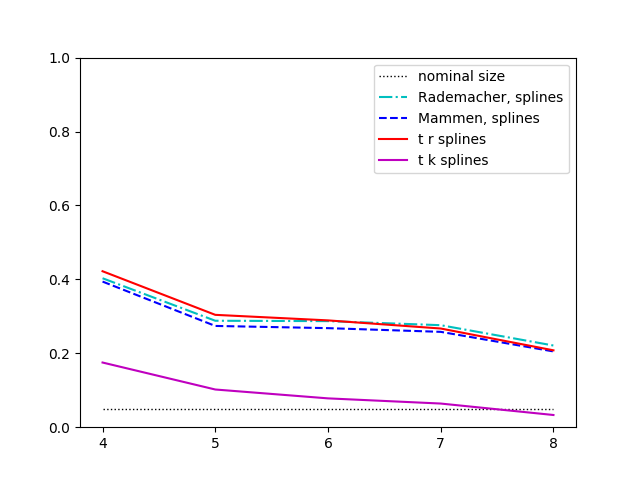}

\end{center}
\footnotesize{\vspace{-0.4cm}This figure plots the simulated power of the nominal 5\% test against the number of series terms in univariate series expansions, $a_n$ (including the constant). The left panel uses power series. The right panel uses splines. The upper panel uses the $\xi$ test statistic from Equation~\ref{xi}. The lower panel uses the $t$ test statistic from Equation~\ref{t_test_statistics}.

The red solid line corresponds to the test that uses the asymptotic critical values and normalization $\tau_n = r_n$. The magenta solid line corresponds to the test that uses the asymptotic critical values and normalization $\tau_n = k_n$. The cyan dash-dotted line corresponds to the test that uses the wild bootstrap critical values based on Rademacher distribution.  The blue dashed line corresponds to the test that uses the wild bootstrap critical values based on Mammen's distribution. The results are based on $M=1,000$ simulations and $B=399$ bootstrap iterations.}
\end{figure}

\begin{figure}[H]
\begin{center}
\caption{Simulated Power of the Test, $n=250$, $T = 4$}\label{fig_simulated_power_set2}
\includegraphics[scale=0.5]{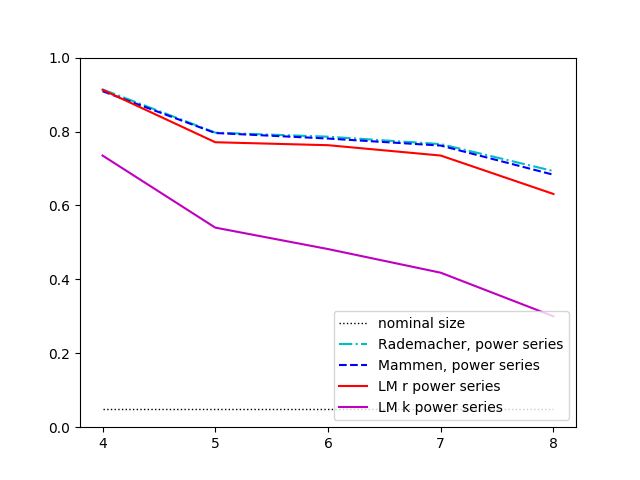} \includegraphics[scale=0.5]{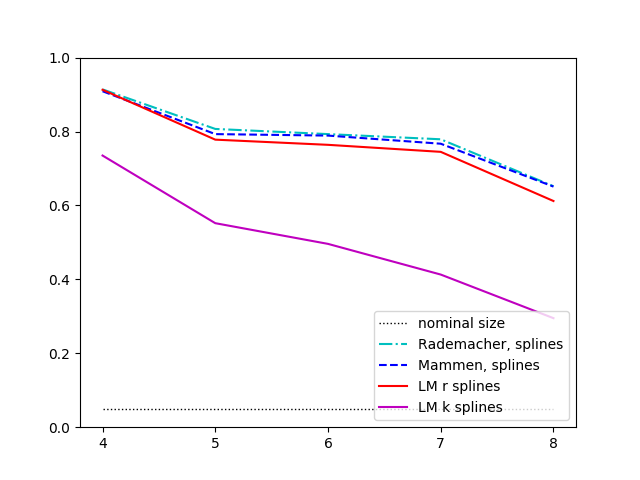}

\includegraphics[scale=0.5]{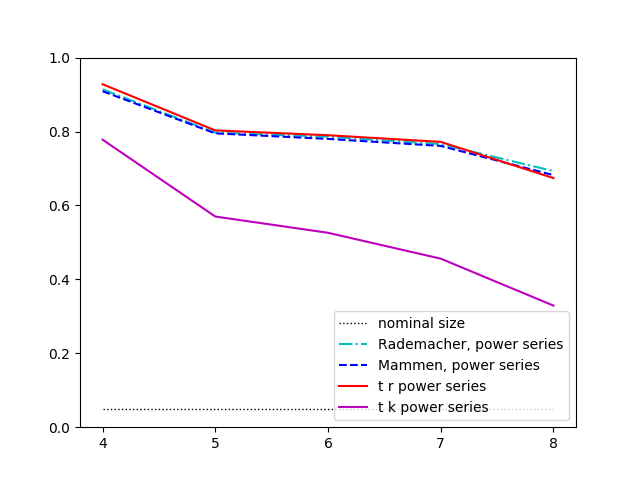} \includegraphics[scale=0.5]{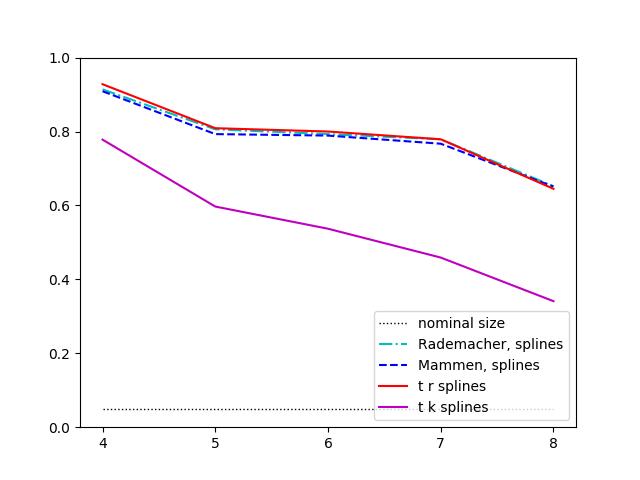}

\end{center}
\footnotesize{\vspace{-0.4cm}This figure plots the simulated power of the nominal 5\% test against the number of series terms in univariate series expansions, $a_n$ (including the constant). The left panel uses power series. The right panel uses splines. The upper panel uses the $\xi$ test statistic from Equation~\ref{xi}. The lower panel uses the $t$ test statistic from Equation~\ref{t_test_statistics}.

The red solid line corresponds to the test that uses the asymptotic critical values and normalization $\tau_n = r_n$. The magenta solid line corresponds to the test that uses the asymptotic critical values and normalization $\tau_n = k_n$. The cyan dash-dotted line corresponds to the test that uses the wild bootstrap critical values based on Rademacher distribution.  The blue dashed line corresponds to the test that uses the wild bootstrap critical values based on Mammen's distribution. The results are based on $M=1,000$ simulations and $B=399$ bootstrap iterations.}
\end{figure}

\begin{figure}[H]
\begin{center}
\caption{Simulated Power of the Test, $n=500$, $T = 2$}\label{fig_simulated_power_set3}
\includegraphics[scale=0.5]{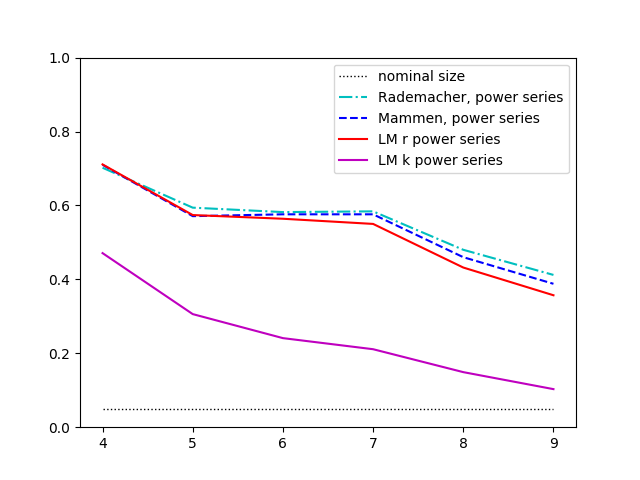} \includegraphics[scale=0.5]{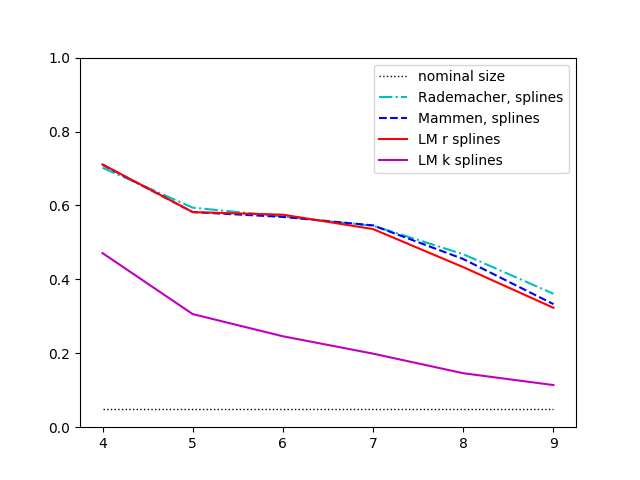}

\includegraphics[scale=0.5]{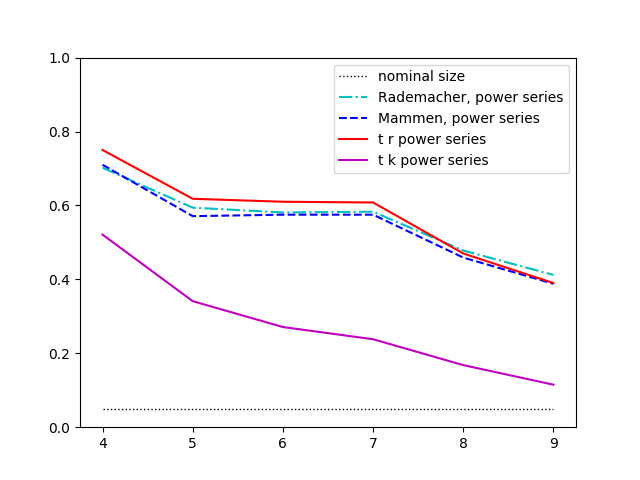} \includegraphics[scale=0.5]{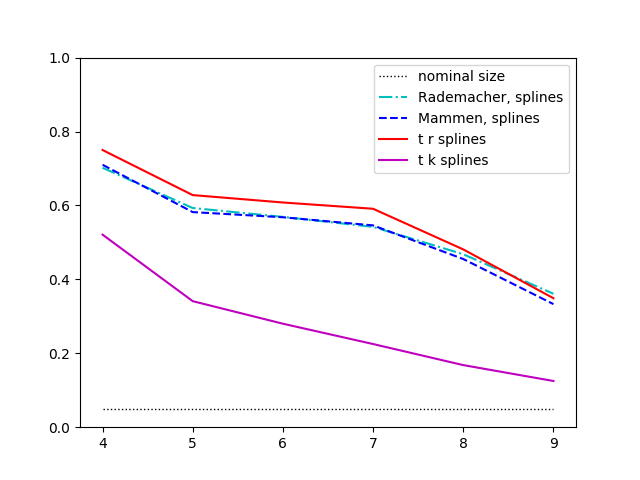}

\end{center}
\footnotesize{\vspace{-0.4cm}This figure plots the simulated power of the nominal 5\% test against the number of series terms in univariate series expansions, $a_n$ (including the constant). The left panel uses power series. The right panel uses splines. The upper panel uses the $\xi$ test statistic from Equation~\ref{xi}. The lower panel uses the $t$ test statistic from Equation~\ref{t_test_statistics}.

The red solid line corresponds to the test that uses the asymptotic critical values and normalization $\tau_n = r_n$. The magenta solid line corresponds to the test that uses the asymptotic critical values and normalization $\tau_n = k_n$. The cyan dash-dotted line corresponds to the test that uses the wild bootstrap critical values based on Rademacher distribution.  The blue dashed line corresponds to the test that uses the wild bootstrap critical values based on Mammen's distribution. The results are based on $M=1,000$ simulations and $B=399$ bootstrap iterations.}
\end{figure}

\begin{figure}[H]
\begin{center}
\caption{Simulated Power of the Test, $n=500$, $T = 4$}\label{fig_simulated_power_set4}
\includegraphics[scale=0.5]{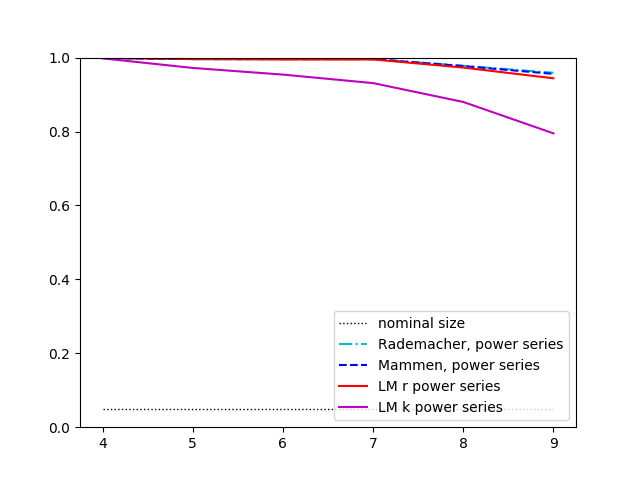} \includegraphics[scale=0.5]{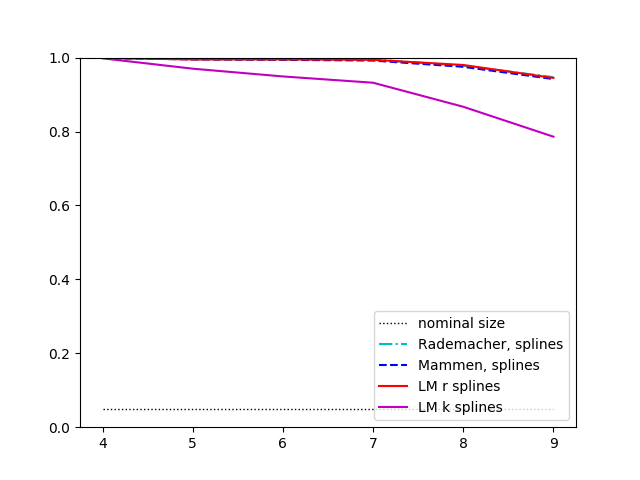}

\includegraphics[scale=0.5]{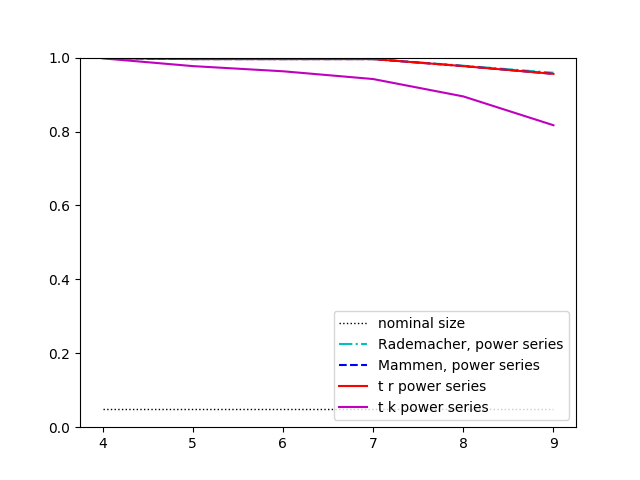} \includegraphics[scale=0.5]{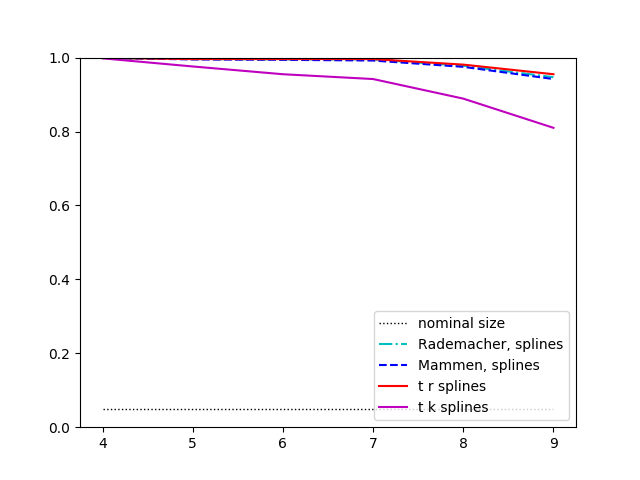}

\end{center}
\footnotesize{\vspace{-0.4cm}This figure plots the simulated power of the nominal 5\% test against the number of series terms in univariate series expansions, $a_n$ (including the constant). The left panel uses power series. The right panel uses splines. The upper panel uses the $\xi$ test statistic from Equation~\ref{xi}. The lower panel uses the $t$ test statistic from Equation~\ref{t_test_statistics}.

The red solid line corresponds to the test that uses the asymptotic critical values and normalization $\tau_n = r_n$. The magenta solid line corresponds to the test that uses the asymptotic critical values and normalization $\tau_n = k_n$. The cyan dash-dotted line corresponds to the test that uses the wild bootstrap critical values based on Rademacher distribution.  The blue dashed line corresponds to the test that uses the wild bootstrap critical values based on Mammen's distribution. The results are based on $M=1,000$ simulations and $B=399$ bootstrap iterations.}
\end{figure}

\begin{figure}[H]
\begin{center}
\caption{Simulated Size of the Test, Heteroskedastic Errors, $n=250$, $T = 2$}\label{fig_simulated_size_set1_hc}
\includegraphics[scale=0.5]{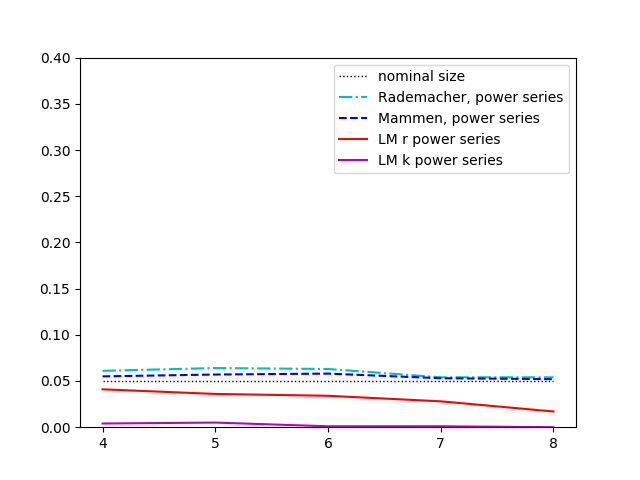} \includegraphics[scale=0.5]{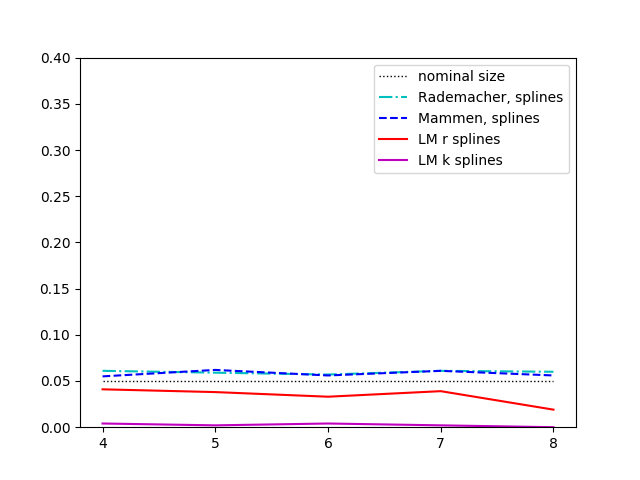}

\includegraphics[scale=0.5]{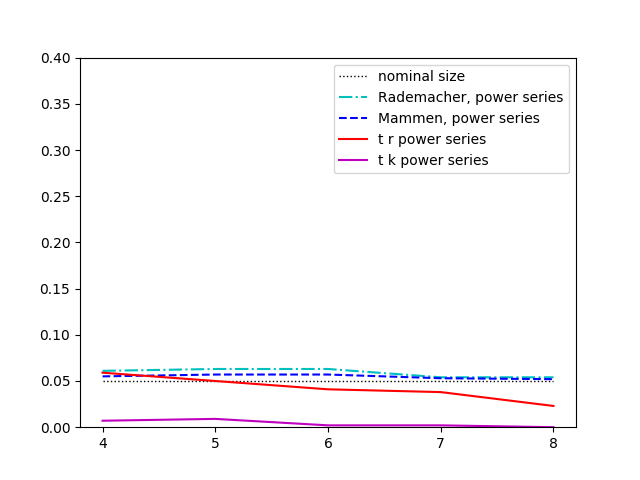} \includegraphics[scale=0.5]{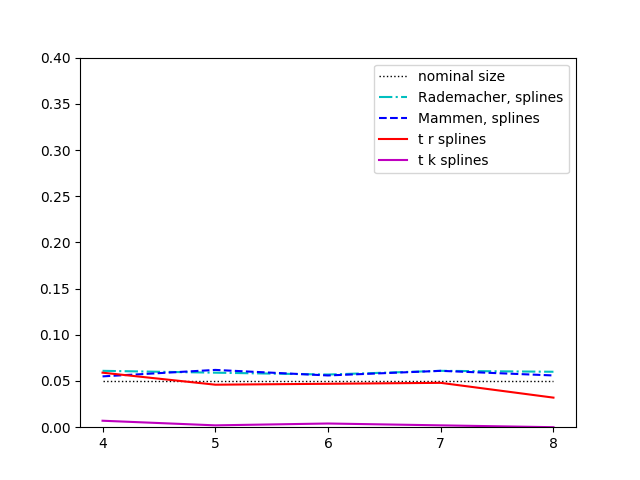}

\end{center}
\footnotesize{\vspace{-0.4cm}This figure plots the simulated size of the nominal 5\% test against the number of series terms in univariate series expansions, $a_n$ (including the constant). The left panel uses power series. The right panel uses splines. The upper panel uses the $\xi_{HC}$ test statistic from Equation~\ref{xi_hc}. The lower panel uses the $t$ test statistic from Equation~\ref{t_test_statistics}.

The red solid line corresponds to the test that uses the asymptotic critical values and normalization $\tau_n = r_n$. The magenta solid line corresponds to the test that uses the asymptotic critical values and normalization $\tau_n = k_n$. The cyan dash-dotted line corresponds to the test that uses the wild bootstrap critical values based on Rademacher distribution.  The blue dashed line corresponds to the test that uses the wild bootstrap critical values based on Mammen's distribution. The results are based on $M=1,000$ simulations and $B=399$ bootstrap iterations.}
\end{figure}

\begin{figure}[H]
\begin{center}
\caption{Simulated Size of the Test, Heteroskedastic Errors, $n=250$, $T = 4$}\label{fig_simulated_size_set2_hc}
\includegraphics[scale=0.5]{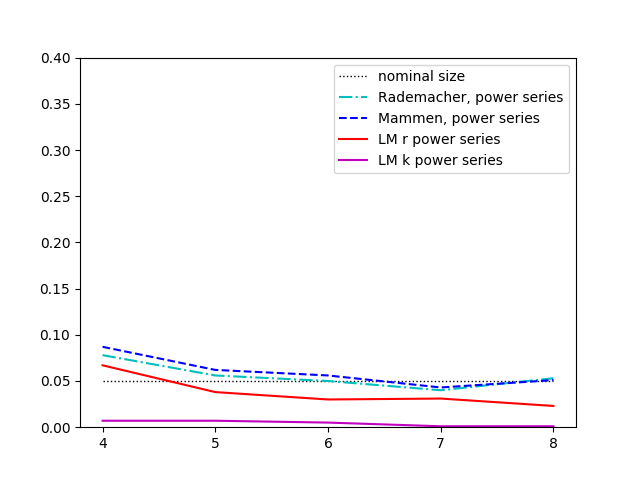} \includegraphics[scale=0.5]{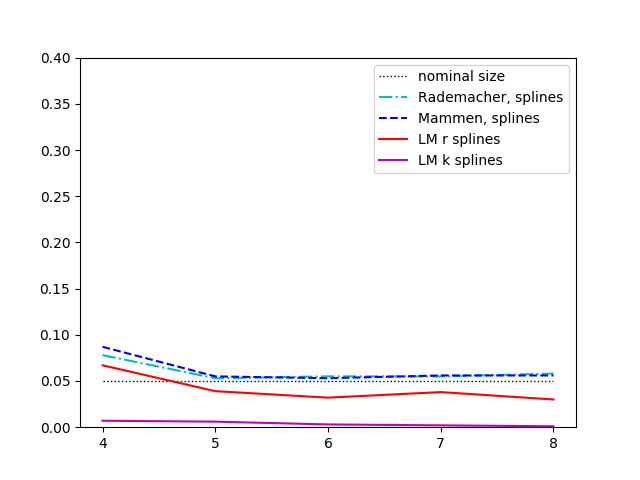}

\includegraphics[scale=0.5]{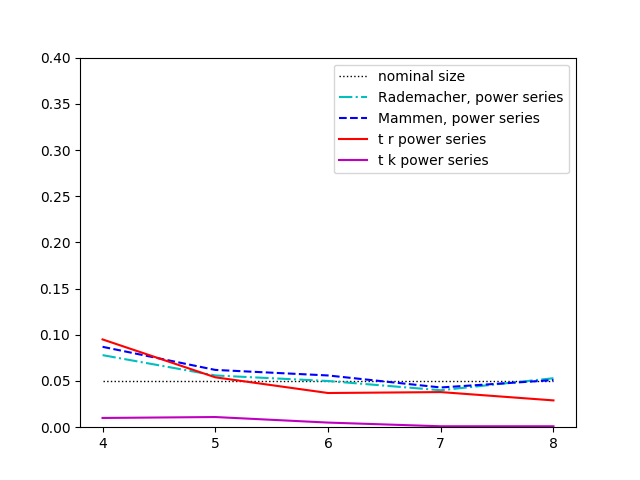} \includegraphics[scale=0.5]{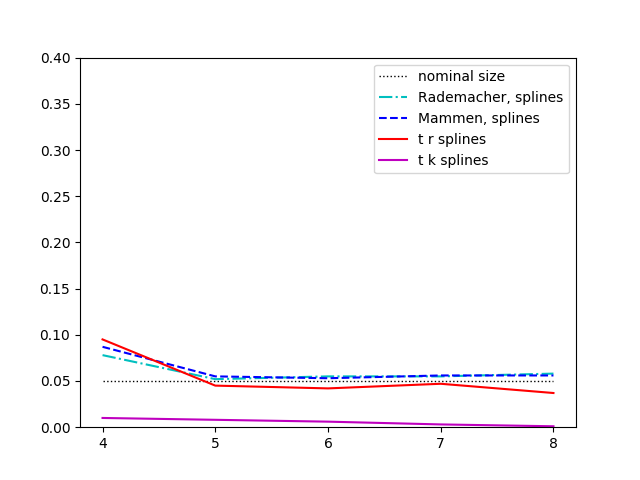}

\end{center}
\footnotesize{\vspace{-0.4cm}This figure plots the simulated size of the nominal 5\% test against the number of series terms in univariate series expansions, $a_n$ (including the constant). The left panel uses power series. The right panel uses splines. The upper panel uses the $\xi_{HC}$ test statistic from Equation~\ref{xi_hc}. The lower panel uses the $t$ test statistic from Equation~\ref{t_test_statistics}.

The red solid line corresponds to the test that uses the asymptotic critical values and normalization $\tau_n = r_n$. The magenta solid line corresponds to the test that uses the asymptotic critical values and normalization $\tau_n = k_n$. The cyan dash-dotted line corresponds to the test that uses the wild bootstrap critical values based on Rademacher distribution.  The blue dashed line corresponds to the test that uses the wild bootstrap critical values based on Mammen's distribution. The results are based on $M=1,000$ simulations and $B=399$ bootstrap iterations.}
\end{figure}

\begin{figure}[H]
\begin{center}
\caption{Simulated Size of the Test, Heteroskedastic Errors, $n=500$, $T = 2$}\label{fig_simulated_size_set3_hc}
\includegraphics[scale=0.5]{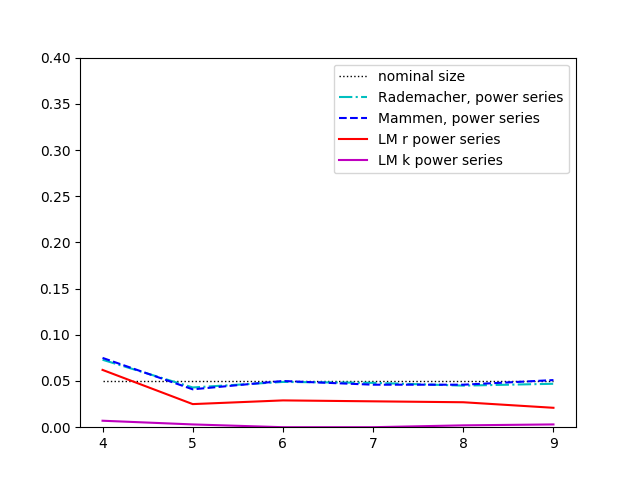} \includegraphics[scale=0.5]{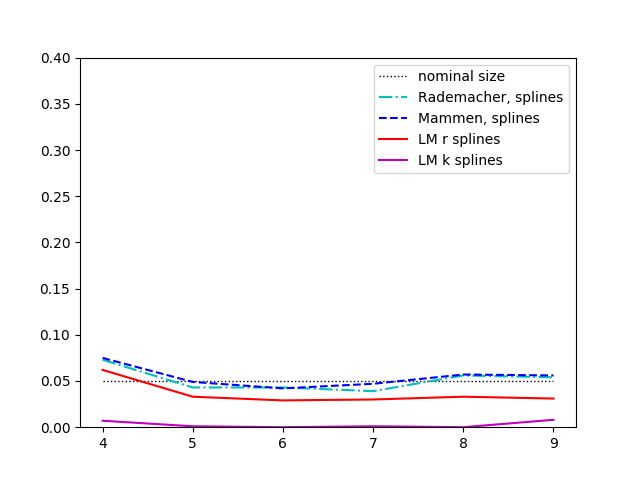}

\includegraphics[scale=0.5]{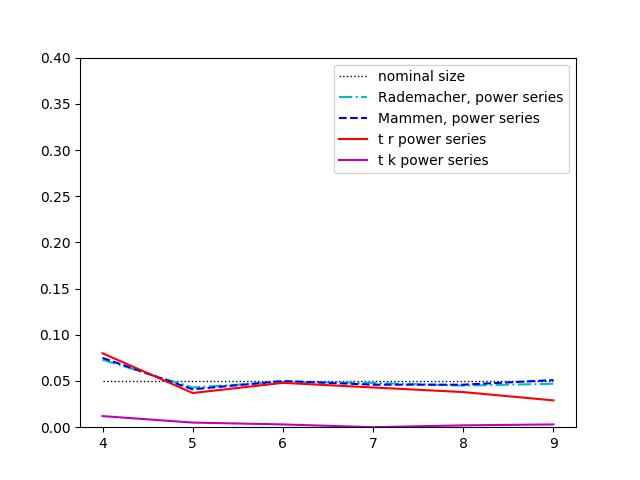} \includegraphics[scale=0.5]{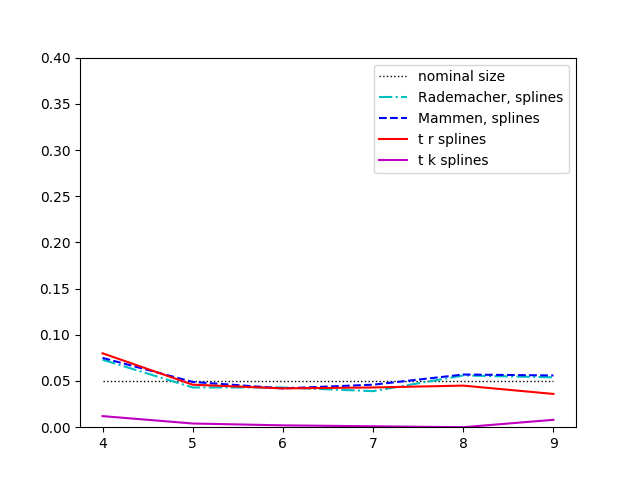}

\end{center}
\footnotesize{\vspace{-0.4cm}This figure plots the simulated size of the nominal 5\% test against the number of series terms in univariate series expansions, $a_n$ (including the constant). The left panel uses power series. The right panel uses splines. The upper panel uses the $\xi_{HC}$ test statistic from Equation~\ref{xi_hc}. The lower panel uses the $t$ test statistic from Equation~\ref{t_test_statistics}.

The red solid line corresponds to the test that uses the asymptotic critical values and normalization $\tau_n = r_n$. The magenta solid line corresponds to the test that uses the asymptotic critical values and normalization $\tau_n = k_n$. The cyan dash-dotted line corresponds to the test that uses the wild bootstrap critical values based on Rademacher distribution.  The blue dashed line corresponds to the test that uses the wild bootstrap critical values based on Mammen's distribution. The results are based on $M=1,000$ simulations and $B=399$ bootstrap iterations.}
\end{figure}

\begin{figure}[H]
\begin{center}
\caption{Simulated Size of the Test, Heteroskedastic Errors, $n=500$, $T = 4$}\label{fig_simulated_size_set4_hc}
\includegraphics[scale=0.5]{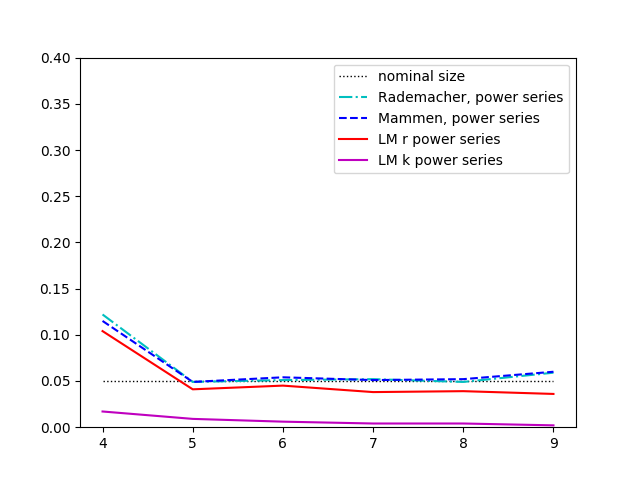} \includegraphics[scale=0.5]{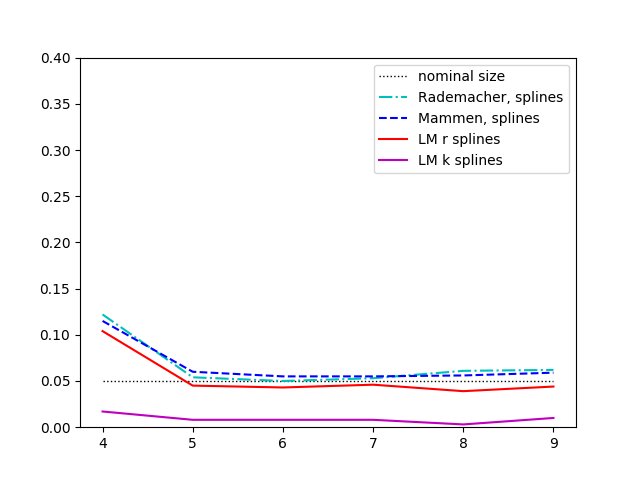}

\includegraphics[scale=0.5]{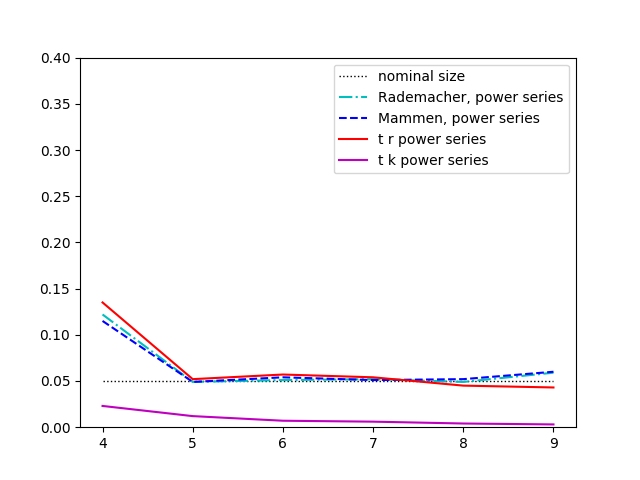} \includegraphics[scale=0.5]{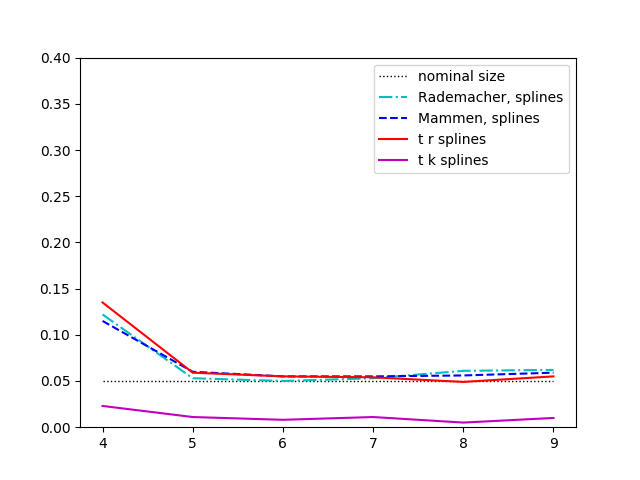}

\end{center}
\footnotesize{\vspace{-0.4cm}This figure plots the simulated size of the nominal 5\% test against the number of series terms in univariate series expansions, $a_n$ (including the constant). The left panel uses power series. The right panel uses splines. The upper panel uses the $\xi_{HC}$ test statistic from Equation~\ref{xi_hc}. The lower panel uses the $t$ test statistic from Equation~\ref{t_test_statistics}.

The red solid line corresponds to the test that uses the asymptotic critical values and normalization $\tau_n = r_n$. The magenta solid line corresponds to the test that uses the asymptotic critical values and normalization $\tau_n = k_n$. The cyan dash-dotted line corresponds to the test that uses the wild bootstrap critical values based on Rademacher distribution.  The blue dashed line corresponds to the test that uses the wild bootstrap critical values based on Mammen's distribution. The results are based on $M=1,000$ simulations and $B=399$ bootstrap iterations.}
\end{figure}

\begin{figure}[H]
\begin{center}
\caption{Simulated Power of the Test, Heteroskedastic Errors, $n=250$, $T = 2$}\label{fig_simulated_power_set1_hc}
\includegraphics[scale=0.5]{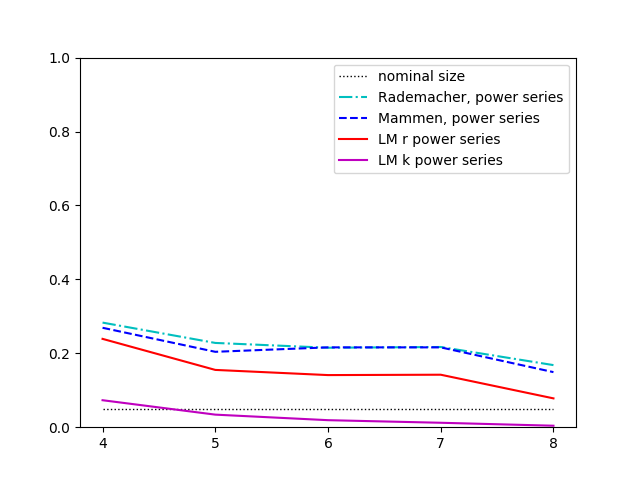} \includegraphics[scale=0.5]{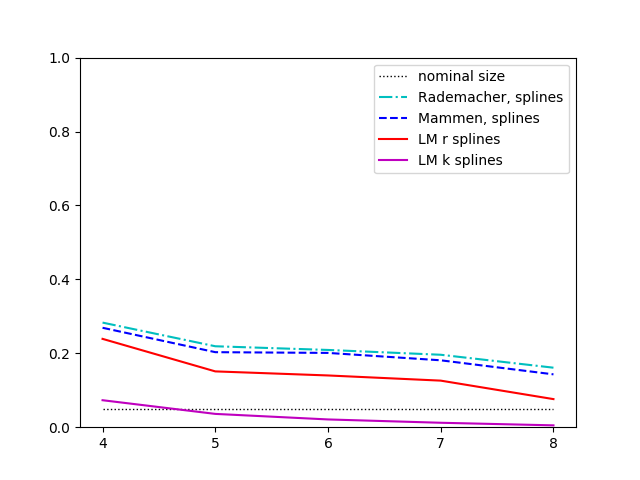}

\includegraphics[scale=0.5]{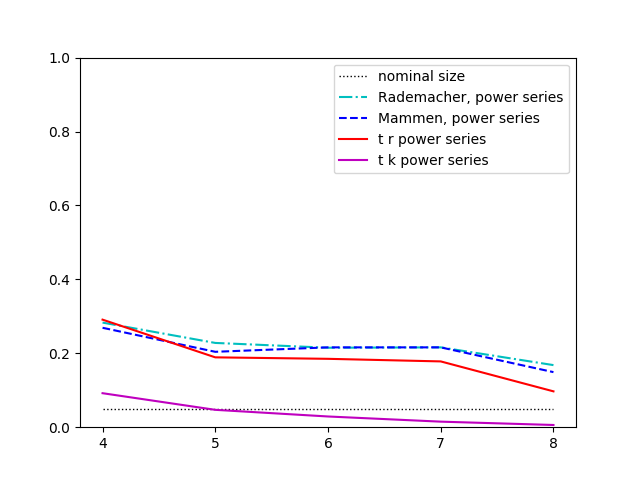} \includegraphics[scale=0.5]{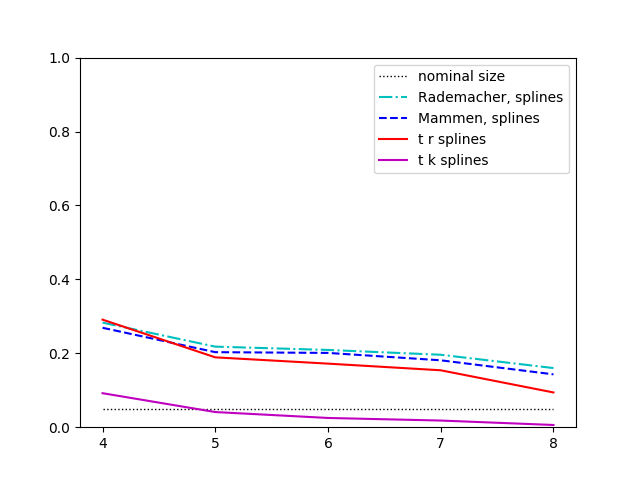}

\end{center}
\footnotesize{\vspace{-0.4cm}This figure plots the simulated power of the nominal 5\% test against the number of series terms in univariate series expansions, $a_n$ (including the constant). The left panel uses power series. The right panel uses splines. The upper panel uses the $\xi_{HC}$ test statistic from Equation~\ref{xi_hc}. The lower panel uses the $t$ test statistic from Equation~\ref{t_test_statistics}.

The red solid line corresponds to the test that uses the asymptotic critical values and normalization $\tau_n = r_n$. The magenta solid line corresponds to the test that uses the asymptotic critical values and normalization $\tau_n = k_n$. The cyan dash-dotted line corresponds to the test that uses the wild bootstrap critical values based on Rademacher distribution.  The blue dashed line corresponds to the test that uses the wild bootstrap critical values based on Mammen's distribution. The results are based on $M=1,000$ simulations and $B=399$ bootstrap iterations.}
\end{figure}

\begin{figure}[H]
\begin{center}
\caption{Simulated Power of the Test, Heteroskedastic Errors, $n=250$, $T = 4$}\label{fig_simulated_power_set2_hc}
\includegraphics[scale=0.5]{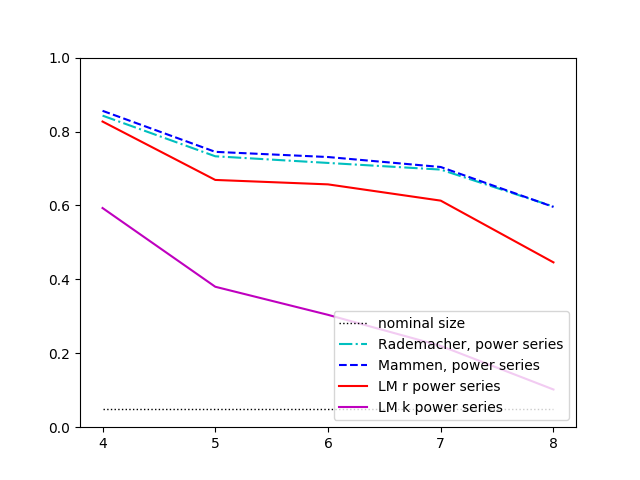} \includegraphics[scale=0.5]{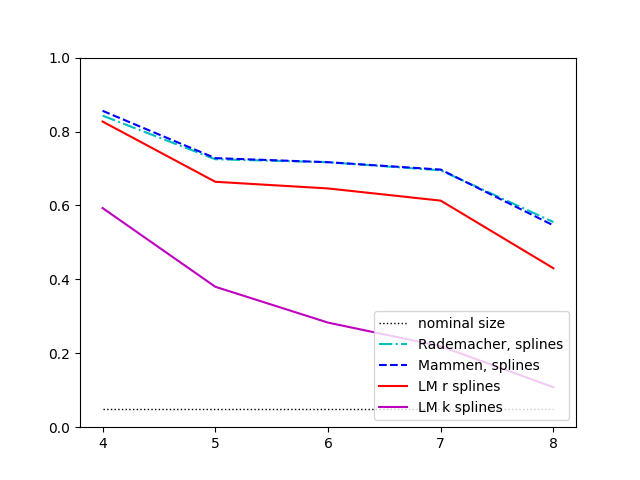}

\includegraphics[scale=0.5]{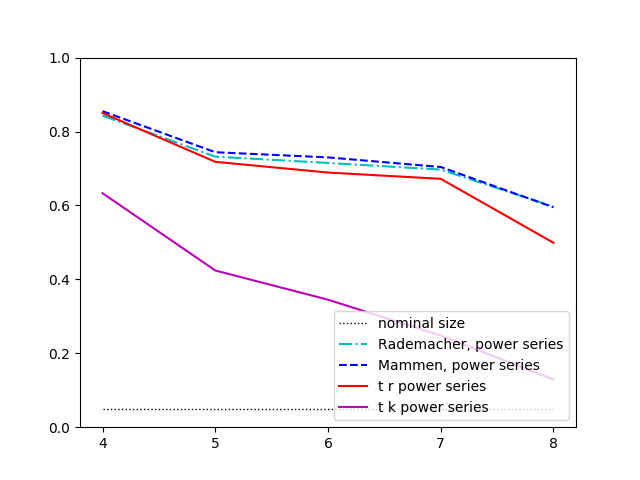} \includegraphics[scale=0.5]{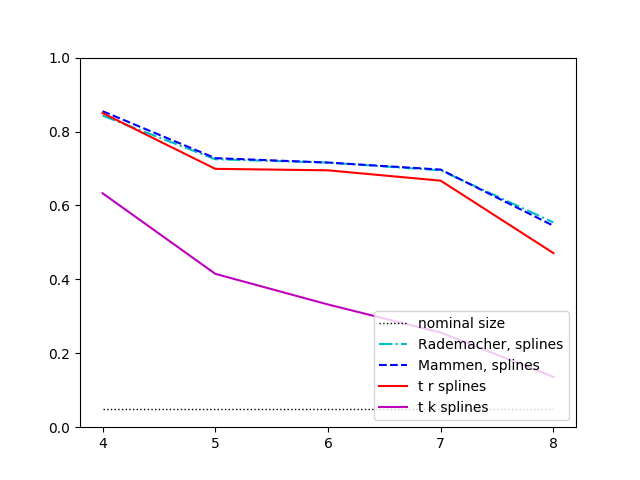}

\end{center}
\footnotesize{\vspace{-0.4cm}This figure plots the simulated power of the nominal 5\% test against the number of series terms in univariate series expansions, $a_n$ (including the constant). The left panel uses power series. The right panel uses splines. The upper panel uses the $\xi_{HC}$ test statistic from Equation~\ref{xi_hc}. The lower panel uses the $t$ test statistic from Equation~\ref{t_test_statistics}.

The red solid line corresponds to the test that uses the asymptotic critical values and normalization $\tau_n = r_n$. The magenta solid line corresponds to the test that uses the asymptotic critical values and normalization $\tau_n = k_n$. The cyan dash-dotted line corresponds to the test that uses the wild bootstrap critical values based on Rademacher distribution.  The blue dashed line corresponds to the test that uses the wild bootstrap critical values based on Mammen's distribution. The results are based on $M=1,000$ simulations and $B=399$ bootstrap iterations.}
\end{figure}

\begin{figure}[H]
\begin{center}
\caption{Simulated Power of the Test, Heteroskedastic Errors, $n=500$, $T = 2$}\label{fig_simulated_power_set3_hc}
\includegraphics[scale=0.5]{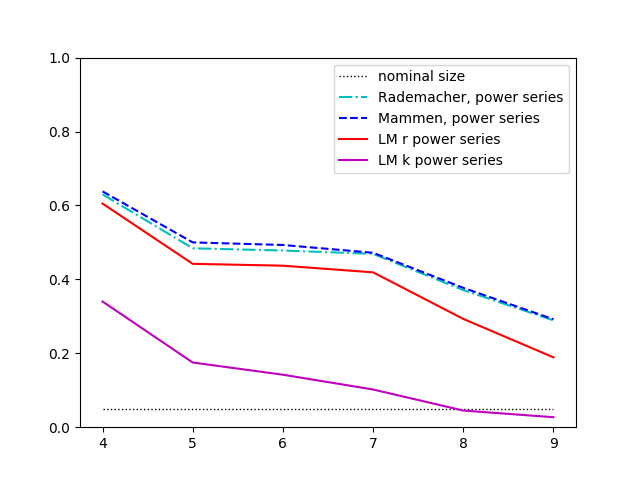} \includegraphics[scale=0.5]{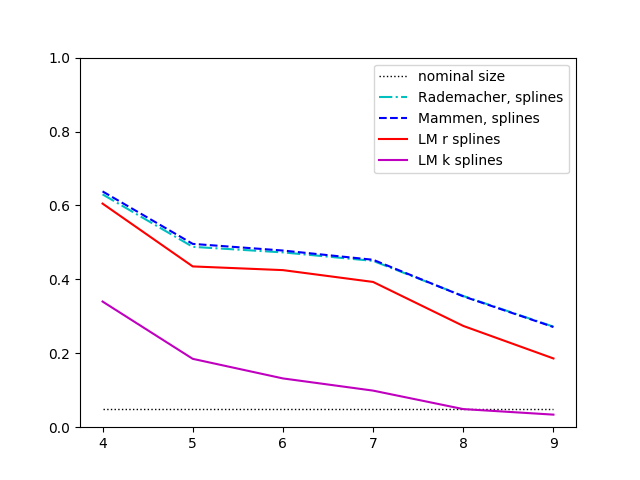}

\includegraphics[scale=0.5]{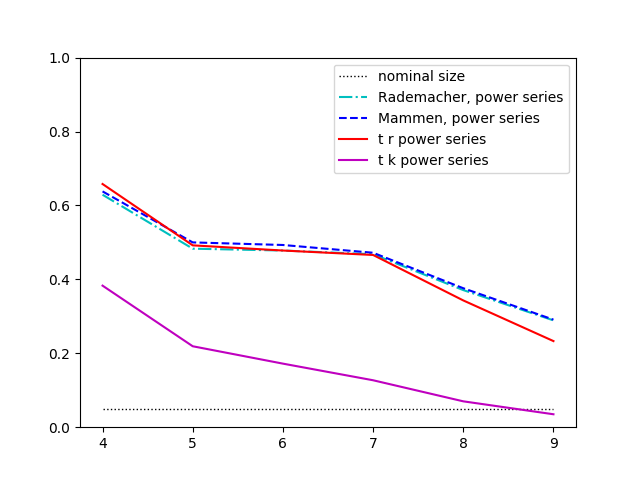} \includegraphics[scale=0.5]{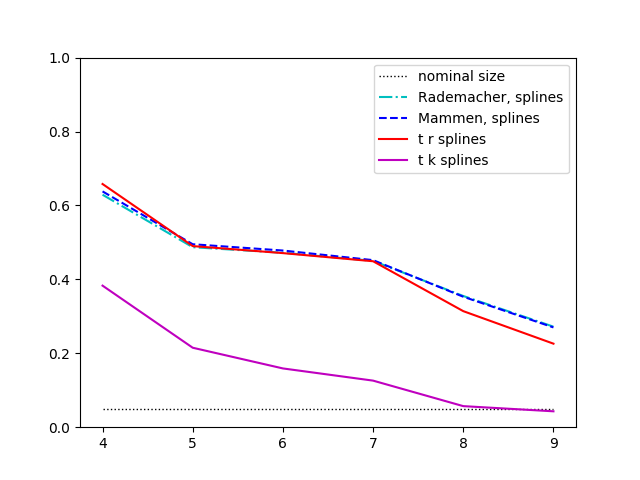}

\end{center}
\footnotesize{\vspace{-0.4cm}This figure plots the simulated power of the nominal 5\% test against the number of series terms in univariate series expansions, $a_n$ (including the constant). The left panel uses power series. The right panel uses splines. The upper panel uses the $\xi_{HC}$ test statistic from Equation~\ref{xi_hc}. The lower panel uses the $t$ test statistic from Equation~\ref{t_test_statistics}.

The red solid line corresponds to the test that uses the asymptotic critical values and normalization $\tau_n = r_n$. The magenta solid line corresponds to the test that uses the asymptotic critical values and normalization $\tau_n = k_n$. The cyan dash-dotted line corresponds to the test that uses the wild bootstrap critical values based on Rademacher distribution.  The blue dashed line corresponds to the test that uses the wild bootstrap critical values based on Mammen's distribution. The results are based on $M=1,000$ simulations and $B=399$ bootstrap iterations.}
\end{figure}

\begin{figure}[H]
\begin{center}
\caption{Simulated Power of the Test, Heteroskedastic Errors, $n=500$, $T = 4$}\label{fig_simulated_power_set4_hc}
\includegraphics[scale=0.5]{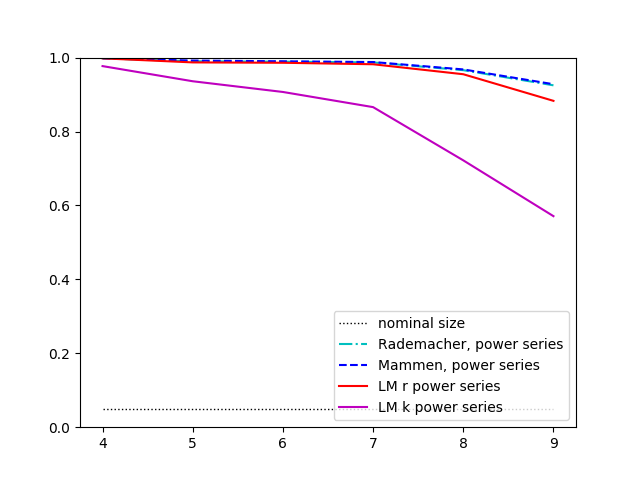} \includegraphics[scale=0.5]{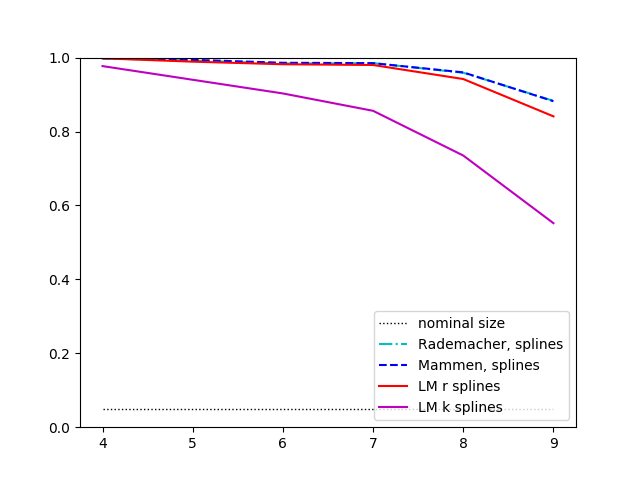}

\includegraphics[scale=0.5]{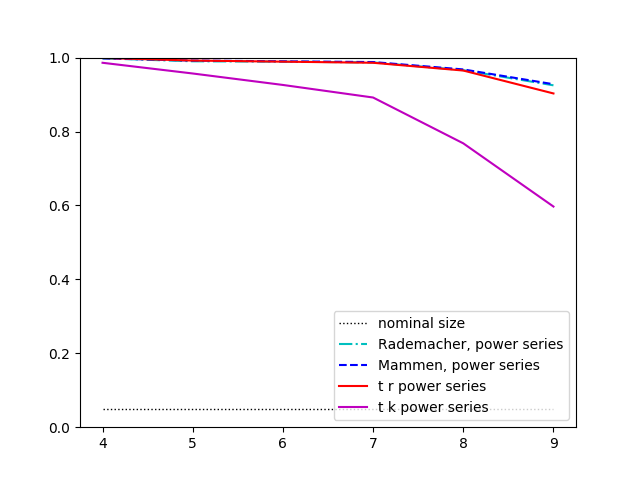} \includegraphics[scale=0.5]{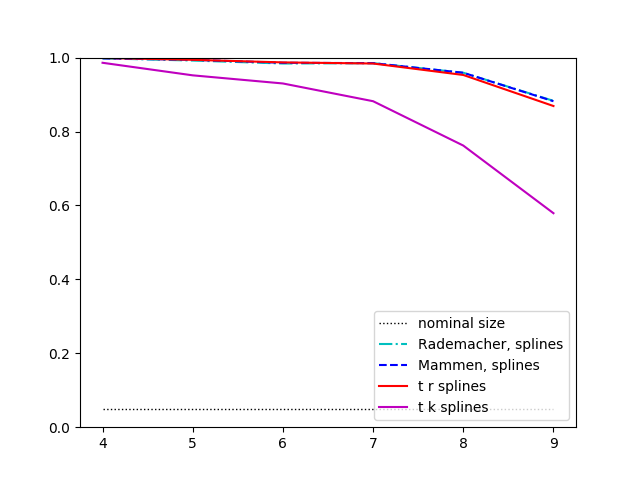}

\end{center}
\footnotesize{\vspace{-0.4cm}This figure plots the simulated power of the nominal 5\% test against the number of series terms in univariate series expansions, $a_n$ (including the constant). The left panel uses power series. The right panel uses splines. The upper panel uses the $\xi_{HC}$ test statistic from Equation~\ref{xi_hc}. The lower panel uses the $t$ test statistic from Equation~\ref{t_test_statistics}.

The red solid line corresponds to the test that uses the asymptotic critical values and normalization $\tau_n = r_n$. The magenta solid line corresponds to the test that uses the asymptotic critical values and normalization $\tau_n = k_n$. The cyan dash-dotted line corresponds to the test that uses the wild bootstrap critical values based on Rademacher distribution.  The blue dashed line corresponds to the test that uses the wild bootstrap critical values based on Mammen's distribution. The results are based on $M=1,000$ simulations and $B=399$ bootstrap iterations.}
\end{figure}

\begin{figure}[H]
\begin{center}
\caption{Experience Effects from Different Methods}\label{figure_exp_effects}
\includegraphics[scale=0.8]{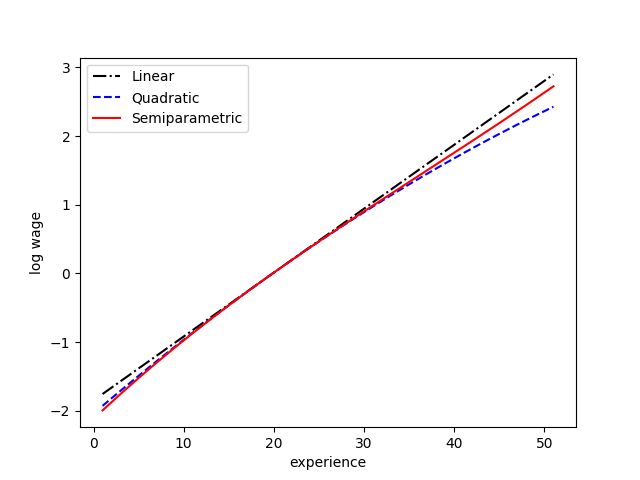}
\end{center}

\begin{singlespace}\footnotesize{This figure plots the estimated effects of experience from different models. The dash-dotted line shows the estimated experience effect from the linear model. The dashed line shows the estimated experience effect from the quadratic model. The solid line shows the estimated experience effect from the semiparametric model. The $x$ axis plots experience. The $y$ axis plots the logarithm of wage.}
\end{singlespace}
\end{figure}

\begin{table}[h]
\begin{center}
\caption{Simulated Size and Power of Data-Driven Test}\label{Tbl_data_driven}
\begin{tabular}{l | c  c  c | c c c}
 & \multicolumn{3}{c}{Power Series} & \multicolumn{3}{c}{Splines} \\ 
	&	Data-Driven	&	$a_n = 4$	&	$a_n = 9$	&	Data-Driven	&	$a_n = 4$	&	$a_n = 9$	\\ \hline\hline
	& \multicolumn{6}{c}{$n = 250, T = 2$} \\
Size	&	0.055	&	0.046	&	0.049	&	0.054	&	0.046	&	0.047	\\
Power, regular	&	0.369	&	0.363	&	0.196	&	0.369	&	0.363	&	0.201	\\
Power, orthogonal	&	0.105	&	0.051	&	0.224	&	0.096	&	0.051	&	0.225	\\ \hline\hline
													
	& \multicolumn{6}{c}{$n = 250, T = 4$} \\
Size	&	0.064	&	0.061	&	0.041	&	0.063	&	0.061	&	0.044	\\
Power, regular	&	0.825	&	0.825	&	0.622	&	0.825	&	0.825	&	0.621	\\
Power, orthogonal	&	0.447	&	0.062	&	0.654	&	0.440	&	0.062	&	0.644	\\\hline\hline
													
	& \multicolumn{6}{c}{$n = 500, T = 2$} \\
Size	&	0.058	&	0.054	&	0.055	&	0.056	&	0.054	&	0.054	\\
Power, regular	&	0.673	&	0.671	&	0.444	&	0.673	&	0.671	&	0.448	\\
Power, orthogonal	&	0.238	&	0.048	&	0.421	&	0.241	&	0.048	&	0.426	\\ \hline\hline
													
	& \multicolumn{6}{c}{$n = 500, T = 4$} \\
Size	&	0.041	&	0.037	&	0.052	&	0.041	&	0.037	&	0.051	\\
Power, regular &	0.990	&	0.990	&	0.935	&	0.990	&	0.990	&	0.932	\\
Power, orthogonal	&	0.856	&	0.033	&	0.933	&	0.843	&	0.033	&	0.939
\end{tabular}
\end{center}
\begin{singlespace}\footnotesize{The table reports the simulated size, power against the regular alternative, and power against the orthogonal alternative of the test based on the test statistic $\xi$. The left panel uses power series and the right panel uses splines. The column called ``Data-Driven'' uses the data-driven value of the tuning parameter $r_n$ as described in Section~\ref{simulations_data_driven}. Other columns use the fixed number of series terms $a_n$ equal to 4 or 9 (including the constant). The results are based on $M=1,000$ simulations.}
\end{singlespace}
\end{table}

\begin{table}[h]
\begin{center}
\caption{Specification Testing Results}\label{Tbl_testing}
\begin{tabular}{l | c | c | c}
& Test Statistic & 5\% Critical Value & 10\% Critical Value \\ \hline\hline
 \multicolumn{4}{c}{Quadratic Model} \\
$\xi_{HC}$ & 19.835 & 21.026 & 18.549 \\
$t_{HC}$ & 1.599 & 1.645 & 1.282 \\ \hline\hline
 \multicolumn{4}{c}{Linear Model} \\
$\xi_{HC}$ & 42.318 & 22.362 & 19.812 \\
$t_{HC}$ & 5.750 & 1.645 & 1.282 \\ \hline\hline
 \multicolumn{4}{c}{Semiparametric Model} \\
$\xi_{HC}$ & 15.957 & 19.675 & 17.275 \\
$t_{HC}$ & 1.057 & 1.645 & 1.282
\end{tabular}
\end{center}
\begin{singlespace}\footnotesize{The table reports the values of the test statistics $\xi_{HC} \overset{a}{\sim} \chi^2(r_n)$ and $t_{HC} \overset{a}{\sim} N(0,1)$ for the quadratic specification from Equation~\ref{quadratic_model}, linear specification, and semiparametric specification from Equation~\ref{semiparametric_model}. The number of restrictions is $r_n = 12$, $r_n = 13$, and $r_n = 11$ respectively. The corresponding critical values are shown together with the test statistics.}
\end{singlespace}
\end{table}

\clearpage

\section{Proofs}\label{appendix_proofs}

\subsection{Proof of Theorem~\ref{asy_distr_t_r_n_hc}}

The homoskedastic and heteroskedastic test statistics have different expressions, and it is convenient to introduce some notation that allows me to write them in a similar form. Denote $\hat{\Omega} = n^{-1} \sum_{i=1}^{n}{\tilde{Z}_i' \tilde{e}_i \tilde{e}_i' \tilde{Z}_i}$ in the heteroskedastic case or $\hat{\Omega} = n^{-1} \sum_{i=1}^{n}{\tilde{Z}_i' \tilde{\Sigma}_T \tilde{Z}_i}$ in the homoskedastic case. Then both test statistics can be written as
\[
\xi = \left( \sum_{i=1}^{n}{\tilde{e}_i' \tilde{Z}_i} \right) \left( n \hat{\Omega} \right)^{-1} \left( \sum_{i=1}^{n}{\tilde{Z}_i' \tilde{e}_i} \right)  = \tilde{e}' \tilde{Z}  \left( n \hat{\Omega} \right)^{-1} \tilde{Z}' \tilde{e} = (\hat{\varepsilon} + \hat{R})' M_{\hat{W}} \tilde{Z}  \left( n \hat{\Omega} \right)^{-1} \tilde{Z}' M_{\hat{W}} (\hat{\varepsilon} + \hat{R})
\]

Because $\tilde{Z} = M_{\hat{W}} \hat{Z}$ and $M_{\hat{W}} \tilde{Z} = M_{\hat{W}} M_{\hat{W}} \hat{Z} = M_{\hat{W}} \hat{Z} = \tilde{Z}$, the test statistic can be rewritten as
\[
\xi  = (\hat{\varepsilon} + \hat{R})' \tilde{Z}  \left( n \hat{\Omega} \right)^{-1} \tilde{Z}' (\hat{\varepsilon} + \hat{R})
\]

The proof consists of several steps.

Step 1. Decompose the test statistic and bound the remainder terms.
\[
(\hat{\varepsilon} + \hat{R})' \tilde{Z}  \left( n \hat{\Omega} \right)^{-1} \tilde{Z}' (\hat{\varepsilon} + \hat{R}) = 
 \hat{\varepsilon}' \tilde{Z}  \left( n \hat{\Omega} \right)^{-1} \tilde{Z}' \hat{\varepsilon} +  \hat{R}'  \tilde{Z}  \left( n \hat{\Omega} \right)^{-1} \tilde{Z}'  \hat{R}
 + 2  \hat{R}' \tilde{Z}  \left( n \hat{\Omega} \right)^{-1} \tilde{Z}' \hat{\varepsilon}
\]

By Lemma~\ref{lemma_PP}, the smallest and largest eigenvalues of $\tilde{Z} \tilde{Z}'/(nT)$ converge to one. Because $\tilde{Z}' \tilde{Z}/(nT)$ and $\tilde{Z} \tilde{Z}'/(nT)$ have the same nonzero eigenvalues, $\lambda_{\max}(\tilde{Z} \tilde{Z}'/(nT))$ converges in probability to 1. Moreover, the eigenvalues of $\hat{\Omega}$ are bounded below and above. Thus, by Assumption~\ref{series_approx},
\[
\hat{R}' \tilde{Z} (n \hat{\Omega})^{-1} \tilde{Z}' \hat{R} \leq T C \hat{R}' ((nT)^{-1} \tilde{Z} \tilde{Z}') \hat{R} \leq C \hat{R}' \hat{R} = O_p(n m_n^{-2\alpha}),
\]
where $T$ is absorbed by $C$ because the length of the panel is fixed.

Next,
\[
\Big{|} \hat{R}' \tilde{Z} (n \hat{\Omega})^{-1} \tilde{Z}' \hat{\varepsilon} \Big{|} \leq \Big{|} T C \lambda_{\max}(\tilde{Z} \tilde{Z}'/(nT)) \hat{R}' \hat{\varepsilon} \Big{|} \leq \Big{|} C \hat{R}' \hat{\varepsilon} \Big{|} = O_p(n^{1/2} m_n^{-\alpha})
\]

Thus,
\[
(\hat{\varepsilon} + \hat{R})' \tilde{Z}  \left( n \hat{\Omega} \right)^{-1} \tilde{Z}' (\hat{\varepsilon} + \hat{R}) =  \hat{\varepsilon}' \tilde{Z}  \left( n \hat{\Omega} \right)^{-1} \tilde{Z}' \hat{\varepsilon} + O_p(n m_n^{-2\alpha}) + O_p(n^{1/2} m_n^{-\alpha})
\]

Step 2. Further decompose the leading term and bound the new remainder terms.
\[
\hat{\varepsilon}' \tilde{Z}  \left( n \hat{\Omega} \right)^{-1} \tilde{Z}' \hat{\varepsilon} = \hat{\varepsilon}' \hat{Z} (n \hat{\Omega})^{-1} \hat{Z}' \hat{\varepsilon} - 2 \hat{\varepsilon}' P_{\hat{W}} \hat{Z} (n \hat{\Omega})^{-1} \hat{Z}' \hat{\varepsilon} + \hat{\varepsilon}' P_{\hat{W}} \hat{Z} (n \hat{\Omega})^{-1} \hat{Z}' P_{\hat{W}} \hat{\varepsilon}
\]

Let $\chi_i = \hat{Z}_i' \hat{\varepsilon}_i$, $\Omega = E[\hat{Z}_i' \hat{\varepsilon}_i \hat{\varepsilon}_i' \hat{Z}_i] = E[\chi_i \chi_i']$. Note that
\begin{align*}
&E[(n^{-1} \hat{\varepsilon}' \hat{Z}) \Omega^{-1} (n^{-1} \hat{Z}' \hat{\varepsilon})]/n =E[\chi_i' \Omega^{-1} \chi_i] = E[tr(\chi_i' \Omega^{-1} \chi_i)]/n \\
&= E[tr(\Omega^{-1} \chi_i \chi_i')]/n = tr(\Omega^{-1} E[\chi_i \chi_i'])/n = tr(I_{r_n}) /n= r_n/n
\end{align*}

Thus, by Markov's inequality,
\[
\| \Omega^{-1} (n^{-1} \hat{Z}' \hat{\varepsilon}) \| \leq C \sqrt{(n^{-1} \hat{\varepsilon}' \hat{Z}) \Omega^{-1} (n^{-1} \hat{Z}' \hat{\varepsilon})} = O_p(\sqrt{r_n/n})
\]

Because the eigenvalues of $\Omega$ are bounded below and above w.p.a. 1, it is also true that $\| n^{-1} \hat{Z}' \hat{\varepsilon} \| = O_p(\sqrt{r_n/n})$. Similarly, $\| n^{-1} \hat{W}' \hat{\varepsilon} \| = O_p(\sqrt{m_n/n})$. Using this result and the inequality $\|A B \|^2 \leq \| A \|^2 \| B \|^2$, get
\begin{align*}
&\Big{\|} \hat{\varepsilon}' P_{\hat{W}} \hat{Z} (n \hat{\Omega})^{-1} \hat{Z}' \hat{\varepsilon} \Big{\|} = \Big{\|} \hat{\varepsilon}' \hat{W} (\hat{W}'\hat{W})^{-1} \hat{W}' \hat{Z} (n \hat{\Omega})^{-1} \hat{T}' \hat{\varepsilon} \Big{\|} \\
&= \Big{\|} (nT^2) \left( \hat{\varepsilon}' \hat{W}/(nT) \right) \left( \hat{W}'\hat{W}/(nT) \right)^{-1} \left( \hat{W}' \hat{T}/(nT) \right) \hat{\Omega}^{-1} \left( \hat{T}' \hat{\varepsilon}/(nT) \right) \Big{\|} \\
&\leq C n \Big{\|} \left( \hat{\varepsilon}' \hat{W}/(nT) \right) \left( \hat{W}' \hat{Z}/(nT) \right) \left( \hat{Z}' \hat{\varepsilon}/(nT) \right) \Big{\|} \\
&\leq C n \Big{\|} \hat{\varepsilon}' \hat{W}/(nT) \Big{\|} \; \Big{\|} \hat{W}' \hat{Z}/(nT) \Big{\|} \; \Big{\|} \hat{Z}' \hat{\varepsilon}/(nT) \Big{\|} \\
&= n O_p(\sqrt{m_n/n}) O_p(\zeta(k_n) \sqrt{k_n/n}) O_p(\sqrt{r_n/n}) = O_p(\zeta(k_n) \sqrt{m_n k_n r_n/n})
\end{align*}

In turn,
\begin{align*}
&\Big{\|} \hat{\varepsilon}' P_{\hat{W}} \hat{Z} (n \hat{\Omega})^{-1} \hat{Z}' P_{\hat{W}} \hat{\varepsilon} \Big{\|} = \Big{\|} \hat{\varepsilon} \hat{W} (\hat{W}'\hat{W})^{-1} \hat{W}' \hat{Z} (n \hat{\Omega})^{-1} \hat{Z}' \hat{W} (\hat{W}'\hat{W})^{-1} \hat{W}' \hat{\varepsilon} \Big{\|} \\
&= \Big{\|} (nT^2) \left( \hat{\varepsilon}' \hat{W}/(nT) \right) \left( \hat{W}' \hat{W}/(nT) \right)^{-1} \left( \hat{W}' \hat{Z}/(nT) \right) \hat{\Omega}^{-1} \\
&\left( \hat{Z}' \hat{W}/(nT) \right) \left( \hat{W}' \hat{W}/(nT) \right)^{-1} \left( \hat{W}' \hat{\varepsilon}/(nT) \right) \Big{\|} \\
&\leq C n \Big{\|} \left( \hat{\varepsilon}' \hat{W}/(nT) \right) \left( \hat{W}' \hat{Z}/(nT) \right) \left( \hat{Z}' \hat{W}/(nT) \right) \left( \hat{W}' \hat{\varepsilon}/(nT) \right) \Big{\|} \\
&\leq C n \Big{\|} \hat{\varepsilon}' \hat{W}/(nT) \Big{\|} \; \Big{\|} \hat{W}' \hat{Z}/(nT) \Big{\|} \; \Big{\|} \hat{Z}' \hat{W}/(nT) \Big{\|} \; \Big{\|} \hat{W}' \hat{\varepsilon}/(nT) \Big{\|} \\
&= n O_p(\sqrt{m_n/n}) O_p(\zeta(k_n) \sqrt{k_n/n}) O_p(\zeta(k_n) \sqrt{k_n/n}) O_p(\sqrt{m_n/n}) = O_p(\zeta(k_n)^2 m_n k_n/n)
\end{align*}

Thus,
\begin{align}\label{eqn_remainders}
\begin{split}
&(\hat{\varepsilon} + \hat{R})' \tilde{Z}  \left( n \hat{\Omega} \right)^{-1} \tilde{Z}' (\hat{\varepsilon} + \hat{R}) = \hat{\varepsilon}' \hat{Z} (n \hat{\Omega})^{-1} \hat{Z}' \hat{\varepsilon} + O_p(n m_n^{-2\alpha}) + O_p(n^{1/2} m_n^{-\alpha}) \\
&+ O_p(\zeta(k_n) \sqrt{m_n k_n r_n/n}) + O_p(\zeta(k_n)^2 m_n k_n/n) = \hat{\varepsilon}' \hat{Z} (n \hat{\Omega})^{-1} \hat{Z}' \hat{\varepsilon} + o_p(\sqrt{r_n})
\end{split}
\end{align}

Step 3. Deal with the leading term. 

As shown in Lemma~\ref{diff_r_n_small},
\begin{align}\label{eqn_t_r_3_hc}
\frac{n (n^{-1} \hat{\varepsilon}' \hat{Z}) \hat{\Omega}^{-1} (n^{-1} \hat{Z}' \hat{\varepsilon}) - n (n^{-1} \hat{\varepsilon}' \hat{Z}) \Omega^{-1} (n^{-1} \hat{Z}' \hat{\varepsilon})}{\sqrt{r_n}} \overset{p}{\to} 0
\end{align}

Note that
\[
\hat{\varepsilon}' \hat{Z} (n \Omega)^{-1} \hat{Z}' \hat{\varepsilon} = n^{-1} \sum_{i=1}^{n}{\sum_{j = 1}^{n}{\hat{\varepsilon}_i' \hat{Z}_i \Omega^{-1} \hat{Z}_j' \hat{\varepsilon}_j}} = n^{-1} \sum_{i=1}^{n}{\hat{\varepsilon}_i' \hat{Z}_i \Omega^{-1} \hat{Z}_i' \hat{\varepsilon}_i} + 2 n^{-1} \sum_{i=1}^{n-1}{\sum_{j > i}{\hat{\varepsilon}_i' \hat{Z}_i \Omega^{-1} \hat{Z}_j' \hat{\varepsilon}_j}}
\]

Thus,
\begin{align}\label{eqn_t}
t_{*} = \frac{\xi_{*} - r_n}{\sqrt{2 r_n}} = \frac{n^{-1} \sum_{i=1}^{n}{\hat{\varepsilon}_i' \hat{Z}_i \Omega^{-1} \hat{Z}_i' \hat{\varepsilon}_i} - r_n}{\sqrt{2 r_n}} + \frac{\sqrt{2} \sum_{i=1}^{n-1}{\sum_{j > i}{\hat{\varepsilon}_i' \hat{Z}_i \Omega^{-1} \hat{Z}_j' \hat{\varepsilon}_j}}}{\sqrt{n^2 r_n}} = t_1 + t_2,
\end{align}
where
\begin{align*}
t_1 &= \frac{n^{-1} \sum_{i=1}^{n}{\hat{\varepsilon}_i' \hat{Z}_i \Omega^{-1} \hat{Z}_i' \hat{\varepsilon}_i} - r_n}{\sqrt{2 r_n}} \\
t_2 &= \frac{\sqrt{2} \sum_{i=1}^{n-1}{\sum_{j > i}{\hat{\varepsilon}_i' \hat{Z}_i \Omega^{-1} \hat{Z}_j' \hat{\varepsilon}_j}}}{\sqrt{n^2 r_n}}
\end{align*}

Note that
\[
E[t_1] = \frac{E[\hat{\varepsilon}_i' \hat{Z}_i \Omega^{-1} \hat{Z}_i' \hat{\varepsilon}_i] - r_n}{\sqrt{2 r_n}} = \frac{E[\chi_i' \Omega^{-1} \chi_i] - r_n}{\sqrt{2 r_n}} = 0,
\]
because
\[
E[\chi_i' \Omega^{-1} \chi_i] = r_n
\]

Next,
\begin{align*}
Var(t_1) &\leq E[(\chi_i' \Omega^{-1} \chi_i)^2]/(2n r_n) \leq C E[\| \chi_i \|^4]/(n r_n) \leq C E[\| \hat{\varepsilon}_i \|^4 \| \hat{Z}_i \|^4]/(n r_n) \\
&\leq C E[\| \hat{Z}_i \|^4]/(n r_n) \leq C \zeta(r_n)^2 r_n/(2n r_n) = C \zeta(r_n)^2/n \to 0,
\end{align*}
so, by Markov's inequality, $t_1 \overset{p}{\to} 0$.

Next, note that $t_2 = \sum_{i=1}^{n-1}{\sum_{j > i}{H_n(\chi_i,\chi_j)}}$, where
\[
H_n(\chi_i,\chi_j) = \sqrt{\frac{2}{n^2 r_n}} \chi_i' \Omega^{-1} \chi_j = \sqrt{\frac{2}{n^2 r_n}} \hat{\varepsilon}_i' \hat{Z}_i \Omega^{-1} \hat{Z}_j' \hat{\varepsilon}_j
\]
Then
\begin{align*}
G_n(u,v) &= E[H_n(\chi_1,u) H_n(\chi_1,v)] = \frac{2}{n^2 r_n} E[\chi_1' \Omega^{-1} u \chi_1' \Omega^{-1} v] \\
&= \frac{2}{n^2 r_n} E[u' \Omega^{-1} \chi_1 \chi_1' \Omega^{-1} v] = \frac{2}{n^2 r_n} u' \Omega^{-1} v = \sqrt{\frac{2}{n^2 r_n}} H_n(u,v)
\end{align*}

Note that $E[H_n(\chi_1,\chi_2)|\chi_1] = \chi_1' \Omega E[\chi_2] = 0$ and that
\begin{align*}
E[H_n(\chi_1,\chi_2)^2] &= \frac{2}{n^2 r_n} E[\chi_1' \Omega^{-1} \chi_2 \chi_1' \Omega^{-1} \chi_2] = \frac{2}{n^2 r_n} E[\chi_1' \Omega^{-1} \chi_2 \chi_2' \Omega^{-1} \chi_1] = \frac{2}{n^2 r_n} E[\chi_1' \Omega^{-1} \chi_1] \\
&= \frac{2}{n^2 r_n} E[tr(\chi_1' \Omega^{-1} \chi_1)] = \frac{2}{n^2 r_n} E[tr(\chi_1 \chi_1' \Omega^{-1})]  = \frac{2}{n^2}
\end{align*}

Thus, we have:
\[
\frac{E[G_n(\chi_1,\chi_2)^2]}{\{E[H_n(\chi_1,\chi_2)^2]\}^2} = \frac{(2/n^2 r_n) (2/n^2)}{(4/n^2)} = \frac{1}{r_n} \to 0
\]
and, using the Cauchy-Schwartz inequality,
\begin{align*}
\frac{n^{-1} E[H_n(\chi_1,\chi_2)^4]}{\{E[H_n(\chi_1,\chi_2)^2]\}^2} &= \frac{(4/n^5 r_n^2) E[(\chi_1' \Omega^{-1} \chi_2)^4]}{(4/n^4)} \leq \frac{E[(\chi_1' \Omega^{-1} \chi_1)^2 (\chi_2' \Omega^{-1} \chi_2)^2]}{n r_n^2} \\
&= \frac{E[(\chi_1' \Omega^{-1} \chi_1)^2]^2}{n r_n^2} = \left[ \frac{E[(\chi_1' \Omega^{-1} \chi_1)^2]}{r_n \sqrt{n}}  \right]^2 \to 0
\end{align*}

Thus, the conditions of Theorem 1 in \citet{hall_1984} hold, and
\begin{align}\label{eqn_t2}
t_2 = \sqrt{\frac{2}{n^2 r_n}} \sum_{i=1}^{n-1}{\sum_{j > i}{\hat{\varepsilon}_i' \hat{Z}_i \Omega^{-1} \hat{Z}_j' \hat{\varepsilon}_j}} \overset{d}{\to} N(0,1)
\end{align}

The result of the theorem now follows from equations~\ref{eqn_remainders}, \ref{eqn_t_r_3_hc}, \ref{eqn_t}, and~\ref{eqn_t2}. \qed

\subsection{Proof of Theorem~\ref{asy_distr_t_r_n_hc_boot}}

The wild bootstrap test statistic is given by
\[
\xi_{HC}^* = \left( \sum_{i=1}^{n}{\tilde{e}_i^{* \prime} \tilde{Z}_i} \right) \left( \sum_{i=1}^{n}{\tilde{Z}_i' \tilde{e}_i^* \tilde{e}_i^{* \prime} \tilde{Z}_i} \right)^{-1} \left( \sum_{i=1}^{n}{\tilde{Z}_i' \tilde{e}_i^*} \right) = \tilde{e}^{* \prime} \tilde{Z}  \left( n \hat{\Omega}^* \right)^{-1} \tilde{Z}' \tilde{e}^*,
\]
where $\hat{\Omega}^* = n^{-1} \sum_{i=1}^{n}{\tilde{Z}_i' \tilde{e}_i^* \tilde{e}_i^{* \prime} \tilde{Z}_i}$.

Note that the bootstrap data is generated as
\[
\hat{Y}^* = \hat{W} \tilde{\beta}_1 + \hat{\varepsilon}^*
\]

The bootstrap residuals are given by $\tilde{e}^* = M_{\hat{W}} \hat{Y}^* = M_{\hat{W}} \hat{\varepsilon}^*$. Thus, the bootstrap test statistic can be rewritten as
\[
\xi_{HC}^* = \hat{\varepsilon}^{* \prime} M_{\hat{W}} \tilde{Z}  \left( n \hat{\Omega}^* \right)^{-1} \tilde{Z}' M_{\hat{W}} \hat{\varepsilon}^*
\]

The rest of the proof is very similar to the proof of Theorem~\ref{asy_distr_t_r_n_hc}, so I only provide a sketch here. First, one can show that
\[
\frac{\hat{\varepsilon}^{* \prime} M_{\hat{W}} \tilde{Z}  \left( n \hat{\Omega}^* \right)^{-1} \tilde{Z}' M_{\hat{W}} \hat{\varepsilon}^* - \hat{\varepsilon}^{* \prime} Z  \left( n \Omega^* \right)^{-1} Z \hat{\varepsilon}^*}{\sqrt{r_n}} \overset{p}{\to} 0,
\]
where $\Omega^* = n^{-1} \sum_{i=1}^{n}{\hat{Z}_i' \tilde{e}_i \tilde{e}_i' \hat{Z}_i}$. Next, one can deal with the leading term $\hat{\varepsilon}^{* \prime} Z  \left( n \Omega^* \right)^{-1} Z \hat{\varepsilon}^*$ as in Step 3 of the proof of Theorem~\ref{asy_distr_t_r_n_hc}, but now conditional on the data $\mathcal{Z}_{n,T}$. The result of the theorem follows.

\subsection{Proof of Theorem~\ref{global_alternative_t_r}}

Recall that $\hat{\Omega} = n^{-1} \sum_{i=1}^{n}{\tilde{Z}_i' \tilde{\Sigma}_T \tilde{Z}_i}$ in the homoskedastic case and $\hat{\Omega} = n^{-1} \sum_{i=1}^{n}{\tilde{Z}_i' \tilde{e}_i \tilde{e}_i' \tilde{Z}_i}$ in the heteroskedastic case. Next, note that
\[
\sum_{i=1}^{n}{\tilde{Z}_i' \tilde{e}_i} = \tilde{Z}' \tilde{e}
\]
where $\tilde{Z} = (\tilde{Z}_1',...,\tilde{Z}_n')'$ is $nT \times r_n$, $\tilde{e} = (\tilde{e}_1',...,\tilde{e}_n')'$ is $nT \times 1$.

Then under homoskedasticity
\[
\xi = \left( \sum_{i=1}^{n}{\tilde{e}_i' \tilde{Z}_i} \right) \left( \sum_{i=1}^{n}{\tilde{Z}_i' \tilde{\Sigma}_T \tilde{Z}_i} \right)^{-1} \left( \sum_{i=1}^{n}{\tilde{Z}_i' \tilde{e}_i} \right) = n \left( n^{-1} \tilde{e}' \tilde{Z} \right) \hat{\Omega}^{-1} \left( n^{-1} \tilde{Z}' \tilde{e} \right),
\]
while under heteroskedasticity
\[
\xi_{HC} = \left( \sum_{i=1}^{n}{\tilde{e}_i' \tilde{Z}_i} \right) \left( \sum_{i=1}^{n}{\tilde{Z}_i' \tilde{e}_i \tilde{e}_i' \tilde{Z}_i} \right)^{-1} \left( \sum_{i=1}^{n}{\tilde{Z}_i' \tilde{e}_i} \right) = n \left( n^{-1} \tilde{e}' \tilde{Z} \right) \hat{\Omega}^{-1} \left(  n^{-1} \tilde{Z}' \tilde{e} \right) 
\]

Also note that
\[
\frac{\sqrt{r_n}}{n} \frac{n \left( n^{-1} \tilde{e}' \tilde{Z} \right) \hat{\Omega}^{-1} \left(  n^{-1} \tilde{Z}' \tilde{e} \right)   - r_n}{\sqrt{2 r_n}} = \frac{1}{\sqrt{2}} \left( n^{-1} \tilde{e}' \tilde{Z} \right) \hat{\Omega}^{-1} \left(  n^{-1} \tilde{Z}' \tilde{e} \right)  + T_2,
\]
where $T_2 = - r_n/(n \sqrt{2}) \to 0$.

Hence, it suffices to show that $\left( n^{-1} \tilde{e}' \tilde{Z} \right) \hat{\Omega}^{-1} \left(  n^{-1} \tilde{Z}' \tilde{e} \right)   \overset{p}{\to} \Delta$.

Next, note that due to the projection nature of the series estimators, $\tilde{e} = M_{\hat{W}} \hat{Y} = M_{\hat{W}} \hat{\varepsilon}^* + M_{\hat{W}} \hat{R}^*$. Hence,
\begin{align*}
&\left( n^{-1} \tilde{e}' \tilde{Z} \right) \hat{\Omega}^{-1} \left(  n^{-1} \tilde{Z}' \tilde{e} \right)  = \left( n^{-1} (M_{\hat{W}} \hat{\varepsilon}^* + M_{\hat{W}} \hat{R}^*) ' \tilde{Z} \right) \hat{\Omega}^{-1} \left(  n^{-1} \tilde{Z}' (M_{\hat{W}} \hat{\varepsilon}^* + M_{\hat{W}} \hat{R}^*) \right) \\
&= \left( n^{-1} (\hat{\varepsilon}^* + \hat{R}^*)' M_{\hat{W}} \tilde{Z} \right) \hat{\Omega}^{-1} \left(  n^{-1} \tilde{Z}' M_{\hat{W}} (\hat{\varepsilon}^* + \hat{R}^*) \right) \\
&= \left( n^{-1} (\hat{\varepsilon}^* + \hat{R}^*)' M_{\hat{W}} \hat{Z} \right) \hat{\Omega}^{-1} \left(  n^{-1} \hat{Z}' M_{\hat{W}} (\hat{\varepsilon}^* + \hat{R}^*) \right) 
\end{align*}

Thus, it suffices to show that
\[
n \left( n^{-1} (\hat{\varepsilon}^* + \hat{R}^*)' M_{\hat{W}} \hat{Z} \right) \hat{\Omega}^{-1} \left(  n^{-1} \hat{Z}' M_{\hat{W}} (\hat{\varepsilon}^* + \hat{R}^*) \right)  \overset{p}{\to} \Delta
\]

Next,
\begin{align*}
&n \left( n^{-1} (\hat{\varepsilon}^* + \hat{R}^*)' M_{\hat{W}} \hat{Z} \right) \hat{\Omega}^{-1} \left(  n^{-1} \hat{Z}' M_{\hat{W}} (\hat{\varepsilon}^* + \hat{R}^*) \right) = \left( n^{-1} \hat{\varepsilon}^{* \prime} M_{\hat{W}} \hat{Z} \right) \hat{\Omega}^{-1} \left( n^{-1} \hat{Z}' M_{\hat{W}} \hat{\varepsilon}^* \right) \\
&+ 2 \left( n^{-1} \hat{R}^{* \prime} M_{\hat{W}} \hat{Z} \right) \hat{\Omega}^{-1} \left( n^{-1} \hat{Z}' M_{\hat{W}} \hat{\varepsilon}^* \right) + \left( n^{-1} \hat{R}^{* \prime} M_{\hat{W}} \hat{Z} \right) \hat{\Omega}^{-1} \left( n^{-1} \hat{Z}' M_{\hat{W}} \hat{R}^* \right)
\end{align*}

Similarly to the proof of Theorem~\ref{asy_distr_t_r_n_hc}, but using the fact that $\sup_{x \in \mathcal{X}}{R^*(x)} = o(1)$ instead of $\sup_{x \in \mathcal{X}}{R(x)} = O(m_n^{-\alpha})$,
\begin{align*}
\left( n^{-1} \hat{R}^{* \prime} M_{\hat{W}} \hat{Z} \right) \hat{\Omega}^{-1} \left( n^{-1} \hat{Z}' M_{\hat{W}} \hat{\varepsilon}^* \right)  \leq C \hat{R}^{* \prime} \hat{\varepsilon}^*/(n \tilde{\sigma}^2) = O_p(n^{-1/2}) o_p(1) = o_p(1)
\end{align*}
and
\[
\left( n^{-1} \hat{R}^{* \prime} M_{\hat{W}} \hat{Z} \right) \hat{\Omega}^{-1} \left( n^{-1} \hat{Z}' M_{\hat{W}} \hat{R}^* \right) \leq C \hat{R}^{* \prime} \hat{R}^*/(n \tilde{\sigma}^2) = o_p(1)
\]

Thus,
\[
n \left( n^{-1} (\hat{\varepsilon}^* + \hat{R}^*)' M_{\hat{W}} \hat{Z} \right) \hat{\Omega}^{-1} \left(  n^{-1} \hat{Z}' M_{\hat{W}} (\hat{\varepsilon}^* + \hat{R}^*) \right) = \left( n^{-1} \hat{\varepsilon}^{* \prime} M_{\hat{W}} \hat{Z} \right) \hat{\Omega}^{-1} \left( n^{-1} \hat{Z}' M_{\hat{W}} \hat{\varepsilon}^* \right)  + o_p(1)
\]

Next, given that, as shown in the proof of Theorem~\ref{asy_distr_t_r_n_hc}, $M_{\hat{W}} \hat{Z}  = \hat{Z} + o_p(1)$ and the eigenvalues of $\hat{\Omega}$ are bounded above and below w.p.a. 1,
\[
\left( n^{-1} \hat{\varepsilon}^{* \prime} M_{\hat{W}} \hat{Z} \right) \hat{\Omega}^{-1} \left( n^{-1} \hat{Z}' M_{\hat{W}} \hat{\varepsilon}^* \right) = \left( n^{-1} \hat{\varepsilon}^{* \prime} \hat{Z} \right) \hat{\Omega}^{-1} \left( n^{-1} \hat{Z}'  \hat{\varepsilon}^* \right) + o_p(1)
\]

Next,
\begin{align*}
&\Big{|} (n^{-1} \hat{\varepsilon}^{* \prime} \hat{Z}) (\hat{\Omega}^{-1} - \Omega^{*-1}) (n^{-1} \hat{Z}' \hat{\varepsilon}^{*}) \Big{|} \leq \Big{|} (n^{-1} \hat{\varepsilon}^{* \prime} \hat{Z}) (\Omega^{*-1} (\hat{\Omega} - \Omega^*) \hat{\Omega}^{*-1} (\hat{\Omega} - \Omega^*) \Omega^{*-1}) (n^{-1} \hat{Z}' \hat{\varepsilon}^{*}) \Big{|} \\
&+\Big{|} (n^{-1} \hat{\varepsilon}^{* \prime} \hat{Z}) (\Omega^{*-1} (\hat{\Omega} - \Omega^*) \Omega^{*-1}) (n^{-1} \hat{Z}' \hat{\varepsilon}^{*}) \Big{|} \leq \| \Omega^{*-1} n^{-1} \hat{Z}' \hat{\varepsilon}^{*} \|^2 (\| \hat{\Omega} - \Omega^* \| + C \| \hat{\Omega} - \Omega^* \|^2) = o_p(1)
\end{align*}

Thus, $(n^{-1} \hat{\varepsilon}^{* \prime} \hat{Z}) \hat{\Omega}^{-1} (n^{-1} \hat{Z}' \hat{\varepsilon}^*) = (n^{-1} \hat{\varepsilon}^{* \prime} \hat{Z}) \Omega^{*-1} (n^{-1} \hat{Z}' \hat{\varepsilon}^{*}) + o_p(1)$.

To complete the proof, note that $Var(\hat{Z}_i' \hat{\varepsilon}_i^*) \leq \Omega^*$, because $\Omega^* = E[\hat{Z}_i' \hat{\varepsilon}_i \hat{\varepsilon}_i' \hat{Z}_i]$. Then
\begin{align*}
&E[(n^{-1} \hat{Z}' \hat{\varepsilon}^* - E[\hat{Z}_i' \hat{\varepsilon}_i^*])' \Omega^{*-1} (n^{-1} \hat{Z}' \hat{\varepsilon}^* - E[\hat{Z}_i' \hat{\varepsilon}_i^*])] \\
&\leq E[(n^{-1} \hat{Z}' \hat{\varepsilon}^* - E[\hat{Z}_i' \hat{\varepsilon}_i^*])' Var(\hat{Z}_i' \hat{\varepsilon}_i^*)^{-1} (n^{-1} \hat{Z}' \hat{\varepsilon}^* - E[\hat{Z}_i' \hat{\varepsilon}_i^*])] \\
&= E[tr\left(Var(\hat{Z}_i' \hat{\varepsilon}_i^*)^{-1} (n^{-1} \hat{Z}' \hat{\varepsilon}^* - E[\hat{Z}_i' \hat{\varepsilon}_i^*]) (n^{-1} \hat{Z}' \hat{\varepsilon}^* - E[\hat{Z}_i' \hat{\varepsilon}_i^*])'\right)] =tr(I_{r_n})/n = r_n/n \to 0
\end{align*}

Thus,
\begin{align*}
&\Big{|}  (n^{-1} \hat{\varepsilon}^{* \prime} \hat{Z}) \Omega^{*-1} (n^{-1} \hat{Z}' \hat{\varepsilon}^{*}) - E[\hat{\varepsilon}_i^{*\prime} \hat{Z}_i] \Omega^{*-1} E[\hat{Z}_i' \hat{\varepsilon}_i^*] \Big{|}  \\
&\leq \Big{|}  (n^{-1} \hat{Z}' \hat{\varepsilon}^* - E[\hat{Z}_i' \hat{\varepsilon}_i^*])' \Omega^{*-1} (n^{-1} \hat{Z}' \hat{\varepsilon}^* - E[\hat{Z}_i' \hat{\varepsilon}_i^*]) \Big{|}  + 2 \Big{|}  E[\hat{\varepsilon}_i^{* \prime} \hat{Z}_i] \Omega^{*-1} (n^{-1} \hat{Z}' \hat{\varepsilon}^* - E[\hat{Z}_i' \hat{\varepsilon}_i^*]) \Big{|} \\
&\leq o_p(1) + 2 \sqrt{E[\hat{\varepsilon}_i^* \hat{Z}_i'] \Omega^{*-1} E[\hat{Z}_i \hat{\varepsilon}_i^*]} \sqrt{(n^{-1} \hat{Z}' \hat{\varepsilon}^* - E[\hat{Z}_i' \hat{\varepsilon}_i^*])' \Omega^{*-1} (n^{-1} \hat{Z}' \hat{\varepsilon}^* - E[\hat{Z}_i' \hat{\varepsilon}_i^*])} \\
&= o_p(1) + 2 \sqrt{\Delta} o_p(1) = o_p(1)
\end{align*}

Combining the results above, $\left( n^{-1} (\hat{\varepsilon}^* + \hat{R}^*)' M_{\hat{W}} \hat{Z} \right) \hat{\Omega}^{-1} \left(  n^{-1} \hat{Z}' M_{\hat{W}} (\hat{\varepsilon}^* + \hat{R}^*) \right)  \overset{p}{\to} \Delta$. \qed

\subsection{Auxiliary Lemmas}

\begin{lemma}[\citet{donald_et_al_2003}, Lemma A.2]\label{series_normalization}

If Assumption \ref{series_norms_eigenvalues} is satisfied then it can be assumed without loss of generality that $\hat{P}^{k}(x) = \bar{P}^{k}(x)$ and that $E[\bar{P}^{k}(X_{it}) \bar{P}^{k}(X_{it})'] = I_{k}$.

\end{lemma}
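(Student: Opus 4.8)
The plan is to reduce the statement to the observation that every object on which the test depends --- the transformed data, the restricted projection $M_{\hat W}$, the restricted residuals $\tilde e$, the partialled-out regressors $\tilde Z=M_{\hat W}\hat Z$, the matrices $\hat\Omega,\tilde\Omega$, and the statistics $\xi$, $\xi_{HC}$ with their normalization $r_n$ --- is invariant under a suitable class of nonsingular linear reparametrizations of the basis functions. Once this invariance is in place, one picks within that class the reparametrization that orthonormalizes the basis in the population, which is available because Assumption~\ref{series_norms_eigenvalues}(b) makes the relevant second-moment matrices nonsingular, and then simply relabels the transformed basis $\bar P^k$ as $\hat P^k$.

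First I would record the invariance. Consider the reparametrization $\hat W_{it}\mapsto A_1\hat W_{it}$, $\hat Z_{it}\mapsto A_2\hat W_{it}+A_3\hat Z_{it}$ with $A_1$ of size $m_n$ and $A_3$ of size $r_n$ nonsingular; in matrix form the data matrices become $\hat W A_1'$ and $\hat W A_2'+\hat Z A_3'$. Since $\hat W A_1'$ and $\hat W$ have the same column span, $M_{\hat W}$ is unchanged, hence so is $\tilde e=M_{\hat W}\hat Y$, and $M_{\hat W}(\hat W A_2'+\hat Z A_3')=M_{\hat W}\hat Z A_3'=\tilde Z A_3'$, so $\tilde Z$ is merely post-multiplied by the nonsingular $A_3'$. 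Substituting into Equation~\ref{xi_hc} (and into Equation~\ref{xi}, noting $\tilde\Sigma_T$ is unchanged), every occurrence of $A_3$ cancels in the sandwich, so $\xi_{HC}$, $\xi$, $r_n$, and the pairs $\tilde\Omega,\hat\Omega$ appearing in the $o_p$ hypotheses of Theorem~\ref{asy_distr_t_r_n_hc} are all left invariant; this is exactly the freedom needed, and crucially it respects the $W/Z$ block split so that ``testing $\beta_2=0$'' still means the same thing.

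Next I would exhibit the $A_1,A_2,A_3$ achieving the identity normalization. Starting from the basis $\bar P^k=(\bar W^{m\prime},\bar Z^{r\prime})'$ supplied by Assumption~\ref{series_norms_eigenvalues}, set $\Sigma_{WW}=E[\bar W^m(X_{it})\bar W^m(X_{it})']$ and let $\Sigma^\perp$ be the Schur complement $E[\bar Z^r\bar Z^{r\prime}]-E[\bar Z^r\bar W^{m\prime}]\Sigma_{WW}^{-1}E[\bar W^m\bar Z^{r\prime}]$. By Cauchy interlacing and the identity $(\Sigma^\perp)^{-1}=$ (the $Z$-block of $(E[\bar P^k\bar P^{k\prime}])^{-1}$), both $\lambda_{\min}(\Sigma_{WW})$ and $\lambda_{\min}(\Sigma^\perp)$ are bounded away from zero uniformly in $k$ by Assumption~\ref{series_norms_eigenvalues}(b), so both matrices are invertible; take $A_1=\Sigma_{WW}^{-1/2}$, $A_3=(\Sigma^\perp)^{-1/2}$, $A_2=-(\Sigma^\perp)^{-1/2}E[\bar Z^r\bar W^{m\prime}]\Sigma_{WW}^{-1}$. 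A direct computation then gives $E[(A_1\bar W)(A_1\bar W)']=I_{m_n}$, $E[(A_2\bar W+A_3\bar Z)(A_1\bar W)']=0$, and $E[(A_2\bar W+A_3\bar Z)(A_2\bar W+A_3\bar Z)']=I_{r_n}$, i.e.\ the transformed basis has second-moment matrix $I_{k_n}$ (and its $Z$-block is the population residual of $\bar Z$ on $\bar W$, which is what makes downstream facts such as the eigenvalues of $\tilde Z\tilde Z'/(nT)$ tending to one clean). I would then check that Assumption~\ref{series_norms_eigenvalues}(a) survives: $\|A_1\|_{op}$ and $\|A_3\|_{op}$ are bounded uniformly in $k$, and $\|A_2\|_{op}\le\|A_3\|_{op}\,\|E[\bar Z\bar W']\|_{op}\,\|\Sigma_{WW}^{-1}\|_{op}$ with $\|E[\bar Z\bar W']\|_{op}\le E[\|\bar Z\|\,\|\bar W\|]\le\zeta(r)\zeta(m)$, so the new uniform bounds are again of the form ``a function of $k$'' and $\zeta(\cdot)$ may be redefined without loss of generality. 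Relabelling the transformed basis as $\hat P^k$ yields the claim.

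The only real difficulty here is bookkeeping rather than substance: one must use precisely the block-triangular (block-Cholesky) class of reparametrizations, because the naive symmetric square root $\hat P^k\mapsto(E[\hat P^k\hat P^{k\prime}])^{-1/2}\hat P^k$ would mix the $W$ and $Z$ blocks and invalidate the partialling-out structure underlying the test; with the block-triangular choice above this is avoided, and what remains is the routine verification that the uniform-in-$k$ conditions of Assumption~\ref{series_norms_eigenvalues} are preserved.
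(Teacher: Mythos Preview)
Your proposal is correct and follows essentially the same approach the paper takes. The paper does not prove this lemma itself but cites \citet{donald_et_al_2003} and then, in the Remark immediately following the lemma, addresses exactly the block-structure concern you raise: it argues that sequential Gram--Schmidt on $(W_i',T_i')'$ leaves the span of $W_i$ intact while orthogonalizing $T_i$ against it, so the $W/Z$ split is preserved. Your block-Cholesky construction via the Schur complement is precisely an explicit version of that Gram--Schmidt step, and your invariance check for $\xi$, $\xi_{HC}$, $\tilde e$, $\tilde Z$, $\hat\Omega$, $\tilde\Omega$ under the block-triangular reparametrization is the ingredient the Remark leaves implicit. One small refinement: your bound $\|E[\bar Z\bar W']\|_{op}\le\zeta(r)\zeta(m)$ is cruder than necessary and would inflate the redefined $\zeta(r)$; a cleaner route is to note that the block-Cholesky factor $L$ of $E[\bar P^k\bar P^{k\prime}]$ satisfies $\|L^{-1}\|_{op}\le\lambda_{\min}(E[\bar P^k\bar P^{k\prime}])^{-1/2}\le C$ by Assumption~\ref{series_norms_eigenvalues}(b), so the entire transformed $\bar P^k$ has sup norm at most $C\zeta(k)$, and hence its $Z$-block does too --- which is enough since in all the paper's rate conditions $\zeta(r_n)$ and $\zeta(k_n)$ are of the same order.
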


\begin{remark}
This normalization is common in the literature on series estimation, when all elements of $P^{k}(X_i)$ are used to estimate the model. In my setting, $P^{k}(X_i)$ is partitioned into $W^{m}(X_i)$, used in estimation, and $T^{r}(X_i)$, used in testing. The normalization implies that $W_i$ and $T_i$ are orthogonal to each other. This can be justified as follows. Suppose that $W_i$ and $T_i$ are not orthogonal. Then one can take all elements of $(W_i', T_i')'$ and apply the Gram-Schmidt process to them. Because the orthogonalization process is sequential, it will yield the vector $(W_i^{0},T_i^{0})'$ such that $W_i^{0}$ spans the same space as $W_i$, $T_i^{0}$ spans the same space as $T_i$, and $W_i^{0}$ and $T_i^{0}$ are orthogonal. Thus, the normalization is indeed without loss of generality.
\end{remark}

\begin{lemma}[\citet{baltagi_li_2002}, Theorem 2.2]\label{series_f_rates}

Let $f(x) = f(x,\theta_0,h_0)$, $f_i = f(X_i)$, $\tilde{f}(x)= f(x, \tilde{\theta}, \tilde{h}) = W^{m_n}(x)' \tilde{\beta}_0$, and $\tilde{f}_i = \tilde{f}(X_i) = W_i' \tilde{\beta}_0$. Under Assumptions \ref{dgp}, \ref{series_norms_eigenvalues}, and \ref{series_approx}, the following is true:
\[
\frac{1}{n}\sum_{i=1}^{n}{(\tilde{f}_i - f_i)^2} = O_p(m_n/n + m_n^{-2\alpha})
\]

\end{lemma}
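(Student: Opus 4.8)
The plan is to follow the standard variance--bias decomposition for series least-squares estimators (this is essentially \citet{baltagi_li_2002}, Theorem~2.2; the argument uses only that the restricted model is linear in the series coefficients $\beta_1$, so nothing beyond their setup is needed). With the subscript of the lemma statement, write the restricted estimator as $\tilde{\beta}_0 = (\hat{W}'\hat{W})^{-1}\hat{W}'\hat{Y}$, and set $R_i = f(X_i,\theta_0,h_0) - W_i\beta_1$, so that $\sup_i\|R_i\| = O(m_n^{-\alpha})$ by Assumption~\ref{series_approx} ($T$ being fixed). Under $H_0$ the transformed restricted model is $\hat{Y} = \hat{W}\beta_1 + \hat{R} + \hat{\varepsilon}$, hence
\[
\tilde{\beta}_0 - \beta_1 = (\hat{W}'\hat{W})^{-1}\hat{W}'\hat{\varepsilon} + (\hat{W}'\hat{W})^{-1}\hat{W}'\hat{R}.
\]
Using $\tilde{f}_i - f_i = W_i(\tilde{\beta}_0 - \beta_1) - R_i$,
\[
\frac1n\sum_{i=1}^n\|\tilde{f}_i - f_i\|^2 \;\le\; \frac2n\sum_{i=1}^n\big\|W_i(\tilde{\beta}_0-\beta_1)\big\|^2 + \frac2n\sum_{i=1}^n\|R_i\|^2,
\]
and the last term is $O(m_n^{-2\alpha})$ by the sup-norm bound (up to a constant absorbing $T$). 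It therefore remains to control $\frac1n\sum_i\|W_i(\tilde{\beta}_0-\beta_1)\|^2 = (\tilde{\beta}_0-\beta_1)'(W'W/n)(\tilde{\beta}_0-\beta_1)$.

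First I would pin down the Gram matrices. By Lemma~\ref{series_normalization} one may normalize so that $E[\hat{W}_{it}\hat{W}_{it}'] = I_{m_n}$ (a principal submatrix of $I_{k_n}$), and likewise for the untransformed design. A concentration step --- Markov's inequality applied to the Frobenius norm, or Rudelson's inequality, using $\sup_x\|\hat{W}^{m_n}(x)\| \le \zeta(m_n)$ from Assumption~\ref{series_norms_eigenvalues}(a) together with the growth restriction on $\zeta(m_n)$ relative to $n$ supplied by the rate conditions of Theorem~\ref{asy_distr_t_r_n_hc} --- yields $\|\hat{W}'\hat{W}/(nT) - I_{m_n}\| \overset{p}{\to} 0$ and the analogue for $W'W/(nT)$ (cf.\ Lemma~\ref{lemma_PP}). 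Hence, with probability approaching one, the eigenvalues of both $\hat{W}'\hat{W}/(nT)$ and $W'W/(nT)$ lie in a fixed interval $[\underline{c},\overline{c}] \subset (0,\infty)$; in particular the two matrices are equivalent up to fixed constants.

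Next I would bound the two pieces of $\tilde{\beta}_0-\beta_1$. For the variance piece, $E[\hat{\varepsilon}_i \mid X_i] = 0$ (the transformation is linear in $\varepsilon$) and the $\hat{W}_i'\hat{\varepsilon}_i$ are i.i.d.\ with mean zero, so with $\Sigma(x)$ bounded (Assumption~\ref{errors_unified}(a)),
\[
E\big[\|\hat{W}'\hat{\varepsilon}/n\|^2\big] = n^{-1}E\big[tr\big(\hat{W}_i\hat{W}_i'\,E[\hat{\varepsilon}_i\hat{\varepsilon}_i'\mid X_i]\big)\big] \le C n^{-1}\,tr\big(E[\hat{W}_i'\hat{W}_i]\big) = O(m_n/n),
\]
so by Markov and the eigenvalue bound $(\hat{W}'\hat{\varepsilon}/n)'(\hat{W}'\hat{W}/n)^{-1}(\hat{W}'\hat{\varepsilon}/n) = O_p(m_n/n)$. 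For the bias piece, $(\hat{W}'\hat{R}/n)'(\hat{W}'\hat{W}/n)^{-1}(\hat{W}'\hat{R}/n) = \hat{R}'P_{\hat{W}}\hat{R}/n \le \hat{R}'\hat{R}/n \le C\sup_{x\in\mathcal{X}} R(x)^2 = O(m_n^{-2\alpha})$. Combining with $(a+b)'A(a+b)\le 2a'Aa + 2b'Ab$ gives $(\tilde{\beta}_0-\beta_1)'(\hat{W}'\hat{W}/n)(\tilde{\beta}_0-\beta_1) = O_p(m_n/n + m_n^{-2\alpha})$, and the equivalence of the two Gram matrices transfers this bound to $(\tilde{\beta}_0-\beta_1)'(W'W/n)(\tilde{\beta}_0-\beta_1)$. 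Plugging into the first display gives $\frac1n\sum_i\|\tilde{f}_i - f_i\|^2 = O_p(m_n/n + m_n^{-2\alpha})$.

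The only step that is not bookkeeping is the Gram-matrix control --- showing that the \emph{transformed} design matrix $\hat{W}'\hat{W}/(nT)$ stays bounded away from singularity with probability approaching one. This rests on (i) the population fact that $E[\hat{W}_{it}\hat{W}_{it}']$ has smallest eigenvalue bounded away from zero uniformly in $m_n$ --- implicit in the paper's definition of $\beta_1$ as a projection coefficient and already invoked in Lemma~\ref{lemma_PP} --- and (ii) a concentration bound on $\|\hat{W}'\hat{W}/(nT) - E[\hat{W}_{it}\hat{W}_{it}']\|$, which needs only a weak restriction on $\zeta(m_n)^2$ relative to $n$, of the same flavor as (and implied by) the rate conditions used elsewhere. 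The apparent mismatch between the untransformed $W_i$ in the statement and the transformed estimator $\tilde{\beta}_0$ is then absorbed by the same eigenvalue equivalence.
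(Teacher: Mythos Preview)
The paper does not supply its own proof of this lemma; it is stated as a direct citation of Theorem~2.2 in \citet{baltagi_li_2002}. Your reconstruction is the standard variance--bias decomposition for series least squares (the Newey-type argument adapted to the fixed-effects panel setting), which is exactly what the cited reference does, so there is nothing to compare against beyond the citation itself.

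Your argument is correct. The decomposition $\tilde{f}_i - f_i = W_i(\tilde{\beta}_0-\beta_1) - R_i$, the Markov bound on $\|\hat{W}'\hat{\varepsilon}/n\|^2$ using Assumption~\ref{errors_unified}(a), and the projection bound $\hat{R}'P_{\hat{W}}\hat{R}/n \le \hat{R}'\hat{R}/n = O(m_n^{-2\alpha})$ all go through as written. You correctly flag that the only step requiring care is the Gram-matrix control for the \emph{transformed} design; the paper records exactly this in Lemma~\ref{lemma_PP} (citing Lemma~A.1 of \citet{baltagi_li_2002}), and the normalization in Lemma~\ref{series_normalization} together with Assumption~\ref{series_norms_eigenvalues}(b) supplies the population lower eigenvalue bound you need. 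The mismatch you note between the untransformed $W_i$ in the display and the estimator $\tilde{\beta}_0$ built from transformed data is real but harmless for the reason you give: both $W'W/(nT)$ and $\hat{W}'\hat{W}/(nT)$ have eigenvalues in a fixed compact interval of $(0,\infty)$ w.p.a.~1, so the quadratic forms are equivalent up to constants.
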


\begin{lemma}\label{lemma_din_1}

Suppose that Assumption~\ref{assumption_din_1} is satisfied and $E[\hat{\varepsilon}_{i} \hat{\varepsilon}_{i}']$ is finite. If $E[\hat{\varepsilon}_{i} | \hat{X}_i] = 0$ then $E[\hat{P}_{i}' \hat{\varepsilon}_{i}] = 0$ for all $k$. Furthermore, if $E[\hat{\varepsilon}_{i} | X_i] \neq 0$ then $E[\hat{P}_{i}' \hat{\varepsilon}_{i}] \neq 0$ for all $k$ large enough.
\end{lemma}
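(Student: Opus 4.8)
The plan is to mirror the proof of Lemma~A.1 in \citet{donald_et_al_2003}, treating the two implications separately. For the first one, I would first note that $\hat{P}_i = \hat{P}^{k}(X_i)$ is a deterministic, hence measurable, function of $X_i$, so that $\hat{P}_i' \hat{\varepsilon}_i$ is integrable by the Cauchy--Schwarz inequality, using that $E[\|\hat{P}_i\|^2] < \infty$ (because $E[\hat{P}_{it}\hat{P}_{it}']$ is finite for every $k$ by Assumption~\ref{assumption_din_1} and $T$ is fixed) and that $E[\|\hat{\varepsilon}_i\|^2] < \infty$ (assumed). The law of iterated expectations then gives $E[\hat{P}_i' \hat{\varepsilon}_i] = E\big[\hat{P}_i'\, E[\hat{\varepsilon}_i \mid X_i]\big]$, which is zero whenever the inner conditional mean vanishes, and this holds for every $k$.

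For the second implication I would argue by contraposition: supposing $E[\hat{P}_i' \hat{\varepsilon}_i] = 0$ for infinitely many $k$, I will show that $a(X_i) := E[\hat{\varepsilon}_i \mid X_i]$ equals zero almost surely. Conditional Jensen together with $E[\hat{\varepsilon}_i\hat{\varepsilon}_i'] < \infty$ gives $E[\|a(X_i)\|^2] \le E[\|\hat{\varepsilon}_i\|^2] < \infty$, so each coordinate $a_t$ of $a$ lies in $L^2$, and iterated expectations gives $E[\|a(X_i)\|^2] = E[\hat{\varepsilon}_i' a(X_i)] = \sum_t E[\hat{\varepsilon}_{it}\, a_t(X_i)]$. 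Assumption~\ref{assumption_din_1} then supplies, for each $t$, coefficient vectors $\gamma^k_t$ with $E[(a_t(X_i) - \hat{P}_{it}'\gamma^k_t)^2] \to 0$, and a Cauchy--Schwarz bound on the approximation residual (using $E[\hat{\varepsilon}_{it}^2] < \infty$) gives $E[\hat{\varepsilon}_{it}\, a_t(X_i)] = \lim_k E[\hat{\varepsilon}_{it}\,\hat{P}_{it}'\gamma^k_t] = \lim_k \gamma^{k\prime}_t\, E[\hat{P}_{it}\hat{\varepsilon}_{it}]$. Summing over $t$ and using that the $k$-vector $E[\hat{P}_i'\hat{\varepsilon}_i] = \sum_t E[\hat{P}_{it}\hat{\varepsilon}_{it}]$ vanishes along the chosen subsequence of $k$'s forces $E[\|a(X_i)\|^2] = 0$, hence $a(X_i) = 0$ a.s. By contraposition, $E[\hat{\varepsilon}_i\mid X_i] \neq 0$ implies $E[\hat{P}_i'\hat{\varepsilon}_i] \neq 0$ for all $k$ large enough.

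The step I expect to require the most care is reconciling the period-by-period approximations with the single summed moment vector: in the display above the $\gamma^k_t$ are combined across $t$, whereas $E[\hat{P}_i'\hat{\varepsilon}_i]$ only constrains $\sum_t E[\hat{P}_{it}\hat{\varepsilon}_{it}]$ through a common coefficient vector, so one has to verify that the approximating coefficients can be chosen common across the time periods, i.e. that there is a single $\gamma^k$ with $\hat{P}_{it}'\gamma^k \to a_t(X_i)$ in $L^2$ for every $t$. This is available in the settings where the lemma is applied, since there $a_t(X_i)$ inherits the transformed structure of a single underlying function (for instance $a_t(X_i) = d(X_{it}) - d(X_{i,t-1})$ under first differencing), so it suffices to approximate $d$ itself and the same $\gamma^k$ works for all $t$. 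The remaining ingredients --- measurability, the Cauchy--Schwarz integrability and approximation bounds, conditional Jensen, and the $L^2$-limit passage --- are routine.
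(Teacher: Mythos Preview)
Your proposal is correct and follows the same route the paper indicates---the paper's own proof is simply a reference to Lemma~2.1 of \citet{donald_et_al_2003} with the details omitted. The panel-specific subtlety you flag (that a common $\gamma^k$ across $t$ is needed so that the single summed moment $E[\hat P_i'\hat\varepsilon_i]=\sum_t E[\hat P_{it}\hat\varepsilon_{it}]=0$ actually forces $\sum_t E[\hat\varepsilon_{it}\,a_t(X_i)]=0$) is real and not made explicit in the paper, and your resolution via the inherited transformed structure of $a_t$ is the correct one.
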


\begin{proof}

The proof is similar to the proof of Lemma 2.1 in \citet{donald_et_al_2003} and is thus omitted.

\end{proof}

\begin{lemma}\label{lemma_PP}
If Assumptions of Theorem~\ref{asy_distr_t_r_n_hc} hold, then $\| \hat{P}'\hat{P}/(nT)- I_{k_n}\| = O_p(\zeta(k_n) \sqrt{k_n/n})$, $\| \hat{W}'\hat{W}/(nT) - I_{m_n}\| = O_p(\zeta(m_n) \sqrt{m_n/n})$ and $\| \hat{Z}'\hat{Z}/(nT) - I_{r_n}\| = O_p(\zeta(r_n) \sqrt{r_n/n})$. Moreover, $\| \hat{W}'\hat{Z}/(nT) \| = O_p(\zeta(k_n) \sqrt{k_n/n})$.

\end{lemma}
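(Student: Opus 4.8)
The plan is to reduce all four statements to a single second-moment bound for the Gram matrix of the transformed full basis $\hat P$, following the standard series-estimation argument (cf.\ \citet{donald_et_al_2003}), and to read the other three off as corollaries. First I would invoke Lemma~\ref{series_normalization} to normalize the basis so that $E[\hat P_i'\hat P_i/T] = I_{k_n}$, where $\hat P_i$ denotes the block of rows of $\hat P$ belonging to unit $i$. Partitioning $\hat P_i = [\hat W_i\ \hat Z_i]$, the diagonal blocks of this identity give $E[\hat W_i'\hat W_i/T] = I_{m_n}$ and $E[\hat Z_i'\hat Z_i/T] = I_{r_n}$, while the off-diagonal block gives $E[\hat W_i'\hat Z_i/T] = 0$. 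Assumption~\ref{series_norms_eigenvalues}(a), combined with the triangle inequality and the fact that $T$ is fixed, yields a deterministic bound $\sup_{x\in\mathcal X}\|\hat P^{k_n}(x)\| \le C\zeta(k_n)$, so that $\frac{1}{T}\sum_{t=1}^{T}\|\hat P_{it}\|^2 \le C\zeta(k_n)^2$ almost surely. Finally, writing $\hat P'\hat P/(nT) = n^{-1}\sum_{i=1}^{n}Y_i$ with $Y_i := \hat P_i'\hat P_i/T$, the summands $Y_i$ are i.i.d.\ across $i$ (within-unit dependence across $t$ is immaterial) with mean $I_{k_n}$.

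For the first claim, Markov's inequality reduces matters to bounding $E[\|\hat P'\hat P/(nT) - I_{k_n}\|^2] = n^{-1}E[\|Y_1 - I_{k_n}\|^2] \le n^{-1}E[\|Y_1\|^2] = n^{-1}E[tr(Y_1^2)]$. Since $Y_1$ is positive semidefinite, $tr(Y_1^2) \le \lambda_{\max}(Y_1)\,tr(Y_1)$; because $\hat P_1'\hat P_1$ and $\hat P_1\hat P_1'$ have the same nonzero eigenvalues, $\lambda_{\max}(Y_1) = T^{-1}\lambda_{\max}(\hat P_1\hat P_1') \le T^{-1}tr(\hat P_1'\hat P_1) = T^{-1}\sum_{t}\|\hat P_{1t}\|^2 \le C\zeta(k_n)^2$ almost surely, whereas $E[tr(Y_1)] = tr(I_{k_n}) = k_n$. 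Hence $E[\|\hat P'\hat P/(nT) - I_{k_n}\|^2] \le C\zeta(k_n)^2 k_n/n$, which gives $\|\hat P'\hat P/(nT) - I_{k_n}\| = O_p(\zeta(k_n)\sqrt{k_n/n})$. The bounds for $\|\hat W'\hat W/(nT) - I_{m_n}\|$ and $\|\hat Z'\hat Z/(nT) - I_{r_n}\|$ follow by the identical computation with $\hat W_i$, resp.\ $\hat Z_i$, in place of $\hat P_i$ and $\zeta(m_n)$, resp.\ $\zeta(r_n)$, in place of $\zeta(k_n)$.

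For the cross term, observe that $\hat W'\hat Z/(nT)$ is the off-diagonal block of $\hat P'\hat P/(nT)$, whose population counterpart is the corresponding (zero) block of $I_{k_n}$. Since the Frobenius norm of a submatrix never exceeds that of the full matrix, $\|\hat W'\hat Z/(nT)\| = \|\hat W'\hat Z/(nT) - 0\| \le \|\hat P'\hat P/(nT) - I_{k_n}\| = O_p(\zeta(k_n)\sqrt{k_n/n})$.

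The only points requiring a little care — neither a genuine obstacle — are that the normalization of Lemma~\ref{series_normalization} must be understood as applying to the transformed design (the within or first-difference transformation is linear, does not depend on $n$, and acts within blocks of fixed size, so both the identity normalization and the magnitude bound $\zeta(\cdot)$ are preserved up to universal constants; if one prefers to normalize the untransformed moment matrix, one replaces $I_{k_n}$ throughout by $\Lambda_{k_n} := E[\hat P_i'\hat P_i/T]$, whose eigenvalues are bounded away from $0$ and $\infty$ by Assumption~\ref{series_norms_eigenvalues}(b), and pre- and post-multiplies by $\Lambda_{k_n}^{-1/2}$ to reduce to the normalized case), and that $tr(Y_1)$ must be kept inside the expectation: it is the identity $E[tr(Y_1)] = k_n$, rather than a deterministic bound on the trace, that produces the sharp factor $\sqrt{k_n/n}$ instead of the cruder $\zeta(k_n)/\sqrt n$.
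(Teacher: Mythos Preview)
Your proposal is correct and follows the same route as the paper. The paper's own proof simply cites Lemma~A.1 of \citet{baltagi_li_2002} for the three diagonal bounds and then reads the cross-term bound off the block decomposition of $\hat P'\hat P/(nT) - I_{k_n}$, exactly as you do; your argument merely supplies the standard second-moment / Markov computation that underlies the cited lemma.
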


\begin{proof}

By Lemma A.1 in \citet{baltagi_li_2002}, $\| \hat{P}'\hat{P}/(nT)- I_{k_n}\| = O_p(\zeta(k_n) \sqrt{k_n/n})$. Similarly, it can be shown that $\| \hat{W}'\hat{W}/(nT) - I_{m_n}\| = O_p(\zeta(m_n) \sqrt{m_n/n})$ and $\| \hat{Z}'\hat{Z}/(nT) - I_{r_n}\| = O_p(\zeta(r_n) \sqrt{r_n/n})$. Moreover, note that
\[
\hat{P}'\hat{P}/(nT) - I_{k_n} = \begin{pmatrix}
\hat{W}'\hat{W}/(nT) & \hat{W}'\hat{Z}/(nT) \\
\hat{Z}'\hat{W}/(nT) & \hat{Z}'\hat{Z}/(nT)
\end{pmatrix} - 
\begin{pmatrix}
I_{m_n} & \textbf{0}_{m_n \times r_n} \\
\textbf{0}_{r_n \times m_n} & I_{r_n}
\end{pmatrix}
\]

Hence, $\| \hat{W}'\hat{Z}/(nT) \| = O_p(\zeta(k_n) \sqrt{k_n/n})$.

\end{proof}

\begin{lemma}\label{omegas_T_hc}
In the homoskedastic case, let $\tilde{\Omega} = n^{-1} \sum_{i=1}^{n}{\hat{Z}_i' \tilde{\Sigma}_T \hat{Z}_i}$, $\breve{\Omega} = n^{-1} \sum_{i=1}^{n}{\hat{Z}_i' \Sigma_T \hat{Z}_i}$.

In the heteroskedastic case, let $\tilde{\Omega} = n^{-1} \sum_{i=1}^{n}{\hat{Z}_i' \tilde{e}_i \tilde{e}_i' \hat{Z}_i}$, $\breve{\Omega} = n^{-1} \sum_{i=1}^{n}{\hat{Z}_i' \hat{\varepsilon}_i \hat{\varepsilon}_i' \hat{Z}_i}$, and $\bar{\Omega} = n^{-1} \sum_{i=1}^{n}{\hat{Z}_i' \Sigma_i \hat{Z}_i}$, where $\Sigma_i = E[\hat{\varepsilon}_i \hat{\varepsilon}_i' | X_i]$.

Let $\Omega = E[\hat{Z}_i \Sigma_i \hat{Z}_i']$.

 Suppose that Assumptions \ref{errors_unified}(ii), \ref{series_norms_eigenvalues}, and \ref{series_approx} are satisfied. Then
\begin{align*}
\| \tilde{\Omega} - \breve{\Omega} \| &=  O_p \left( \zeta(r_n)^2 (m_n/n + m_n^{-2\alpha}) \right) \\
\| \breve{\Omega} - \bar{\Omega} \| &= O_p(\zeta(r_n) r_n^{1/2}/n^{1/2}) \\
\| \bar{\Omega} - \Omega \| &= O_p(\zeta(r_n) r_n^{1/2}/n^{1/2})
\end{align*}

If Assumption~\ref{errors_unified}(i) is also satisfied then $1/C \leq \lambda_{\min}(\Omega) \leq \lambda_{\max}(\Omega) \leq C$, and if $\zeta(r_n)^2 (m_n/n + m_n^{-2\alpha}) \to 0$ and $\zeta(r_n) r_n^{1/2}/n^{1/2} \to 0$, then w.p.a. 1, $1/C \leq \lambda_{\min}(\tilde{\Omega}) \leq \lambda_{\max}(\tilde{\Omega}) \leq C$ and $1/C \leq \lambda_{\min}(\bar{\Omega}) \leq \lambda_{\max}(\bar{\Omega}) \leq C$.

Moreover, if $\| \hat{\Omega} - \tilde{\Omega} \| = o_p(1)$, then $1/C \leq \lambda_{\min}(\hat{\Omega}) \leq \lambda_{\max}(\hat{\Omega}) \leq C$.

\end{lemma}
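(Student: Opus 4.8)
The plan is to pass from $\tilde\Omega$ to $\Omega$ along the chain $\tilde\Omega\to\breve\Omega\to\bar\Omega\to\Omega$, controlling each link by a different device, and then to deduce the eigenvalue bounds for $\tilde\Omega$, $\bar\Omega$ and $\hat\Omega$ from those of $\Omega$ by Weyl's inequality. Throughout I would use the normalization of Lemma~\ref{series_normalization}, so that $\sup_x\|\hat Z^{r}(x)\|\le\zeta(r)$ and $E[\hat Z_i'\hat Z_i]$ has eigenvalues bounded above and away from zero; hence $\|\hat Z_i\|=O(\zeta(r_n))$, $n^{-1}\sum_i\|\hat Z_i\|^2=O_p(r_n)$, $E[\|\hat Z_i\|^4]\le\zeta(r_n)^2 E[\|\hat Z_i\|^2]=O(\zeta(r_n)^2 r_n)$, and $\|\hat Z'\hat Z/(nT)\|=O_p(1)$ by Lemma~\ref{lemma_PP}. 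Under $H_0$ the restricted residuals satisfy $\tilde e=M_{\hat W}(\hat R+\hat\varepsilon)$, hence $\tilde e_i=\hat\varepsilon_i-\delta_i$ with $\delta:=P_{\hat W}\hat\varepsilon-M_{\hat W}\hat R$, and $n^{-1}\sum_i\|\delta_i\|^2=O_p(m_n/n+m_n^{-2\alpha})$ by Lemma~\ref{series_f_rates}; in the homoskedastic case the same identities apply to $\tilde\Sigma_T-\Sigma_T$.

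For the first link I would use $\tilde e_i\tilde e_i'-\hat\varepsilon_i\hat\varepsilon_i'=-\hat\varepsilon_i\delta_i'-\delta_i\hat\varepsilon_i'+\delta_i\delta_i'$. The quadratic term is harmless: $\|n^{-1}\sum_i\hat Z_i'\delta_i\delta_i'\hat Z_i\|\le n^{-1}\sum_i\|\hat Z_i'\delta_i\|^2\le\zeta(r_n)^2 n^{-1}\sum_i\|\delta_i\|^2=O_p(\zeta(r_n)^2(m_n/n+m_n^{-2\alpha}))$. The cross term is where the projection structure is essential, since a bare Cauchy--Schwarz bound only gives $\zeta(r_n)^2(m_n/n+m_n^{-2\alpha})^{1/2}$, which need not vanish under the hypotheses. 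I would split $\delta=P_{\hat W}\hat\varepsilon-M_{\hat W}\hat R$. The $M_{\hat W}\hat R$ part is, conditionally on $X$, an average of independent mean-zero matrices (because $M_{\hat W}$, $\hat R$ and $\hat Z_i$ are functions of $X$ and $E[\hat\varepsilon_i\mid X]=0$) with conditional second moment of order $\zeta(r_n)^4\|\hat R\|^2/n^2=O(\zeta(r_n)^4 m_n^{-2\alpha}/n)$, hence of order $\zeta(r_n)^2 m_n^{-\alpha}/n^{1/2}$, which is absorbed via $m_n^{-\alpha}n^{-1/2}\le\tfrac{1}{2}(m_n^{-2\alpha}+n^{-1})$. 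The $P_{\hat W}\hat\varepsilon$ part is a quadratic form in $\hat\varepsilon$; its conditional mean equals $n^{-1}\sum_i\hat Z_i'\Sigma_i(P_{\hat W})_{ii}\hat Z_i$, with norm at most $C\zeta(r_n)^2 n^{-1}\sum_i\|(P_{\hat W})_{ii}\|\le C\zeta(r_n)^2\,tr(P_{\hat W})/n=O_p(\zeta(r_n)^2 m_n/n)$ because $P_{\hat W}$ is a projection, so its diagonal blocks are positive semidefinite with $\sum_i tr((P_{\hat W})_{ii})=tr(P_{\hat W})=m_n$, and its fluctuation about this conditional mean is of smaller order by a fourth-moment bound that uses boundedness of $E[\varepsilon_{it}^4\mid X_i]$ together with $\sum_i\|(P_{\hat W})_{ii}\|^2\le m_n$ and $\sum_{i,l}\|(P_{\hat W})_{il}\|^2=tr(P_{\hat W})=m_n$. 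Adding the pieces gives $\|\tilde\Omega-\breve\Omega\|=O_p(\zeta(r_n)^2(m_n/n+m_n^{-2\alpha}))$; the homoskedastic case is analogous, the outer factor being handled via $\|\hat Z'\hat Z/(nT)\|=O_p(1)$.

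The remaining two links are laws of large numbers. Writing $\breve\Omega-\bar\Omega=n^{-1}\sum_i\hat Z_i'(\hat\varepsilon_i\hat\varepsilon_i'-\Sigma_i)\hat Z_i$, conditionally on $X$ this is an average of independent mean-zero matrices, so $E[\|\breve\Omega-\bar\Omega\|^2]\le C n^{-1}E[\|\hat Z_i\|^4]=O(\zeta(r_n)^2 r_n/n)$ by boundedness of $E[\varepsilon_{it}^4\mid X_i]$, and Markov's inequality yields $\|\breve\Omega-\bar\Omega\|=O_p(\zeta(r_n)r_n^{1/2}/n^{1/2})$. Likewise $\bar\Omega-\Omega$ is an i.i.d.\ average of mean-zero terms and Chebyshev with $E[\|\hat Z_i'\Sigma_i\hat Z_i\|^2]\le C E[\|\hat Z_i\|^4]=O(\zeta(r_n)^2 r_n)$ gives the same rate; the homoskedastic versions are identical with $\hat\varepsilon_i\hat\varepsilon_i'$ and $\Sigma_i$ replaced by their obvious counterparts.

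Finally, $\lambda_{\max}(\Omega)=\sup_{\|v\|=1}E[(\hat Z_i v)'\Sigma_i(\hat Z_i v)]\le\big(\sup_i\lambda_{\max}(\Sigma_i)\big)\,\lambda_{\max}(E[\hat Z_i'\hat Z_i])\le C$, the first factor bounded by Assumption~\ref{errors_unified} ($\Sigma_i$ has bounded operator norm) and the second by the normalization; the lower bound holds because $\hat Z_i v$ lies in the range of the within (or first-difference) transformation, on which $\Sigma_i$ is bounded away from singularity, so $\lambda_{\min}(\Omega)\ge c\,\lambda_{\min}(E[\hat Z_i'\hat Z_i])>0$. Combining the three links, $\|\tilde\Omega-\Omega\|\le\|\tilde\Omega-\breve\Omega\|+\|\breve\Omega-\bar\Omega\|+\|\bar\Omega-\Omega\|=o_p(1)$ under the two stated conditions, so by Weyl's inequality $\lambda_{\min}(\tilde\Omega)$ and $\lambda_{\max}(\tilde\Omega)$ differ from those of $\Omega$ by $o_p(1)$ and the claimed w.p.a. 1 bounds follow after enlarging $C$; the same argument with $\|\bar\Omega-\Omega\|=o_p(1)$, respectively $\|\hat\Omega-\tilde\Omega\|=o_p(1)$, delivers them for $\bar\Omega$ and $\hat\Omega$. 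The main obstacle is the cross term in the first link: obtaining the sharp rate $\zeta(r_n)^2(m_n/n+m_n^{-2\alpha})$ rather than its square root forces one to exploit the $X$-measurability of $M_{\hat W}$ and the trace identities for the projection $P_{\hat W}$, and to carry a fourth-moment computation through --- that is where the real care is needed.
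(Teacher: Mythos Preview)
Your proposal is correct and follows the same chain $\tilde\Omega\to\breve\Omega\to\bar\Omega\to\Omega$ as the paper, but the execution differs in an interesting way. The paper's proof is only a sketch: it rewrites each matrix as a finite double sum $\sum_{t=1}^{T}\sum_{s=1}^{T}(\cdots)$ over time pairs and then, since $T$ is fixed, invokes the cross-sectional Lemmas~A.5 and~A.6 of \citet{korolev_2019} term by term, without reproducing any of the actual bounds. You instead work directly in the panel setting and supply the arguments from scratch. In particular, your treatment of the cross term $n^{-1}\sum_i\hat Z_i'\hat\varepsilon_i\delta_i'\hat Z_i$ in the first link --- splitting $\delta=P_{\hat W}\hat\varepsilon-M_{\hat W}\hat R$, using the trace identity $\sum_i tr((P_{\hat W})_{ii})=m_n$ to control the conditional mean of the $P_{\hat W}\hat\varepsilon$ piece, and a conditional second-moment bound for the $M_{\hat W}\hat R$ piece --- is exactly the technical content that the paper outsources to the cited reference. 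Your approach is self-contained and makes the projection structure explicit; the paper's is shorter but only because it defers the real work. Both arrive at the same rates, and your eigenvalue deduction via Weyl's inequality is the standard (and correct) way to finish.
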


\begin{proof}[Sketch of the Proof of Lemma~\ref{omegas_T_hc}]

Note that in the homoskedastic case, all matrices involved in the lemma can be written as
\begin{align*}
\hat{\Omega} &= n^{-1} \sum_{i=1}^{n}{\tilde{Z}_i' \tilde{\Sigma}_T \tilde{Z}_i} = \sum_{t=1}^{T}{\sum_{s=1}^{T}{\tilde{\sigma}_{ts} n^{-1} \sum_{i=1}^{n}{ \tilde{Z}_{it} \tilde{Z}_{is}'}}} \\
\tilde{\Omega} &= n^{-1} \sum_{i=1}^{n}{\hat{Z}_i' \tilde{\Sigma}_T \hat{Z}_i} = \sum_{t=1}^{T}{\sum_{s=1}^{T}{\tilde{\sigma}_{ts} n^{-1} \sum_{i=1}^{n}{ \hat{Z}_{it} \hat{Z}_{is}'}}} \\
\breve{\Omega} &= n^{-1} \sum_{i=1}^{n}{\hat{Z}_i' \Sigma_T \hat{Z}_i} =  \sum_{t=1}^{T}{\sum_{s=1}^{T}{\sigma_{ts} n^{-1} \sum_{i=1}^{n}{ \hat{Z}_{it} \hat{Z}_{is}'}}} 
\end{align*}

In the heteroskedastic case, all matrices involved in the lemma can be written as
\begin{align*}
\hat{\Omega} &= n^{-1} \sum_{i=1}^{n}{\tilde{Z}_i' \tilde{e}_i \tilde{e}_i' \tilde{Z}_i} = \sum_{t=1}^{T}{\sum_{s=1}^{T}{n^{-1} \sum_{i=1}^{n}{\tilde{e}_{it} \tilde{e}_{is} \tilde{Z}_{it} \tilde{Z}_{is}'}}} \\
\tilde{\Omega} &= n^{-1} \sum_{i=1}^{n}{\hat{Z}_i' \tilde{e}_i \tilde{e}_i' \hat{Z}_i} = \sum_{t=1}^{T}{\sum_{s=1}^{T}{n^{-1} \sum_{i=1}^{n}{\tilde{e}_{it} \tilde{e}_{is} \hat{Z}_{it} \hat{Z}_{is}'}}} \\
\breve{\Omega} &= n^{-1} \sum_{i=1}^{n}{\hat{Z}_i' \hat{\varepsilon}_i \hat{\varepsilon}_i' \hat{Z}_i} = \sum_{t=1}^{T}{\sum_{s=1}^{T}{n^{-1} \sum_{i=1}^{n}{\hat{\varepsilon}_{it} \hat{\varepsilon}_{is} \hat{Z}_{it} \hat{Z}_{is}'}}} \\
\bar{\Omega} &= n^{-1} \sum_{i=1}^{n}{\hat{Z}_i' \Sigma_i \hat{Z}_i} = \sum_{t=1}^{T}{\sum_{s=1}^{T}{n^{-1} \sum_{i=1}^{n}{\sigma_{ts} \hat{Z}_{it} \hat{Z}_{is}'}}}
\end{align*}

Because $T$ is finite, Lemma~\ref{omegas_T_hc} can be proved by applying the results from Lemmas A.5 and A.6 in \citet{korolev_2019} element by element.

\end{proof}

\begin{lemma}\label{diff_r_n_small}
If Assumptions of Theorem~\ref{asy_distr_t_r_n_hc} hold, then
\[
\frac{n (n^{-1} \hat{\varepsilon}' \hat{Z}) \hat{\Omega}^{-1} (n^{-1} \hat{Z}' \hat{\varepsilon}) - n (n^{-1} \hat{\varepsilon}' \hat{Z}) \Omega^{-1} (n^{-1} \hat{Z}' \hat{\varepsilon})}{\sqrt{r_n}} \overset{p}{\to} 0
\]
\end{lemma}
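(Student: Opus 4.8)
The plan is to show that replacing the estimated weight matrix $\hat{\Omega}$ by the population matrix $\Omega$ in the quadratic form $n(n^{-1}\hat{\varepsilon}'\hat{Z})(\cdot)^{-1}(n^{-1}\hat{Z}'\hat{\varepsilon})$ perturbs it by $o_p(\sqrt{r_n})$, which after dividing by $\sqrt{r_n}$ yields the claim.

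\emph{Step 1: a rate for $\|\hat{\Omega}-\Omega\|$.} By the triangle inequality, $\|\hat{\Omega}-\Omega\|\le\|\hat{\Omega}-\tilde{\Omega}\|+\|\tilde{\Omega}-\breve{\Omega}\|+\|\breve{\Omega}-\bar{\Omega}\|+\|\bar{\Omega}-\Omega\|$ (the $\bar{\Omega}$ term being absent in the homoskedastic case). The first term is $o_p(r_n^{-1/2})$ by the hypothesis of Theorem~\ref{asy_distr_t_r_n_hc}. By Lemma~\ref{omegas_T_hc}, $\|\tilde{\Omega}-\breve{\Omega}\|=O_p(\zeta(r_n)^2(m_n/n+m_n^{-2\alpha}))$, which is $o_p(r_n^{-1/2})$ by rate condition~\ref{rate_cond_r_1_hc}, and $\|\breve{\Omega}-\bar{\Omega}\|+\|\bar{\Omega}-\Omega\|=O_p(\zeta(r_n)r_n^{1/2}/n^{1/2})$, which is $o_p(r_n^{-1/2})$ by rate condition~\ref{rate_cond_r_2_hc}. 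Hence $\|\hat{\Omega}-\Omega\|=o_p(r_n^{-1/2})$. Lemma~\ref{omegas_T_hc} also gives $1/C\le\lambda_{\min}(\Omega)\le\lambda_{\max}(\Omega)\le C$ and, since $\|\hat{\Omega}-\tilde{\Omega}\|=o_p(1)$, w.p.a.\ 1 also $1/C\le\lambda_{\min}(\hat{\Omega})\le\lambda_{\max}(\hat{\Omega})\le C$, so $\|\Omega^{-1}\|=O(1)$ and $\|\hat{\Omega}^{-1}\|=O_p(1)$.

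\emph{Step 2: bound the quadratic form.} Writing $\hat{\Omega}^{-1}-\Omega^{-1}=-\Omega^{-1}(\hat{\Omega}-\Omega)\Omega^{-1}+\Omega^{-1}(\hat{\Omega}-\Omega)\Omega^{-1}(\hat{\Omega}-\Omega)\hat{\Omega}^{-1}$ and using $\|AB\|\le\|A\|\|B\|$,
\[
\Big| n(n^{-1}\hat{\varepsilon}'\hat{Z})(\hat{\Omega}^{-1}-\Omega^{-1})(n^{-1}\hat{Z}'\hat{\varepsilon})\Big|\le n\|n^{-1}\hat{Z}'\hat{\varepsilon}\|^2\,\|\Omega^{-1}\|^2\big(\|\hat{\Omega}-\Omega\|+\|\hat{\Omega}^{-1}\|\,\|\hat{\Omega}-\Omega\|^2\big).
\]
From Step 2 of the proof of Theorem~\ref{asy_distr_t_r_n_hc} (equivalently, directly from Markov's inequality applied to $E[(n^{-1}\hat{\varepsilon}'\hat{Z})\Omega^{-1}(n^{-1}\hat{Z}'\hat{\varepsilon})]=r_n/n$ together with the lower bound on $\lambda_{\min}(\Omega)$) one has $\|n^{-1}\hat{Z}'\hat{\varepsilon}\|^2=O_p(r_n/n)$, hence $n\|n^{-1}\hat{Z}'\hat{\varepsilon}\|^2=O_p(r_n)$. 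Combining with $\|\Omega^{-1}\|=O(1)$, $\|\hat{\Omega}^{-1}\|=O_p(1)$, and $\|\hat{\Omega}-\Omega\|=o_p(r_n^{-1/2})$, the right-hand side is $O_p(r_n)\cdot o_p(r_n^{-1/2})=o_p(r_n^{1/2})$. Dividing by $\sqrt{r_n}$ gives $n(n^{-1}\hat{\varepsilon}'\hat{Z})(\hat{\Omega}^{-1}-\Omega^{-1})(n^{-1}\hat{Z}'\hat{\varepsilon})/\sqrt{r_n}\overset{p}{\to}0$, which is the asserted statement.

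\emph{Main obstacle.} The argument is largely bookkeeping: the only genuine content is checking that rate conditions~\ref{rate_cond_r_1_hc} and~\ref{rate_cond_r_2_hc} are exactly what is needed to push each deterministic error term supplied by Lemma~\ref{omegas_T_hc} below $r_n^{-1/2}$ (with the extra $\bar{\Omega}$ step appearing only in the heteroskedastic case), and then making sure the stochastic factor multiplying $\|\hat{\Omega}-\Omega\|$ is exactly $O_p(r_n)$ rather than something larger. The latter is where the normalization $E[\bar{P}^{k}(X_{it})\bar{P}^{k}(X_{it})']=I_k$ from Lemma~\ref{series_normalization} pays off, since it lets one control the whole $r_n\times r_n$ object through the single scalar quadratic form whose mean is exactly $r_n/n$, avoiding any delicate spectral analysis of $n^{-1}\hat{Z}'\hat{\varepsilon}(n^{-1}\hat{Z}'\hat{\varepsilon})'$.
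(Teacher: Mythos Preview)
Your proof is correct and follows essentially the same route as the paper: both arguments use the second-order expansion of $\hat{\Omega}^{-1}-\Omega^{-1}$, bound the resulting quadratic form by $n\|\Omega^{-1}(n^{-1}\hat{Z}'\hat{\varepsilon})\|^2(\|\hat{\Omega}-\Omega\|+C\|\hat{\Omega}-\Omega\|^2)$, invoke $\|n^{-1}\hat{Z}'\hat{\varepsilon}\|=O_p(\sqrt{r_n/n})$ from the Markov argument in Step~2 of Theorem~\ref{asy_distr_t_r_n_hc}, and chain Lemma~\ref{omegas_T_hc} with rate conditions~\ref{rate_cond_r_1_hc}--\ref{rate_cond_r_2_hc} and the high-level assumption $\|\hat{\Omega}-\tilde{\Omega}\|=o_p(r_n^{-1/2})$ to get $\|\hat{\Omega}-\Omega\|=o_p(r_n^{-1/2})$. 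The only cosmetic point is that when you write $\|\Omega^{-1}\|=O(1)$ you are implicitly using the operator norm (consistent with your eigenvalue bounds), not the Frobenius norm $\|\cdot\|$ defined in the paper; the argument is unaffected since what you actually use is $\|\Omega^{-1}v\|\le C\|v\|$.
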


\begin{proof}

Note that $\hat{\varepsilon}' \hat{Z} (n \hat{\Omega})^{-1} \hat{Z}' \hat{\varepsilon} = n (n^{-1} \hat{\varepsilon}' \hat{Z}) \hat{\Omega}^{-1} (n^{-1} \hat{Z}' \hat{\varepsilon})$. Then
\begin{align*}
&\Bigg{|} \frac{n (n^{-1} \hat{\varepsilon}' \hat{Z}) \hat{\Omega}^{-1} (n^{-1} \hat{Z}' \hat{\varepsilon})}{\sqrt{2 r_n}} - \frac{n (n^{-1} \hat{\varepsilon}' \hat{Z}) \Omega^{-1} (n^{-1} \hat{Z}' \hat{\varepsilon})}{\sqrt{2r_n}} \Bigg{|} = \Bigg{|} \frac{n (n^{-1} \hat{\varepsilon}' \hat{Z}) (\hat{\Omega}^{-1} - \Omega^{-1}) (n^{-1} \hat{Z}' \hat{\varepsilon})}{\sqrt{2r_n}} \Bigg{|} \\
&\leq \frac{n \| \Omega^{-1} n^{-1} \hat{Z}' \hat{\varepsilon} \|^2 (\| \hat{\Omega} - \Omega \| + C \| \hat{\Omega} - \Omega \|^2)}{\sqrt{2r_n}}
\end{align*}

As shown above, $\| \Omega^{-1} (n^{-1} \hat{Z}' \hat{\varepsilon}) \| = O_p(\sqrt{r_n/n})$.

Then
\begin{align*}
\frac{n \| \Omega^{-1} n^{-1} \hat{Z}' \hat{\varepsilon} \|^2 (\| \tilde{\Omega} - \Omega \| + C \| \tilde{\Omega} - \Omega \|^2)}{\sqrt{2r_n}} = \frac{n O_p(r_n/n) o_p(1/\sqrt{r_n})}{\sqrt{2r_n}} = \frac{o_p(\sqrt{r_n})}{\sqrt{2r_n}} = o_p(1),
\end{align*}
provided that $\|\hat{\Omega} - \Omega \| = o_p(1/\sqrt{r_n})$, which holds under the rate conditions~\ref{rate_cond_r_1_hc}--\ref{rate_cond_r_5_hc} and the high level assumption that $\| \hat{\Omega} - \tilde{\Omega} \| = o_p(r_n^{-1/2})$.

\end{proof}

\clearpage

\bibliography{literature_panel_spec_testing}

\end{document}